\documentclass[12pt]{article}
\usepackage[T1]{fontenc}
\usepackage[tracking=true]{microtype}
\usepackage{amsmath,amssymb,amsfonts,amsthm}
\usepackage{bm}
\usepackage{mathrsfs}
\usepackage{graphicx}
\usepackage{natbib}
\usepackage[hypertexnames=false]{hyperref}
\usepackage{xcolor}
\usepackage{enumitem}
\usepackage{algorithm}
\usepackage{algorithmic}
\usepackage{tikz}
\usetikzlibrary{arrows.meta,positioning,calc,decorations.pathreplacing}
\usepackage{booktabs}
\usepackage{multirow}
\usepackage{adjustbox}

\newtheorem{theorem}{Theorem}[section]

\newtheorem{lemma}[theorem]{Lemma}
\newtheorem{corollary}[theorem]{Corollary}

\theoremstyle{definition}

\newtheorem{assumption}[theorem]{Assumption}

\newtheorem{remark}[theorem]{Remark}

\newcommand{\E}{\mathbb{E}}
\newcommand{\Var}{\mathrm{Var}}
\newcommand{\Cov}{\mathrm{Cov}}
\newcommand{\Prob}{\mathbb{P}}

\title{Two-Sample Homogeneity Test via Entropic Optimal Transport}
\author{Yiming Ma, Hang Liu, and Weiwei Zhuang \\
Department of Statistics and Finance, School of Management,\\
University of Science and Technology of China \\
\texttt{\{mayiming,hliu01,weizh\}@ustc.edu.cn}
}
\date{}

\begin{document}

\maketitle

\begin{abstract}

This paper proposes a two-sample homogeneity test based on entropic optimal transport (EOT) maps from a common reference distribution---the uniform law on the unit ball. The test statistic is the squared $L^2$-distance between the two empirical EOT maps. For fixed entropic regularization parameter, we prove that the population map discrepancy is identifiable, derive a functional central limit theorem for the empirical map difference under the null, and establish the Gaussian quadratic-form null limit. We also prove consistency against fixed alternatives and characterize local asymptotic power under
contiguous alternatives. A weighted multiplier bootstrap is proposed to calibrate the non-pivotal null distribution, and its validity is established. Extensive simulations demonstrate that the proposed EOT-map test has reliable finite-sample size control and exhibits competitive power compared with other existing methods. The method is particularly powerful for location alternatives and, beyond a single scalar discrepancy, it provides additional diagnostic information on how the two distributions differ. Finally, a real data application concludes the paper.

\end{abstract}

\noindent\textbf{Keywords:} Two-sample homogeneity test, Entropic optimal transport, Weighted bootstrap, Functional central limit theorem

\section{Introduction}\label{sec:intro}

\subsection{Literature review on two-sample homogeneity testing} Two-sample homogeneity testing is a fundamental problem in statistics. Given
two independent samples from unknown probability distributions \(\mathrm P\) and
\(\mathrm Q\) on \(\mathbb R^d\) (\(d \ge 1\)), the goal is to test
\[
H_0:\mathrm P=\mathrm Q
\qquad\text{against}\qquad
H_1:\mathrm P\neq \mathrm Q .
\]
In the univariate setting, empirical distribution functions, quantile functions,
and rank-based methods provide canonical ways to represent and compare
distributions. In multivariate settings, two-sample testing is still possible
without an ordering, but the absence of a canonical order makes it less clear
how to construct distributional representations that play the same role as
univariate distribution or quantile functions. This has motivated a broad
literature on nonparametric multivariate testing and, in particular, on methods
that compare distributions through distances, kernels, and graphs.

Representative approaches include distance-based methods, such as energy
distances and related Cramér-type statistics
\citep{baringhaus2004new,szekely2013energy}, kernel methods based on
reproducing kernel Hilbert space embeddings and maximum mean discrepancy
\citep{gretton2012kernel}, and graph-based procedures based on nearest
neighbors, matchings, or similarity graphs
\citep{friedman1979multivariate,schilling1986multivariate,henze1988multivariate,
	rosenbaum2005exact,chen2017new}.

These approaches have led to powerful and broadly applicable procedures.
Most of them, however, summarize the difference between two distributions by a
scalar discrepancy, such as a distance, divergence, kernel statistic, rank
statistic, or graph statistic. Such summaries are natural for testing, but they
do not explicitly retain a map-level representation of how each distribution is
organized relative to a common reference domain. This motivates the map-based
perspective developed in this paper: we compare two distributions through their
entropic transport maps from the same reference measure.

Optimal transport (OT) provides another geometric route to distributional comparison
by incorporating the geometry of the sample space
\citep{villani2003topics,villani2009optimal,santambrogio2015optimal,panaretos2020invitation}, leading to Wasserstein-based tests, transport-rank constructions, and distribution-free rank-based procedures \citep{ramdas2017wasserstein,hallin2021distribution,hallin2021wasserstein,HL2024,deb2023multivariate}. Among them, Wasserstein-based
methods have been used in two-sample and goodness-of-fit testing. However, inference
with empirical Wasserstein distances can be delicate in multivariate settings,
owing to nonstandard asymptotics and computational costs. Entropic
regularization offers a smoother and computationally more tractable alternative:
entropic optimal transport can be computed by Sinkhorn-type algorithms
\citep{cuturi2013sinkhorn,peyre2019computational,ma2025MSD,mena2019statistical}, and, for fixed
regularization parameter \(\varepsilon>0\), its potentials and maps admit differentiability-based limit
theory \citep{delbarrio2024central,goldfeld2024limit}.

\subsection{Our contributions}

We exploit the regularized structure of entropic optimal transport to construct
a common-reference map-based two-sample test. Let \(\mathrm U_d\) denote the
uniform distribution on the unit ball \(\mathbb B_d\). For a probability measure
\(\mathrm P\), let \(T_{\varepsilon,\mathrm P}\) be the entropic transport map
from \(\mathrm U_d\) to \(\mathrm P\), defined as the barycentric projection of
the corresponding entropic optimal coupling; see~\eqref{eq:eot-map-barycentric}
for the precise definition. The proposed method compares two distributions
\(\mathrm P\) and \(\mathrm Q\) through the squared
\(L^2(\mathbb B_d,\mathrm U_d;\mathbb R^d)\)-distance between
\(T_{\varepsilon,\mathrm P}\) and \(T_{\varepsilon,\mathrm Q}\).

Our main contributions are as follows:
\begin{itemize}
	\item We propose a two-sample homogeneity test based on common-reference
	entropic OT maps. The resulting statistic is a scalar \(L^2\)-map
	discrepancy, while the underlying vector-valued map difference retains
	directional and geometric information about distributional changes.

	\item We prove an identifiability result showing that, under compact-support
	and fixed-regularization assumptions, equality of the population EOT maps
	implies equality of the underlying distributions. Hence the proposed
	population discrepancy vanishes precisely under the homogeneity null.

	\item We establish the large-sample theory for the proposed statistic. Under
	the null hypothesis, the empirical map difference satisfies a functional
	central limit theorem and the statistic converges to a Gaussian
	quadratic-form limit. We also prove consistency under fixed alternatives and
	characterize local asymptotic power under contiguous alternatives.

	\item We develop a weighted multiplier bootstrap procedure to approximate the
	non-pivotal null distribution and prove its asymptotic validity.

	\item We demonstrate through simulations and a real-data application that the
	method has reliable finite-sample calibration, competitive power, and useful
	map-level diagnostic interpretability.
\end{itemize}

\subsection{Test procedure at a glance}

Figure~\ref{fig:map-discrepancy} summarizes the common-reference construction
underlying the proposed test. The procedure consists of three main steps. First,
the two empirical distributions are represented through entropic transport maps
from the same reference law \(\mathrm U_d\) on the unit ball \(\mathbb B_d\). Second, the two
empirical maps are evaluated on the common reference domain. Third, their
pointwise vector difference is aggregated by a squared
\(L^2(\mathbb B_d,\mathrm U_d;\mathbb R^d)\)-distance to form the test
statistic.

More explicitly, for each reference point \(\mathbf u\in\mathbb B_d\), the two mapped
locations \(T_{\varepsilon,\mathrm P}(\mathbf u)\) and
\(T_{\varepsilon,\mathrm Q}(\mathbf u)\) are compared through the vector
\(
T_{\varepsilon,\mathrm P}(\mathbf u)-T_{\varepsilon,\mathrm Q}(\mathbf u).
\)
The proposed statistic aggregates these pointwise discrepancies over the
reference domain. Thus the original comparison of two probability measures on
\(\mathbb R^d\) is converted into the comparison of two smooth vector-valued
functions on a shared domain, while still yielding a scalar test statistic for
formal inference.

The rest of the paper is organized as follows. Section~2 introduces the
entropic transport map from the common reference distribution and states the
standing assumptions. Section~3 formulates the proposed two-sample testing
procedure, establishes the identifiability of the map-based discrepancy, and
defines the empirical test statistic. Section~4 develops the asymptotic theory,
including the null limit distribution, consistency under fixed alternatives,
local asymptotic power, and the validity of the weighted multiplier bootstrap.
Section~5 presents numerical experiments on finite-sample calibration, power,
comparison with existing two-sample tests, computational cost, tuning stability,
and map-based visualization, followed by a real-data application to Citi Bike
trip records. The proofs of the main theoretical results and additional
operator-level covariance details are collected in the appendices.

\begin{figure}[!htbp]
	\centering
	\resizebox{0.98 \textwidth}{!}{%
		\begin{tikzpicture}[
			font=\small,
			>=Latex,
			panel/.style={
				draw,
				rounded corners=4pt,
				line width=0.55pt,
				inner sep=0pt
			},
			paneltitle/.style={
				font=\bfseries\small,
				align=center,
				text width=4.2cm
			},
			labelbox/.style={
				draw,
				rounded corners=3pt,
				minimum height=0.48cm,
				align=center,
				inner sep=3pt,
				fill=white,
				font=\small
			},
			bluebox/.style={
				labelbox,
				draw=blue!70!black
			},
			greenbox/.style={
				labelbox,
				draw=green!55!black
			},
			orangebox/.style={
				labelbox,
				draw=orange!80!black
			},
			plane/.style={
				draw,
				line width=0.45pt,
				fill=gray!8
			},
			dashedellipse/.style={
				draw=gray!65,
				dashed,
				line width=0.45pt
			},
			bluepoint/.style={
				circle,
				fill=blue!80!black,
				inner sep=1.15pt
			},
			greenpoint/.style={
				circle,
				fill=green!60!black,
				inner sep=1.15pt
			},
			blackpoint/.style={
				circle,
				fill=black,
				inner sep=1.15pt
			},
			mapblue/.style={
				->,
				blue!75!black,
				line width=0.58pt
			},
			mapgreen/.style={
				->,
				green!55!black,
				line width=0.58pt
			},
			diffred/.style={
				->,
				red!70!black,
				dashed,
				line width=0.55pt
			}
			]
			
			\definecolor{panelblue}{RGB}{40,90,180}
			\definecolor{panelgreen}{RGB}{40,150,70}
			\definecolor{panelpurple}{RGB}{145,65,150}
			
			\def\panelW{5.35}
			\def\panelH{7.35}
			\def\gap{0.28}
			
			\coordinate (A0) at (0,0);
			\coordinate (B0) at ({\panelW+\gap},0);
			\coordinate (C0) at ({2*(\panelW+\gap)},0);
			
			\node[
			panel,
			minimum width=\panelW cm,
			minimum height=\panelH cm,
			draw=panelblue
			] (panelA) at ($(A0)+(\panelW/2,-\panelH/2)$) {};
			
			\node[
			panel,
			minimum width=\panelW cm,
			minimum height=\panelH cm,
			draw=panelgreen
			] (panelB) at ($(B0)+(\panelW/2,-\panelH/2)$) {};
			
			\node[
			panel,
			minimum width=\panelW cm,
			minimum height=\panelH cm,
			draw=panelpurple
			] (panelC) at ($(C0)+(\panelW/2,-\panelH/2)$) {};
			
			\draw[panelblue, line width=0.45pt]
			($(A0)+(0,-0.70)$) -- ($(A0)+(\panelW,-0.70)$);
			\draw[panelgreen, line width=0.45pt]
			($(B0)+(0,-0.70)$) -- ($(B0)+(\panelW,-0.70)$);
			\draw[panelpurple, line width=0.45pt]
			($(C0)+(0,-0.70)$) -- ($(C0)+(\panelW,-0.70)$);
			
			\node[paneltitle] at ($(A0)+(\panelW/2,-0.35)$)
			{(a) Scalar test};
			
			\node[paneltitle] at ($(B0)+(\panelW/2,-0.35)$)
			{(b) Reference maps};
			
			\node[paneltitle] at ($(C0)+(\panelW/2,-0.35)$)
			{(c) Map discrepancy};
			
			
			\node[bluebox, minimum width=2.55cm] at ($(A0)+(\panelW/2,-1.25)$)
			{Sample from \(\mathrm P\)};
			
			\node[greenbox, minimum width=2.55cm] at ($(A0)+(\panelW/2,-4.70)$)
			{Sample from \(\mathrm Q\)};
			
			\foreach \x/\y in {
				1.35/-2.05, 1.68/-1.85, 1.98/-2.25, 2.28/-1.75,
				2.58/-2.05, 2.95/-1.88, 3.34/-1.75, 3.78/-2.05,
				1.58/-2.45, 2.02/-2.60, 2.40/-2.42, 2.75/-2.78,
				3.22/-2.48, 3.68/-2.60, 2.17/-2.08, 3.10/-2.25,
				2.58/-3.05, 1.82/-2.32
			}{
				\node[bluepoint] at ($(A0)+(\x,\y)$) {};
			}
			
			\foreach \x/\y in {
				1.35/-5.58, 1.68/-5.25, 1.98/-5.90, 2.28/-5.55,
				2.58/-5.15, 2.95/-5.82, 3.34/-5.40, 3.78/-5.62,
				1.58/-6.15, 1.98/-6.25, 2.40/-6.05, 2.75/-6.45,
				3.18/-6.05, 3.55/-5.85, 2.17/-5.62, 3.10/-5.25,
				2.55/-5.48, 1.82/-5.78
			}{
				\node[greenpoint] at ($(A0)+(\x,\y)$) {};
			}
			
			\draw[->, black, line width=0.75pt]
			($(A0)+(\panelW/2,-3.23)$)
			--
			($(A0)+(\panelW/2,-4.10)$);
			
			\node[right] at ($(A0)+(\panelW/2+0.35,-3.65)$)
			{\(\mathcal D(\mathrm P,\mathrm Q)\)};
			
			
			\node[orangebox, minimum width=2.15cm] at ($(B0)+(\panelW/2,-1.17)$)
			{\(\mathrm U_d\)};
			
			\coordinate (Bref1) at ($(B0)+(1.15,-2.25)$);
			\coordinate (Bref2) at ($(B0)+(4.20,-2.25)$);
			\coordinate (Bref3) at ($(B0)+(3.85,-2.92)$);
			\coordinate (Bref4) at ($(B0)+(0.80,-2.92)$);
			
			\draw[plane] (Bref1) -- (Bref2) -- (Bref3) -- (Bref4) -- cycle;
			\draw[dashedellipse] ($(B0)+(2.52,-2.60)$) ellipse [x radius=1.35cm, y radius=0.26cm];
			
			\node at ($(B0)+(4.42,-2.63)$) {\(\mathbb B_d\)};
			
			\foreach \x/\y in {
				1.35/-2.60, 1.70/-2.45, 2.05/-2.65, 2.40/-2.48,
				2.80/-2.65, 3.15/-2.45, 3.52/-2.62, 3.88/-2.48
			}{
				\node[blackpoint] at ($(B0)+(\x,\y)$) {};
			}
			
			\coordinate (BP1a) at ($(B0)+(0.85,-5.85)$);
			\coordinate (BP2a) at ($(B0)+(2.65,-5.85)$);
			\coordinate (BP3a) at ($(B0)+(2.40,-6.30)$);
			\coordinate (BP4a) at ($(B0)+(0.60,-6.30)$);
			
			\coordinate (BQ1a) at ($(B0)+(2.95,-5.85)$);
			\coordinate (BQ2a) at ($(B0)+(4.65,-5.85)$);
			\coordinate (BQ3a) at ($(B0)+(4.40,-6.30)$);
			\coordinate (BQ4a) at ($(B0)+(2.70,-6.30)$);
			
			\draw[plane, draw=blue!65!black, fill=blue!4]
			(BP1a) -- (BP2a) -- (BP3a) -- (BP4a) -- cycle;
			\draw[plane, draw=green!55!black, fill=green!4]
			(BQ1a) -- (BQ2a) -- (BQ3a) -- (BQ4a) -- cycle;
			
			\draw[dashedellipse, draw=blue!40]
			($(B0)+(1.60,-6.12)$) ellipse [x radius=0.63cm, y radius=0.15cm];
			\draw[dashedellipse, draw=green!40]
			($(B0)+(3.65,-6.12)$) ellipse [x radius=0.63cm, y radius=0.15cm];
			
			\foreach \x/\y in {
				1.08/-6.12, 1.32/-6.05, 1.57/-6.18, 1.85/-6.05, 2.15/-6.16
			}{
				\node[bluepoint] at ($(B0)+(\x,\y)$) {};
			}
			
			\foreach \x/\y in {
				3.18/-6.12, 3.42/-6.05, 3.68/-6.18, 3.95/-6.05, 4.22/-6.16
			}{
				\node[greenpoint] at ($(B0)+(\x,\y)$) {};
			}
			
			\foreach \xA/\yA/\xB/\yB in {
				1.38/-2.65/1.08/-6.05,
				1.65/-2.58/1.32/-6.04,
				1.95/-2.62/1.57/-6.08,
				2.20/-2.60/1.85/-6.04,
				2.45/-2.64/2.15/-6.08
			}{
				\draw[mapblue] ($(B0)+(\xA,\yA)$) -- ($(B0)+(\xB,\yB)$);
			}
			
			\foreach \xA/\yA/\xB/\yB in {
				2.85/-2.64/3.18/-6.05,
				3.10/-2.58/3.42/-6.04,
				3.35/-2.62/3.68/-6.08,
				3.60/-2.60/3.95/-6.04,
				3.85/-2.64/4.22/-6.08
			}{
				\draw[mapgreen] ($(B0)+(\xA,\yA)$) -- ($(B0)+(\xB,\yB)$);
			}
			
			\node[blue!75!black] at ($(B0)+(0.62,-4.15)$)
			{\(T_{\varepsilon,\mathrm P}\)};
			
			\node[green!55!black] at ($(B0)+(4.72,-4.15)$)
			{\(T_{\varepsilon,\mathrm Q}\)};
			
			\node[bluebox, minimum width=1.05cm] at ($(B0)+(1.50,-6.92)$)
			{\(\mathrm P\)};
			\node[greenbox, minimum width=1.05cm] at ($(B0)+(3.80,-6.92)$)
			{\(\mathrm Q\)};
			
			
			\coordinate (Cref1) at ($(C0)+(0.72,-1.83)$);
			\coordinate (Cref2) at ($(C0)+(4.05,-1.83)$);
			\coordinate (Cref3) at ($(C0)+(3.70,-2.55)$);
			\coordinate (Cref4) at ($(C0)+(0.37,-2.55)$);
			
			\draw[plane] (Cref1) -- (Cref2) -- (Cref3) -- (Cref4) -- cycle;
			\draw[dashedellipse] ($(C0)+(2.20,-2.20)$) ellipse [x radius=1.30cm, y radius=0.25cm];
			
			\node[align=center, font=\scriptsize] at ($(C0)+(4.30,-2.17)$)
			{\(\mathrm U_d\)};
			
			\foreach \x/\y in {
				0.86/-2.25, 1.20/-2.08, 1.57/-2.30, 1.95/-2.12,
				2.35/-2.28, 2.72/-2.12, 3.08/-2.30, 3.45/-2.14
			}{
				\node[blackpoint] at ($(C0)+(\x,\y)$) {};
			}
			
			\coordinate (CP1) at ($(C0)+(0.62,-3.65)$);
			\coordinate (CP2) at ($(C0)+(4.05,-3.65)$);
			\coordinate (CP3) at ($(C0)+(3.70,-4.40)$);
			\coordinate (CP4) at ($(C0)+(0.27,-4.40)$);
			
			\draw[plane, draw=blue!65!black, fill=blue!4]
			(CP1) -- (CP2) -- (CP3) -- (CP4) -- cycle;
			\draw[dashedellipse, draw=blue!35]
			($(C0)+(2.10,-4.06)$) ellipse [x radius=1.25cm, y radius=0.23cm];
			
			\node[align=center, font=\scriptsize] at ($(C0)+(4.30,-3.98)$)
			{\(\mathrm P\)};
			
			\coordinate (CQ1) at ($(C0)+(0.62,-5.40)$);
			\coordinate (CQ2) at ($(C0)+(4.05,-5.40)$);
			\coordinate (CQ3) at ($(C0)+(3.70,-6.15)$);
			\coordinate (CQ4) at ($(C0)+(0.27,-6.15)$);
			
			\draw[plane, draw=green!55!black, fill=green!4]
			(CQ1) -- (CQ2) -- (CQ3) -- (CQ4) -- cycle;
			\draw[dashedellipse, draw=green!35]
			($(C0)+(2.15,-5.82)$) ellipse [x radius=1.28cm, y radius=0.23cm];
			
			\node[align=center, font=\scriptsize] at ($(C0)+(4.30,-5.72)$)
			{\(\mathrm Q\)};
			
			\foreach \i/\xr/\yr/\xp/\yp/\xq/\yq in {
				1/0.86/-2.25/0.95/-4.08/0.72/-5.82,
				2/1.20/-2.08/1.35/-3.95/1.08/-5.68,
				3/1.57/-2.30/1.70/-4.15/1.46/-5.95,
				4/1.95/-2.12/2.10/-3.98/1.88/-5.75,
				5/2.35/-2.28/2.52/-4.12/2.30/-5.92,
				6/2.72/-2.12/2.90/-3.92/2.70/-5.70,
				7/3.08/-2.30/3.28/-4.12/3.05/-5.92,
				8/3.45/-2.14/3.63/-3.95/3.42/-5.72
			}{
				\node[bluepoint] at ($(C0)+(\xp,\yp)$) {};
				\node[greenpoint] at ($(C0)+(\xq,\yq)$) {};
				
				\draw[mapblue] ($(C0)+(\xr,\yr)$) -- ($(C0)+(\xp,\yp)$);
				
				\draw[mapgreen] ($(C0)+(\xr+0.05,\yr-0.02)$) -- ($(C0)+(\xq,\yq)$);
			}
			
			\foreach \xp/\yp/\xq/\yq in {
				0.95/-4.08/0.72/-5.82,
				1.70/-4.15/1.46/-5.95,
				2.52/-4.12/2.30/-5.92,
				3.28/-4.12/3.05/-5.92
			}{
				\draw[diffred] ($(C0)+(\xp,\yp)$) -- ($(C0)+(\xq,\yq)$);
			}
			
			\draw[mapblue] ($(C0)+(0.70,-6.55)$) -- ($(C0)+(1.20,-6.55)$);
			\node[right, font=\scriptsize] at ($(C0)+(1.32,-6.55)$)
			{\(T_{\varepsilon,\mathrm P}\)};
			
			\draw[mapgreen] ($(C0)+(0.70,-6.85)$) -- ($(C0)+(1.20,-6.85)$);
			\node[right, font=\scriptsize] at ($(C0)+(1.32,-6.85)$)
			{\(T_{\varepsilon,\mathrm Q}\)};
			
			\draw[diffred] ($(C0)+(2.65,-6.70)$) -- ($(C0)+(3.15,-6.70)$);
			\node[right, font=\scriptsize] at ($(C0)+(3.27,-6.70)$)
			{\(T_{\varepsilon,\mathrm P}-T_{\varepsilon,\mathrm Q}\)};
			
		\end{tikzpicture}%
	}
	\caption{
		Illustration of the proposed common-reference map comparison. Two
		distributions are represented by entropic transport maps from the same
		reference distribution \(\mathrm U_d\), and their discrepancy is measured
		by comparing the resulting maps over \(\mathbb B_d\).
	}
	\label{fig:map-discrepancy}
\end{figure}

\section{Entropic optimal transport}
\label{sec:eot}

Entropic regularization plays an essential role in the proposed map-based
testing framework. In contrast to unregularized optimal transport, whose
transport map is typically defined only up to source-measure null sets and may
have nonstandard empirical behavior in multivariate settings, the fixed
\(\varepsilon>0\) entropic transport map admits a canonical barycentric version
on the whole reference domain \(\mathbb B_d\). Moreover, the Schrödinger
potentials and the resulting barycentric projection depend smoothly on the
underlying distribution, which enables the Hadamard-differentiability arguments
and functional delta method used in our asymptotic analysis. From a
computational perspective, entropic OT can be solved efficiently by
Sinkhorn-type algorithms~\citep{cuturi2013sinkhorn,peyre2019computational},
which makes the repeated computations required by bootstrap calibration
feasible.

This section introduces the entropic transport map that will be used as the
basic distributional representation throughout the paper. We first define the
population map associated with a probability distribution \(\mathrm P\). We then
describe its empirical counterpart, obtained by replacing only the target
distribution by the empirical measure while keeping the source distribution
fixed. This common-source construction is the key ingredient that allows two
unknown distributions to be compared through maps defined on the same reference
domain.

Let
\(
\mathbb B_d
:=
\{\mathbf u\in\mathbb R^d:\|\mathbf u\|\le 1\}
\)
denote the unit ball in \(\mathbb R^d\). Let \(\mathrm U_d\) be the uniform
probability measure on the unit ball \(\mathbb B_d\). Throughout the paper,
\(\mathrm U_d\) serves as the fixed reference distribution. We choose
\(\mathrm U_d\) because it provides a compact, rotation-invariant, and easily
simulable reference law. The compact reference domain is convenient for
defining the map discrepancy in
\(L^2(\mathbb B_d,\mathrm U_d;\mathbb R^d)\), while rotation invariance avoids
privileging any coordinate direction. In low-dimensional applications, the
unit ball also provides a simple and interpretable domain for visualizing the
estimated map discrepancy.

For a set \(\mathcal X\subset\mathbb R^d\), let \(\mathcal P(\mathcal X)\)
denote the set of Borel probability measures on \(\mathcal X\), and let
\(\operatorname{spt}(\mu)\) denote the support of
\(\mu\in\mathcal P(\mathcal X)\). Throughout, \(\mathrm P\) and \(\mathrm Q\) denote Borel
probability measures on \(\mathbb R^d\) with
\(\operatorname{spt}(\mathrm P)\cup\operatorname{spt}(\mathrm Q)\subseteq\mathcal X\).

For a fixed \(\varepsilon>0\) and the quadratic cost
\(
c(\mathbf{u},\mathbf{x})
=
\frac12\|\mathbf{u}-\mathbf{x}\|^2,
\mathbf{u}\in\mathbb{B}_d, \mathbf{x}\in\mathcal{X},
\)
the entropic optimal transport coupling between \(\mathrm U_d\) and
\(\mathrm P\) is the unique minimizer
\begin{equation}
\label{eq:sinkhorncost}
\pi_{\varepsilon,\mathrm P}
\in
\arg\min_{\pi\in\Pi(\mathrm U_d,\mathrm P)}
\left\{
\int_{\mathbb B_d\times\mathcal X}
\frac12\|\mathbf u-\mathbf x\|^2
\,d\pi(\mathbf u,\mathbf x)
+
\varepsilon\,
\mathrm{KL}\!\left(\pi\,\|\,\mathrm U_d\otimes \mathrm P\right)
\right\},
\end{equation}
Here \(\Pi(\mathrm U_d,\mathrm P)\) is the set of couplings of
\((\mathrm U_d,\mathrm P)\) and \(\mathrm U_d\otimes \mathrm P\) is the product probability measure of \(\mathrm U_d\) and \(\mathrm P\). Here \(\mathrm{KL}(\alpha\|\beta)\) denotes the Kullback--Leibler divergence,
defined by
\[
\mathrm{KL}(\alpha\|\beta)
=
\begin{cases}
	\displaystyle \int \log\left(\frac{d\alpha}{d\beta}\right)\,d\alpha,
	& \alpha\ll \beta,\\[0.6em]
	+\infty,
	& \text{otherwise}.
\end{cases}
\]
Existence and uniqueness of the entropic optimal coupling, the Schrödinger
representation, and regularity properties of the associated potentials are
standard; see, for example,
\citet{leonard2014schrodinger}, \citet{cuturi2013sinkhorn},
\citet{peyre2019computational}, \citet{nutz2023stability}, and
\citet{goldfeld2024limit}.

The entropic regularization makes the optimal coupling strictly positive with
respect to \(\mathrm U_d\otimes \mathrm P\). In particular, the minimizer admits
the Schr\"odinger representation
\[
d\pi_{\varepsilon,\mathrm P}(\mathbf u,\mathbf x)
=
\exp\left\{
\frac{
	\varphi_{\mathrm P}(\mathbf u)
	+
	\psi_{\mathrm P}(\mathbf x)
	-
	\frac12\|\mathbf u-\mathbf x\|^2
}{\varepsilon}
\right\}
\,d\mathrm U_d(\mathbf u)\,d\mathrm P(\mathbf x),
\]
where the Schr\"odinger potentials
\((\varphi_{\mathrm P},\psi_{\mathrm P})\) are unique up to the additive
normalization
\(
(\varphi_{\mathrm P},\psi_{\mathrm P})
\mapsto
(\varphi_{\mathrm P}+C,\psi_{\mathrm P}-C),
C\in\mathbb R.
\)
The marginal constraints are equivalently given by the Schr\"odinger system
\[
e^{-\varphi_{\mathrm P}(\mathbf u)/\varepsilon}
=
\int_{\mathcal X}
\exp\left\{
\frac{
	\psi_{\mathrm P}(\mathbf x)
	-
	\frac12\|\mathbf u-\mathbf x\|^2
}{\varepsilon}
\right\}
\,d\mathrm P(\mathbf x),
\qquad
\mathbf u\in\mathbb B_d,
\]
and
\[
e^{-\psi_{\mathrm P}(\mathbf x)/\varepsilon}
=
\int_{\mathbb B_d}
\exp\left\{
\frac{
	\varphi_{\mathrm P}(\mathbf u)
	-
	\frac12\|\mathbf u-\mathbf x\|^2
}{\varepsilon}
\right\}
\,d\mathrm U_d(\mathbf u),
\qquad
\mathbf x\in\mathcal X.
\]

We now pass from the entropic coupling to a map-valued representation of \(\mathrm P\),
following the barycentric projection viewpoint for entropic transport maps
\citep{pooladian2021entropic,pooladian2022debiaser,rigollet2025sample,klatt2020empirical,goldfeld2024limit}. The entropic transport map associated with \(\mathrm P\) is
defined as the barycentric projection of \(\pi_{\varepsilon,\mathrm P}\):
\begin{equation}\label{eq:eot-map-barycentric}
	T_{\varepsilon,\mathrm P}(\mathbf u)
	:=
	\mathbb E_{\pi_{\varepsilon,\mathrm P}}
	[\mathbf X\mid \mathbf U=\mathbf u],
	\qquad
	\mathbf u\in\mathbb B_d .
	\end{equation}
Here \((\mathbf U,\mathbf X)\sim\pi_{\varepsilon,\mathrm P}\). Equivalently,
\[
T_{\varepsilon,\mathrm P}(\mathbf u)
=
\frac{
	\displaystyle
	\int_{\mathcal X}
	\mathbf x
	\exp\left\{
	\frac{
		\psi_{\mathrm P}(\mathbf x)
		-
		\frac12\|\mathbf u-\mathbf x\|^2
	}{\varepsilon}
	\right\}
	\,d\mathrm P(\mathbf x)
}{
	\displaystyle
	\int_{\mathcal X}
	\exp\left\{
	\frac{
		\psi_{\mathrm P}(\mathbf x)
		-
		\frac12\|\mathbf u-\mathbf x\|^2
	}{\varepsilon}
	\right\}
	\,d\mathrm P(\mathbf x)
}.
\]
Thus \(T_{\varepsilon,\mathrm P}\) maps each reference point
\(\mathbf u\in\mathbb B_d\) to the conditional barycenter of the target
coordinate under the entropic coupling. Under the present quadratic-cost
convention, differentiating the first equation in the Schr\"odinger system
gives the useful identity
\begin{equation}\label{eq:eot-map-gradient}
	T_{\varepsilon,\mathrm P}(\mathbf u)
	=
	\mathbf u-\nabla\varphi_{\mathrm P}(\mathbf u),
	\qquad
	\mathbf u\in\mathbb B_d.
\end{equation}
This identity connects the map to the smoothness of the Schr\"odinger potential
and will be used implicitly in the asymptotic analysis.

\noindent
\textbf{Empirical version.}
The preceding construction is population-level and depends on the unknown
distribution \(\mathrm P\). Given data, we replace \(\mathrm P\) by its empirical
measure while keeping the source distribution equal to the same fixed reference
law \(\mathrm U_d\). Let \(\mathbf X_1,\ldots,\mathbf X_n\) be sampled i.i.d.
from \(\mathrm P\), and define
\(
\mathrm P_n
=
\frac{1}{n}\sum_{i=1}^n \delta_{\mathbf X_i}.
\)
Let \(\pi_{\varepsilon,\mathrm P_n}\) denote the entropic optimal transport
coupling between \(\mathrm U_d\) and \(\mathrm P_n\). The empirical entropic
transport map is defined by the barycentric projection of
\(\pi_{\varepsilon,\mathrm P_n}\):
\[
\widehat T_{\varepsilon,n}(\mathbf u)
:=
T_{\varepsilon,\mathrm P_n}(\mathbf u)
=
\mathbb E_{\pi_{\varepsilon,\mathrm P_n}}
\bigl[
\mathbf X \mid \mathbf U=\mathbf u
\bigr],
\qquad
\mathbf u\in\mathbb B_d .
\]
Equivalently, using the Schr\"odinger representation of
\(\pi_{\varepsilon,\mathrm P_n}\), we may write
\[
\widehat T_{\varepsilon,n}(\mathbf u)
=
\frac{
	\sum_{i=1}^n
	\mathbf X_i
	\exp\left\{
	\frac{
		\widehat\psi_i
		-
		\frac12\|\mathbf u-\mathbf X_i\|^2
	}{\varepsilon}
	\right\}
}{
	\sum_{i=1}^n
	\exp\left\{
	\frac{
		\widehat\psi_i
		-
		\frac12\|\mathbf u-\mathbf X_i\|^2
	}{\varepsilon}
	\right\}
},
\qquad
\mathbf u\in\mathbb B_d,
\]
where \(\widehat\psi_i\) denotes the empirical Schr\"odinger potential evaluated
at \(\mathbf X_i\). Hence \(\widehat T_{\varepsilon,n}\) is a function on the
entire reference domain \(\mathbb B_d\), even though the target measure is
discrete. This feature is important for the two-sample procedure: the two
empirical maps will be evaluated and compared on the same source domain.

We close this section by recording the standing assumptions used in the
fixed-regularization asymptotic theory for empirical entropic OT maps; see
\citet{goldfeld2024limit}. The compact-support assumption provides a common
state space for the regularity and differentiability arguments, while the
fixed-\(\varepsilon\) assumption places the analysis in the regime where the
EOT map admits first-order limit theory.

\begin{assumption}[Compact support]
	\label{ass:compact}
	There exists a closed ball \(\mathcal X\subset\mathbb R^d\) such that
	\(
	\mathbb B_d
	\cup \operatorname{spt}(\mathrm P)
	\cup \operatorname{spt}(\mathrm Q)
	\subset \mathcal X.
	\)
\end{assumption}

\begin{assumption}[Fixed regularization]
	\label{ass:fixed-epsilon}
	The entropic regularization parameter \(\varepsilon>0\) is fixed and does not
	depend on \(n\) or \(m\).
\end{assumption}

\begin{remark}[Role of the assumptions and empirical measures]
	The assumptions above are imposed on the population laws. Although the
	empirical measures \(\mathrm P_n\) and \(\mathrm Q_m\) are discrete, this does
	not affect the definition of the entropic coupling or its barycentric
	projection. For fixed \(\varepsilon>0\), the EOT problem is well posed for
	probability measures supported on the common compact set \(\mathcal X\),
	including empirical measures.
	
	Assumption~\ref{ass:compact} provides a common state space and the uniform
	analytic bounds needed for the differentiability and empirical-process
	arguments. Assumption~\ref{ass:fixed-epsilon} places the analysis in the
	fixed-regularization regime. The results therefore concern the regularized
	EOT map \(T_{\varepsilon,\mathrm P}\) for fixed \(\varepsilon>0\), rather
	than the vanishing-regularization regime \(\varepsilon\to0\).
	
	Population distributions with unbounded support are not covered directly by
	the present theory. Extending the results to such distributions would require
	additional tail conditions or weighted function-space arguments, and is beyond
	the scope of this paper.
\end{remark}

\section{Two-sample homogeneity testing problem}
\label{sec:two-sample}

We now formulate the proposed two-sample testing procedure. The central idea is
to represent each distribution by its entropic transport map from the common
reference distribution \(\mathrm U_d\), and then to compare the resulting maps in
\(L^2(\mathbb B_d,\mathrm U_d;\mathbb R^d)\). We first state the testing problem, then
establish the identifiability property that justifies the map-based comparison,
and finally define the population discrepancy and its empirical plug-in version.

Let \(\mathrm P\) and \(\mathrm Q\) be two Borel probability measures on
\(\mathbb R^d\) satisfying Assumption~\ref{ass:compact}. We consider
the two-sample homogeneity testing problem
\(H_0:\mathrm P=\mathrm Q \quad\text{versus}\quad H_1:\mathrm P\ne \mathrm Q\).

Before defining the test statistic, we verify that comparing the entropic
transport maps \(T_{\varepsilon,\mathrm P}\) and \(T_{\varepsilon,\mathrm Q}\) does not lose information about the underlying distributions.
The following lemma shows that the map \(T_{\varepsilon,\mathrm P}\) uniquely
determines \(\mathrm P\).


\begin{lemma}[Identifiability from the entropic transport map]
	\label{lem:injectivity}
	Under Assumptions~\ref{ass:compact} and~\ref{ass:fixed-epsilon}, all
	Schr\"odinger potentials below are understood as their canonical extensions
	given by the Schr\"odinger system on \(\mathbb B_d\times\mathcal X\). Then
	\[
	\mathrm P=\mathrm Q
	\quad\Longleftrightarrow\quad
	T_{\varepsilon,\mathrm P}(\mathbf u)
	=
	T_{\varepsilon,\mathrm Q}(\mathbf u)
	\quad
	\mathrm U_d\text{-a.e.}
	\]
	as Borel probability measures on \(\mathbb R^d\).
\end{lemma}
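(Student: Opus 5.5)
The direction $\mathrm P=\mathrm Q\Rightarrow T_{\varepsilon,\mathrm P}=T_{\varepsilon,\mathrm Q}$ is immediate. The entropic coupling in \eqref{eq:sinkhorncost} is unique, so equal targets give equal couplings and hence equal barycentric projections \eqref{eq:eot-map-barycentric}.

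For the converse, the plan is to pass from equality of maps to equality of potentials, then to equality of a Laplace-type transform of $\mathrm P$ and $\mathrm Q$, and finally to equality of the measures by analyticity and the Cramér--Wold/moment-generating-function uniqueness theorem.

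\textbf{Step 1: from maps to potentials.} Both maps are continuous (indeed smooth) on $\mathbb B_d$, since they are ratios of integrals of smooth kernels over the compact set $\mathcal X$. So a.e.\ equality upgrades to equality everywhere on $\mathbb B_d$. By \eqref{eq:eot-map-gradient}, $\nabla\varphi_{\mathrm P}=\nabla\varphi_{\mathrm Q}$ on the connected set $\mathbb B_d$. Hence $\varphi_{\mathrm P}=\varphi_{\mathrm Q}+C$, and after the additive normalization we may take $\varphi_{\mathrm P}=\varphi_{\mathrm Q}=:\varphi$.

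\textbf{Step 2: a common weighted measure.} Use the second Schrödinger equation on $\mathcal X$, which holds via the canonical extension. It gives
\[
e^{-\psi_{\mathrm P}(\mathbf x)/\varepsilon}=\int_{\mathbb B_d}e^{(\varphi(\mathbf u)-\frac12\|\mathbf u-\mathbf x\|^2)/\varepsilon}\,d\mathrm U_d(\mathbf u)=e^{-\psi_{\mathrm Q}(\mathbf x)/\varepsilon}
\]
for all $\mathbf x\in\mathcal X$. Hence $\psi_{\mathrm P}=\psi_{\mathrm Q}=:\psi$ as functions on $\mathcal X$. Define the finite positive measures
\[
d\mu_{\mathrm P}(\mathbf x)=e^{(\psi(\mathbf x)-\|\mathbf x\|^2/2)/\varepsilon}\,d\mathrm P(\mathbf x),
\]
and $\mu_{\mathrm Q}$ analogously. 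The first Schrödinger equation, after cancelling the factor $e^{-\|\mathbf u\|^2/(2\varepsilon)}$, then reads
\[
\int e^{\langle\mathbf u,\mathbf x\rangle/\varepsilon}\,d\mu_{\mathrm P}(\mathbf x)=e^{(\|\mathbf u\|^2/2-\varphi(\mathbf u))/\varepsilon}=\int e^{\langle\mathbf u,\mathbf x\rangle/\varepsilon}\,d\mu_{\mathrm Q}(\mathbf x),\qquad \mathbf u\in\mathbb B_d .
\]

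\textbf{Step 3: uniqueness of the transform.} Because $\mathcal X$ is compact, both Laplace transforms $L_{\mu}(\mathbf s)=\int e^{\langle\mathbf s,\mathbf x\rangle}d\mu(\mathbf x)$ are entire on $\mathbb C^d$. They agree on the open set $\varepsilon^{-1}\mathrm{int}\,\mathbb B_d\subset\mathbb R^d$, so they agree everywhere by the identity theorem. In particular the Fourier transforms (at purely imaginary arguments) agree, and therefore $\mu_{\mathrm P}=\mu_{\mathrm Q}$. The density $e^{(\psi-\|\mathbf x\|^2/2)/\varepsilon}$ is strictly positive and finite on $\mathcal X$, so dividing it out yields $\mathrm P=\mathrm Q$.

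\textbf{Main obstacle.} The delicate step is Step 2. One must ensure that $\psi_{\mathrm P}$ and $\psi_{\mathrm Q}$ are genuinely compared as functions on all of $\mathcal X$, and not only on their respective supports, where they would otherwise be defined only $\mathrm P$- or $\mathrm Q$-a.e. This is exactly why the statement invokes the canonical extension through the Schrödinger system. Once both potentials are defined by the same integral formula with the same $\varphi$, they coincide pointwise. A secondary point is to make the normalization of Step 1 consistent: shifting $\varphi$ by $C$ shifts $\psi$ by $-C$, so this is harmless. Finally, the continuity used in Step 1 should be justified by dominated convergence on the compact set $\mathcal X$.
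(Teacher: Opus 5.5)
Your proposal is correct and follows essentially the same route as the paper's proof. Both arguments use the gradient identity and connectedness to get $\varphi_{\mathrm P}=\varphi_{\mathrm Q}+C$, then the first Schr\"odinger equation to equate a Gaussian-kernel transform (equivalently, after completing the square, a Laplace transform) of the $e^{\psi/\varepsilon}$-tilted measures on an open set, then analytic continuation with Fourier uniqueness, and finally the second Schr\"odinger equation to remove the tilt. The differences are cosmetic: you normalize $C=0$ and identify $\psi_{\mathrm P}=\psi_{\mathrm Q}$ on $\mathcal X$ at the outset, and you phrase uniqueness through the entire Laplace transform rather than the paper's $G_\varepsilon*\Gamma=0$ together with $\widehat G_\varepsilon>0$.
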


Lemma~\ref{lem:injectivity} implies that the original homogeneity problem can
be equivalently expressed in terms of the entropic transport maps:
\(H_0: T_{\varepsilon,\mathrm P}=T_{\varepsilon,\mathrm Q}\ \mathrm U_d\text{-a.e.} \quad\text{versus}\quad H_1: T_{\varepsilon,\mathrm P}\ne T_{\varepsilon,\mathrm Q}\ \mathrm U_d\text{-a.e.}\)
Thus the common reference distribution \(\mathrm U_d\) allows the comparison of
\(\mathrm P\) and \(\mathrm Q\) to be recast as the comparison of two
vector-valued functions on \(\mathbb B_d\).

\begin{remark}[Extension to multiple samples]
	The proposed EOT-map framework can be naturally extended to the
	\(K\)-sample homogeneity problem. Suppose that independent samples are
	drawn from \(\mathrm P_1,\ldots,\mathrm P_K\), with sample sizes
	\(n_1,\ldots,n_K\), and consider
	\[
	H_0:\mathrm P_1=\cdots=\mathrm P_K .
	\]
	Let \(T_{\varepsilon,\mathrm P_k}\) denote the EOT map from the common
	reference distribution \(\mathrm U_d\) to \(\mathrm P_k\). A natural
	population discrepancy is obtained by comparing each map with the
	weighted average map
	\[
	\bar T_{\varepsilon}(\mathbf u)
	=
	\sum_{k=1}^K \lambda_k T_{\varepsilon,\mathrm P_k}(\mathbf u),
	\qquad
	\lambda_k=\frac{n_k}{\sum_{\ell=1}^K n_\ell}.
	\]
	This gives
	\[
	D_{\varepsilon,K}
	=
	\sum_{k=1}^K \lambda_k
	\int_{\mathbb B_d}
	\left\|
	T_{\varepsilon,\mathrm P_k}(\mathbf u)
	-
	\bar T_{\varepsilon}(\mathbf u)
	\right\|^2\,d\mathrm U_d(\mathbf u).
	\]
	When all \(\lambda_k>0\), this discrepancy vanishes if and only if all
	the maps \(T_{\varepsilon,\mathrm P_k}\) are equal \(\mathrm U_d\)-a.e.,
	which, by Lemma~\ref{lem:injectivity}, is equivalent to
	\(\mathrm P_1=\cdots=\mathrm P_K\).
	
	The corresponding empirical statistic is
	\[
	S_{K}
	=
	\sum_{k=1}^K n_k
	\int_{\mathbb B_d}
	\left\|
	\widehat T_{\varepsilon,k}(\mathbf u)
	-
	\widehat{\bar T}_{\varepsilon}(\mathbf u)
	\right\|^2\,d\mathrm U_d(\mathbf u),
	\qquad
	\widehat{\bar T}_{\varepsilon}(\mathbf u)
	=
	\sum_{k=1}^K \lambda_k \widehat T_{\varepsilon,k}(\mathbf u).
	\]
	Natural critical values could be obtained by extending the weighted
	bootstrap or permutation calibration to the pooled \(K\)-sample setting.
	A full theoretical treatment of this extension is left for future work.
\end{remark}

\noindent
\textbf{Population discrepancy.}
Motivated by the preceding equivalence, define the population map discrepancy
\begin{equation}\label{eq:population-discrepancy}
	\mathcal D_{\varepsilon}(\mathrm P,\mathrm Q)
	:=
	\int_{\mathbb B_d}
	\left\|
	T_{\varepsilon,\mathrm P}(\mathbf u)
	-
	T_{\varepsilon,\mathrm Q}(\mathbf u)
	\right\|^2
	\,d\mathrm U_d(\mathbf u).
\end{equation}

By Lemma~\ref{lem:injectivity},
\[
\mathcal D_{\varepsilon}(\mathrm P,\mathrm Q)=0
\quad
\Longleftrightarrow
\quad
\mathrm P=\mathrm Q .
\]
Therefore \(\mathcal D_{\varepsilon}\) is a valid population discrepancy for
the two-sample homogeneity problem. In particular, large values of
\(\mathcal D_{\varepsilon}(\mathrm P,\mathrm Q)\) indicate significant departure from
\(H_0:\mathrm P=\mathrm Q\).

\begin{remark}[Invariance properties]
	The population discrepancy is invariant under common Euclidean
	transformations of the two distributions. Let
	\(g(\mathbf x)=O\mathbf x+\mathbf a\), where \(O\) is an orthogonal
	matrix and \(\mathbf a\in\mathbb R^d\). For
	\(\mathrm P^g:=g_\#\mathrm P\) and
	\(\mathrm Q^g:=g_\#\mathrm Q\), the corresponding entropic transport maps
	satisfy
	\[
	T_{\varepsilon,\mathrm P^g}(\mathbf u)
	=
	O T_{\varepsilon,\mathrm P}(O^\top \mathbf u)+\mathbf a,
	\qquad
	T_{\varepsilon,\mathrm Q^g}(\mathbf u)
	=
	O T_{\varepsilon,\mathrm Q}(O^\top \mathbf u)+\mathbf a,
	\qquad
	\mathbf u\in\mathbb B_d .
	\]
	Hence
	\[
	T_{\varepsilon,\mathrm P^g}(\mathbf u)
	-
	T_{\varepsilon,\mathrm Q^g}(\mathbf u)
	=
	O\{T_{\varepsilon,\mathrm P}(O^\top \mathbf u)
	-
	T_{\varepsilon,\mathrm Q}(O^\top \mathbf u)\}.
	\]
	Since \(\mathrm U_d\) is invariant under orthogonal transformations and the
	Euclidean norm is rotation-invariant, it follows that
	\(
	\mathcal D_{\varepsilon}(\mathrm P^g,\mathrm Q^g)
	=
	\mathcal D_{\varepsilon}(\mathrm P,\mathrm Q).
	\)
	Thus the discrepancy does not depend on a common translation of the two
	distributions or on the choice of orthonormal coordinate system.
\end{remark}

\noindent
\textbf{Empirical test statistic.}
Since \(\mathcal D_{\varepsilon}(\mathrm P,\mathrm Q)\) depends on the unknown
population maps, we estimate them by their empirical counterparts. Let
\[
\mathrm P_n
=
\frac1n\sum_{i=1}^n\delta_{\mathbf X_i},
\qquad
\mathbf X_1,\ldots,\mathbf X_n
\overset{\mathrm{i.i.d.}}{\sim}
\mathrm P,
\]
and
\[
\mathrm Q_m
=
\frac1m\sum_{j=1}^m\delta_{\mathbf Y_j},
\qquad
\mathbf Y_1,\ldots,\mathbf Y_m
\overset{\mathrm{i.i.d.}}{\sim}
\mathrm Q,
\]
where the two samples are independent. Define the empirical entropic transport
maps
\[
\widehat T_{\varepsilon,n}^{\mathrm P}
:=
T_{\varepsilon,\mathrm P_n},
\qquad
\widehat T_{\varepsilon,m}^{\mathrm Q}
:=
T_{\varepsilon,\mathrm Q_m}.
\]
The proposed plug-in statistic is
\begin{equation}\label{eq:eot-test-statistic}
	\mathcal T_{n,m}
	=
	\frac{nm}{n+m}
	\int_{\mathbb B_d}
	\left\|
	\widehat T_{\varepsilon,n}^{\mathrm P}(\mathbf u)
	-
	\widehat T_{\varepsilon,m}^{\mathrm Q}(\mathbf u)
	\right\|^2
	\,d\mathrm U_d(\mathbf u).
\end{equation}
The statistic \eqref{eq:eot-test-statistic} is the squared
\(L^2(\mathbb B_d,\mathrm U_d;\mathbb R^d)\)-norm of the scaled empirical map difference.
The scaling factor \(nm/(n+m)\) is the effective sample size in the two-sample
problem and leads to a nondegenerate null limit under the sample-size balance
condition imposed below. Under \(H_0\), the two population maps coincide
\(\mathrm U_d\)-a.e.; we denote the common map by
\(
T_{\varepsilon,0}
:=
T_{\varepsilon,\mathrm P}
=
T_{\varepsilon,\mathrm Q}.
\)

\begin{remark}[Reference-sample approximation]
	In the theoretical statistic, the source measure is the fixed reference
	distribution \(\mathrm U_d\), and the outer integral is taken with respect to
	\(\mathrm U_d\). In numerical implementation, this integral is typically
	approximated using independent reference points
	\(
	\mathbf U_1,\ldots,\mathbf U_N
	\overset{\mathrm{i.i.d.}}{\sim}
	\mathrm U_d .
	\)
	For two-sample testing, it is natural to use the same reference points when
	evaluating the two empirical maps. This shared reference sample reduces Monte
	Carlo variability in the map difference, because the source-side approximation
	error is common to both empirical maps and is expected to cancel to first
	order under the null. The asymptotic theory developed below, however, is for
	the ideal statistic with the fixed source measure \(\mathrm U_d\).
	
	A joint limit theory that also accounts for a random empirical source measure
could be obtained by applying the functional delta method to the two-marginal
EOT map functional
\(
(\mu,\nu)\mapsto T_{\varepsilon,(\mu,\nu)},
\)
where \(\mu\) denotes the source measure and \(\nu\) denotes the target measure.
In the main theory of this paper, \(\mu=\mathrm U_d\) is fixed and only the
target distribution is estimated. In the Monte Carlo implementation, however,
\(\mathrm U_d\) is approximated by empirical reference points, which would
introduce additional Gaussian terms from the source empirical process. We do
not pursue this extension here. This distinction motivates the
reference-sample sensitivity analysis in Appendix~\ref{app:sensitivity-computation}.
	
\end{remark}

\begin{remark}[Goodness-of-fit as a known-null analogue]
	If the second distribution is a fully specified null model \(\mathrm P_0\),
	the two-sample statistic reduces to the goodness-of-fit statistic
	\(
	\mathcal T_n^{\mathrm{gof}}
	=
	n
	\int_{\mathbb B_d}
	\left\|
	T_{\varepsilon,\mathrm P_n}(\mathbf u)
	-
	T_{\varepsilon,\mathrm P_0}(\mathbf u)
	\right\|^2
	\,d\mathrm U_d(\mathbf u).
	\)
	Under \(H_0:\mathrm P=\mathrm P_0\), the same continuous-mapping argument gives
	\(
	\mathcal T_n^{\mathrm{gof}}
	\rightsquigarrow
	\left\|
	\mathbb G_\varepsilon
	\right\|_{L^2(\mathbb B_d,\mathrm U_d;\mathbb R^d)}^2 .
	\)
	This is the \(m\to\infty\) known-null analogue of
	\(\mathcal T_{n,m}\). Since the focus of this paper is the two-sample problem,
	the goodness-of-fit case is not pursued separately.
\end{remark}

\section{Asymptotic theory of the EOT-map test}
\label{sec:asymptotic-theory}

This section establishes the asymptotic properties of the proposed test. The
analysis proceeds in four steps. First, we derive a functional central limit
theorem for the empirical entropic transport map difference under the
homogeneity null, which yields the null distribution of the statistic as a
Gaussian quadratic form. Second, we prove divergence under fixed alternatives
and hence consistency. Third, we study contiguous local alternatives and
characterize the resulting local asymptotic power. Finally, we develop a
weighted multiplier bootstrap to approximate the non-pivotal null distribution
and justify the resulting bootstrap test.

\subsection{Null limit distribution and consistency}
\label{subsec:null-consistency}

We begin with the asymptotic behavior under the homogeneity null. Throughout
this section, the two sample sizes are allowed to grow at possibly different
rates, subject to the following standard balance condition.

\begin{assumption}[Sample-size balance]
	\label{ass:sample-size-balance}
	As \(n,m\to\infty\), the relative sample sizes satisfy
	\(
	\frac{n}{n+m}
	\to
	\lambda
	\in(0,1).
	\)
\end{assumption}

	Under \(H_0\), the two empirical maps estimate the same
	population map \(T_{\varepsilon,0}\). Hence the leading fluctuation of their
	difference is obtained by combining two independent one-sample EOT-map limits.
	The following theorem gives the resulting two-sample functional central limit
	theorem. The proof relies on the EOT-map central limit theorem derived from
	the Hadamard differentiability of the Schr\"odinger potentials; see
	\citep{goldfeld2024limit}.
	
	We first clarify the notation used below. For an integer \(s\ge 1\),
	\(C^{s-1}(\mathbb B_d;\mathbb R^d)\) denotes the space of vector-valued
	functions \(f:\mathbb B_d\to\mathbb R^d\) whose coordinate functions have
	continuous partial derivatives up to order \(s-1\). We also write
	\(
	L^2(\mathbb B_d,\mathrm U_d;\mathbb R^d)
	=
	\left\{
	f:\mathbb B_d\to\mathbb R^d:
	\int_{\mathbb B_d}
	\|f(\mathbf u)\|^2\,d\mathrm U_d(\mathbf u)<\infty
	\right\}.
	\)
	Unless otherwise stated, weak convergence of EOT maps is understood in
	\(L^2(\mathbb B_d,\mathrm U_d;\mathbb R^d)\). We write
\(\rightsquigarrow\) for weak convergence in the indicated function space.
For bootstrap processes, we write
\(\rightsquigarrow_{\mathbb P}\) for conditional weak convergence in probability,
where the conditioning is on the observed data
\(({\mathbf X}_1,\ldots,{\mathbf X}_n,{\mathbf Y}_1,\ldots,{\mathbf Y}_m)\).
	
	In the sequel, \(\mathbb G_{\varepsilon}^{(2)}\) denotes the Gaussian limit
	arising from the scaled difference between the two empirical EOT maps based on
	the samples from \(\mathrm P\) and \(\mathrm Q\). Here and throughout, the
	superscript \((2)\) indicates a two-sample, or equivalently two-population,
	quantity. We denote by \(\mathcal K_{\varepsilon}^{(2)}\) the covariance operator of \(\mathbb G_{\varepsilon}^{(2)}\).

\begin{theorem}[Two-sample functional CLT for empirical entropic transport maps]
	\label{thm:two-sample-functional-clt}
	Suppose that Assumptions~\ref{ass:compact}, \ref{ass:fixed-epsilon}, and
	\ref{ass:sample-size-balance} hold. Then, under
		\(H_0:\mathrm P=\mathrm Q\), for every fixed \(s\in\mathbb N_+\),
	\[
	\sqrt{\frac{nm}{n+m}}
	\left(
	\widehat T_{\varepsilon,n}^{\mathrm P}
	-
	\widehat T_{\varepsilon,m}^{\mathrm Q}
	\right)
	\rightsquigarrow
	\mathbb G_{\varepsilon}^{(2)}
	\qquad
	\text{in } C^{s-1}(\mathcal X;\mathbb R^d),
	\]
	where \(\mathbb G_{\varepsilon}^{(2)}\) is a mean-zero Gaussian random element
	in \(C^{s-1}(\mathcal X;\mathbb R^d)\). Consequently,
	\[
	\sqrt{\frac{nm}{n+m}}
	\left(
	\widehat T_{\varepsilon,n}^{\mathrm P}
	-
	\widehat T_{\varepsilon,m}^{\mathrm Q}
	\right)
	\rightsquigarrow
	\mathbb G_{\varepsilon}^{(2)}
	\qquad
	\text{in } L^2(\mathbb B_d,\mathrm U_d;\mathbb R^d).
	\]
\end{theorem}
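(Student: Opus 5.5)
The plan is to reduce the two-sample statement to two independent one-sample limits, each supplied by the EOT-map CLT of \citet{goldfeld2024limit}, and then combine them. Under Assumptions~\ref{ass:compact} and~\ref{ass:fixed-epsilon}, that reference gives the following for a single sample. The map \(\nu\mapsto T_{\varepsilon,\nu}\), with source fixed at \(\mathrm U_d\), is Hadamard differentiable at \(\mathrm P\), tangentially to a suitable set. Here measures are viewed as elements of \(\ell^\infty(\mathcal F)\) for a smooth function class \(\mathcal F\) (e.g. a ball of \(C^{k}(\mathcal X)\) with \(k\) large relative to \(s\) and \(d\)), which is \(\mathrm P\)-Donsker. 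The range is \(C^{s-1}(\mathcal X;\mathbb R^d)\), using the canonical extension of the maps to \(\mathcal X\) via the Schr\"odinger system.

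With \(\mathbb G_{\mathrm P}\) the \(\mathrm P\)-Brownian bridge in \(\ell^\infty(\mathcal F)\), this yields
\[
\sqrt n\,(\widehat T^{\mathrm P}_{\varepsilon,n}-T_{\varepsilon,\mathrm P})\rightsquigarrow \dot T_{\mathrm P}[\mathbb G_{\mathrm P}]=:\mathbb G_\varepsilon
\quad\text{in } C^{s-1}(\mathcal X;\mathbb R^d).
\]
Here \(\dot T_{\mathrm P}\) is the continuous linear derivative. Explicitly, it is obtained by implicit-function differentiation of the Schr\"odinger system, and then differentiation of \(\mathbf u-\nabla\varphi\) as in \eqref{eq:eot-map-gradient}. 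The same statement holds for the \(\mathrm Q\)-sample with \(m\).

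Next I would combine the two samples under \(H_0\), where \(T_{\varepsilon,\mathrm P}=T_{\varepsilon,\mathrm Q}=T_{\varepsilon,0}\). The centering terms therefore cancel, and
\[
\sqrt{\tfrac{nm}{n+m}}\bigl(\widehat T^{\mathrm P}_{\varepsilon,n}-\widehat T^{\mathrm Q}_{\varepsilon,m}\bigr)
=\sqrt{\tfrac{m}{n+m}}\,\sqrt n(\widehat T^{\mathrm P}_{\varepsilon,n}-T_{\varepsilon,0})-\sqrt{\tfrac{n}{n+m}}\,\sqrt m(\widehat T^{\mathrm Q}_{\varepsilon,m}-T_{\varepsilon,0}).
\]
The two samples are independent, and each component converges weakly to a tight Borel limit in the separable Banach space \(C^{s-1}(\mathcal X;\mathbb R^d)\). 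Hence the pair converges jointly to \((\mathbb G_\varepsilon,\mathbb G_\varepsilon')\), with \(\mathbb G_\varepsilon'\) an independent copy; this follows from the standard product lemma for weak convergence of independent tight sequences (van der Vaart--Wellner, Example 1.4.6).

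By Assumption~\ref{ass:sample-size-balance}, the deterministic coefficients converge to \(\sqrt{1-\lambda}\) and \(\sqrt\lambda\). The continuous mapping theorem, applied to \((a,b,f,g)\mapsto af-bg\) (Slutsky), then gives the limit
\[
\mathbb G^{(2)}_\varepsilon=\sqrt{1-\lambda}\,\mathbb G_\varepsilon-\sqrt\lambda\,\mathbb G_\varepsilon'.
\]
This is a mean-zero Gaussian element, since it is a linear image of a jointly Gaussian pair. Its covariance is \(\mathcal K^{(2)}_\varepsilon=(1-\lambda)\mathcal K_\varepsilon+\lambda\mathcal K_\varepsilon=\mathcal K_\varepsilon\), so under \(H_0\) the limit does not actually depend on \(\lambda\).

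For the second claim, the restriction map \(C^{s-1}(\mathcal X;\mathbb R^d)\to C(\mathbb B_d;\mathbb R^d)\) is continuous and linear, because \(\mathbb B_d\subset\mathcal X\). So is the inclusion \(C(\mathbb B_d;\mathbb R^d)\hookrightarrow L^2(\mathbb B_d,\mathrm U_d;\mathbb R^d)\), since \(\|f\|_{L^2}\le\|f\|_\infty\). The continuous mapping theorem then transfers the limit; the case \(s=1\) suffices here.

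The main obstacle is the one-sample step, namely justifying Hadamard differentiability and the Donsker property simultaneously in \(C^{s-1}\) for arbitrary \(s\). I would first try to quote \citet{goldfeld2024limit} directly with the source \(\mathrm U_d\) held fixed. If their formulation requires both marginals to vary, I would apply it to the pair \((\mathrm U_d,\nu)\) with a zero perturbation in the first coordinate. Failing that, I would verify the ingredients by hand. The first ingredient is uniform \(C^k\) bounds on the potentials over all measures on \(\mathcal X\), obtained by differentiating the Schr\"odinger equations under the integral using compactness. The second is invertibility of the linearized Schr\"odinger operator \(I-\mathcal A\), where \(\mathcal A\) is a compact Markov-type operator with spectral gap on mean-zero functions. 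Together these give the derivative in \(C^{s}\) for the potentials, and hence in \(C^{s-1}\) for the map \(\mathbf u-\nabla\varphi\).
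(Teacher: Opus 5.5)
Your proposal is correct and follows essentially the same route as the paper. Both invoke the one-sample EOT-map CLT in $C^{s-1}(\mathcal X;\mathbb R^d)$ for each sample, use the identical decomposition with coefficients $\sqrt{m/(n+m)}\to\sqrt{1-\lambda}$ and $\sqrt{n/(n+m)}\to\sqrt{\lambda}$, apply Slutsky, and pass to $L^2(\mathbb B_d,\mathrm U_d;\mathbb R^d)$ through the continuous restriction map; your appeal to the product lemma for joint convergence of the two independent components just makes explicit what the paper covers with a one-line remark about independence.
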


The statistic \(\mathcal T_{n,m}\) is the squared
\(L^2(\mathbb B_d,\mathrm U_d;\mathbb R^d)\)-norm of the centered empirical map difference.
Therefore, the null limit of the scalar statistic follows directly from
Theorem~\ref{thm:two-sample-functional-clt} by the continuous mapping theorem.

\begin{theorem}[Null limit distribution of the two-sample statistic]
	\label{thm:null-limit}
	Suppose that the conditions of
	Theorem~\ref{thm:two-sample-functional-clt} hold. Under
	\(H_0:\mathrm P=\mathrm Q\),
	\[
	\mathcal T_{n,m}
	\rightsquigarrow
	\int_{\mathbb B_d}
	\left\|
	\mathbb G_{\varepsilon}^{(2)}(\mathbf u)
	\right\|^2
	\,d\mathrm U_d(\mathbf u).
	\]
	Moreover, if \(\mathcal K_{\varepsilon}^{(2)}\) denotes the covariance
	operator of \(\mathbb G_{\varepsilon}^{(2)}\) in
	\(L^2(\mathbb B_d,\mathrm U_d;\mathbb R^d)\), then
	\[
	\int_{\mathbb B_d}
	\left\|
	\mathbb G_{\varepsilon}^{(2)}(\mathbf u)
	\right\|^2
	\,d\mathrm U_d(\mathbf u)
	\overset{d}{=}
	\sum_{k=1}^{\infty}\omega_k^{(2)}Z_k^2,
	\]
	where \((\omega_k^{(2)})_{k\geq 1}\) are the nonnegative eigenvalues of
	\(\mathcal K_{\varepsilon}^{(2)}\), and \((Z_k)_{k\geq 1}\) are independent
	standard normal random variables.
\end{theorem}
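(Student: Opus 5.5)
The null limit follows from Theorem~\ref{thm:two-sample-functional-clt} by the continuous mapping theorem, and the series representation is the Karhunen--Lo\`eve expansion of a Gaussian element of the separable Hilbert space $H:=L^2(\mathbb B_d,\mathrm U_d;\mathbb R^d)$.

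\emph{Step 1: the scalar limit.} Write $\mathbb H_{n,m}:=\sqrt{nm/(n+m)}\,(\widehat T_{\varepsilon,n}^{\mathrm P}-\widehat T_{\varepsilon,m}^{\mathrm Q})$. By definition \eqref{eq:eot-test-statistic}, $\mathcal T_{n,m}=\|\mathbb H_{n,m}\|_H^2$. The map $h\mapsto\|h\|_H^2$ is continuous on $H$. Theorem~\ref{thm:two-sample-functional-clt} gives $\mathbb H_{n,m}\rightsquigarrow\mathbb G_\varepsilon^{(2)}$ in $H$, so the continuous mapping theorem yields
\[
\mathcal T_{n,m}\rightsquigarrow\|\mathbb G_\varepsilon^{(2)}\|_H^2=\int_{\mathbb B_d}\|\mathbb G_\varepsilon^{(2)}(\mathbf u)\|^2\,d\mathrm U_d(\mathbf u).
\]
The convergence in $H$ is itself inherited from the $C^{s-1}$ statement with $s=1$. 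Since $\mathbb B_d\subset\mathcal X$ by Assumption~\ref{ass:compact} and $\mathrm U_d$ is a probability measure, restriction to $\mathbb B_d$ gives
\[
\|f\|_H\le\sup_{\mathbf u\in\mathcal X}\|f(\mathbf u)\|.
\]
Hence $C(\mathcal X;\mathbb R^d)\to H$, $f\mapsto f|_{\mathbb B_d}$, is a bounded linear and therefore continuous map. If needed, the outer-measure version of the continuous mapping theorem handles measurability of the empirical maps.

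\emph{Step 2: the covariance operator.} Viewed in $H$, $\mathbb G:=\mathbb G_\varepsilon^{(2)}$ is a mean-zero Gaussian random element. Its sample paths are continuous on the compact set $\mathcal X$, so $\E\|\mathbb G\|_H^2\le\E\sup_{\mathcal X}\|\mathbb G\|^2<\infty$. Finiteness of this second moment is the one point needing care. I would get it from Fernique's theorem applied to the Gaussian element $\mathbb G$ of the separable Banach space $C(\mathcal X;\mathbb R^d)$, which even gives exponential moments of $\sup\|\mathbb G\|$.

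Given this moment bound, the covariance operator is
\[
\mathcal K_\varepsilon^{(2)}h=\E[\langle\mathbb G,h\rangle_H\,\mathbb G].
\]
It is self-adjoint and nonnegative. It is also trace class, with
\[
\operatorname{tr}\mathcal K_\varepsilon^{(2)}=\E\|\mathbb G\|_H^2<\infty,
\]
hence compact. The spectral theorem then gives an orthonormal basis $(e_k)$ of $\overline{\operatorname{ran}}\,\mathcal K_\varepsilon^{(2)}$, with eigenvalues $\omega_k^{(2)}\ge0$ and $\sum_k\omega_k^{(2)}<\infty$. The component of $\mathbb G$ orthogonal to this range has zero variance, so it vanishes almost surely.

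\emph{Step 3: the expansion.} Set $\xi_k:=\langle\mathbb G,e_k\rangle_H$. Each finite family $(\xi_1,\dots,\xi_K)$ is jointly Gaussian, because every finite linear combination is a continuous linear functional of the Gaussian element $\mathbb G$. Their covariances are
\[
\Cov(\xi_j,\xi_k)=\langle\mathcal K_\varepsilon^{(2)}e_j,e_k\rangle_H=\omega_k^{(2)}\delta_{jk}.
\]
So the $\xi_k$ are independent with $\xi_k\sim N(0,\omega_k^{(2)})$. For $\omega_k^{(2)}>0$ write $\xi_k=\sqrt{\omega_k^{(2)}}\,Z_k$ with $Z_k$ i.i.d.\ standard normal; zero eigenvalues contribute nothing. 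Parseval's identity gives
\[
\|\mathbb G\|_H^2=\sum_k\xi_k^2=\sum_k\omega_k^{(2)}Z_k^2\quad\text{almost surely}.
\]
This equality holds for the limit process itself, not only in distribution. The series converges in $L^1$ because $\sum_k\omega_k^{(2)}<\infty$, which completes the distributional identity.
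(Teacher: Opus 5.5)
Your proposal is correct and follows the same route as the paper. The first claim comes from the continuous mapping theorem applied to $f\mapsto\|f\|^2_{L^2(\mathbb B_d,\mathrm U_d;\mathbb R^d)}$, and the series representation from the Karhunen--Lo\`eve expansion of $\mathbb G_\varepsilon^{(2)}$. You also supply details the paper takes for granted: finiteness of $\E\|\mathbb G_\varepsilon^{(2)}\|^2$ via Fernique's theorem, which gives the trace-class property, and the explicit Parseval argument with independent Gaussian coordinates $\langle\mathbb G_\varepsilon^{(2)},e_k\rangle$.
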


The limiting law in Theorem~\ref{thm:null-limit} is generally non-pivotal,
because it depends on the covariance operator of the limiting EOT map process.
The next remark clarifies the two-sample covariance structure and relates it to
the one-sample EOT map limit.

\begin{remark}[Covariance operator in the null limit]
	\label{rem:null-covariance}
	The weights \((\omega_k^{(2)})_{k\ge1}\) in
	Theorem~\ref{thm:null-limit} are the eigenvalues of the covariance operator
	of the limiting Gaussian field \(\mathbb G_\varepsilon^{(2)}\). Under the
	null hypothesis, write the common distribution as
	\(
	\mathrm P=\mathrm Q=\mathrm P_0.
	\)
	Then
	\(
	\mathbb G_\varepsilon^{(2)}
	=
	\sqrt{1-\lambda}\,\mathbb G_\varepsilon^{\mathrm P}
	-
	\sqrt{\lambda}\,\mathbb G_\varepsilon^{\mathrm Q},
	\)
	where \(\mathbb G_\varepsilon^{\mathrm P}\) and
	\(\mathbb G_\varepsilon^{\mathrm Q}\) are independent copies of the
	one-sample EOT-map Gaussian limit associated with the common law
	\(\mathrm P_0\).
	
	Let \(\mathcal K_{\varepsilon,\mathrm P_0}\) denote the covariance operator
	of this one-sample Gaussian limit in
	\(L^2(\mathbb B_d,\mathrm U_d;\mathbb R^d)\). Then the covariance operator
	\(\mathcal K_\varepsilon^{(2)}\) of \(\mathbb G_\varepsilon^{(2)}\) satisfies
	\(
	\mathcal K_\varepsilon^{(2)}
	=
	(1-\lambda)\mathcal K_{\varepsilon,\mathrm P_0}
	+
	\lambda\mathcal K_{\varepsilon,\mathrm P_0}
	=
	\mathcal K_{\varepsilon,\mathrm P_0}.
	\)
	Thus, with the normalization \(nm/(n+m)\), the two-sample null covariance
	spectrum coincides with the one-sample EOT-map covariance spectrum under
	the common null distribution \(\mathrm P_0\). The linearized
	Schr\"odinger-system representation of \(\mathcal K_{\varepsilon,\mathrm P_0}\)
	is given in Appendix~\ref{app:linearized-covariance-operator}.
\end{remark}

Next we turn to fixed alternatives, in which the two empirical maps
converge to two distinct population maps whenever \(\mathrm P\ne\mathrm Q\).
Consequently, the unscaled squared map distance converges to the population
discrepancy \(\mathcal D_\varepsilon(\mathrm P,\mathrm Q)\), while the statistic
itself diverges because the effective sample size \(nm/(n+m)\) tends to
infinity.

\begin{theorem}[Divergence under fixed alternatives]
	\label{thm:consistency}
	Suppose that Assumptions~\ref{ass:compact}, \ref{ass:fixed-epsilon}, and
	\ref{ass:sample-size-balance} hold. Let \(\mathbb P_{\mathrm P,\mathrm Q}\)
	denote the joint law of two independent samples
	\({\mathbf X}_1,\ldots,{\mathbf X}_n\overset{\mathrm{i.i.d.}}{\sim}\mathrm P\) and
	\({\mathbf Y}_1,\ldots,{\mathbf Y}_m\overset{\mathrm{i.i.d.}}{\sim}\mathrm Q\). Under any fixed
	alternative \(\mathrm P\ne \mathrm Q\), we have
	\(
	\frac{\mathcal T_{n,m}}{nm/(n+m)}
	\overset{\mathbb P_{\mathrm P,\mathrm Q}}{\longrightarrow}
	\mathcal D_\varepsilon(\mathrm P,\mathrm Q).
	\)
	In particular,
	\(
	\mathcal T_{n,m}\to\infty \) in \(\mathbb P_{\mathrm P,\mathrm Q}\text{-probability}.
	\)
\end{theorem}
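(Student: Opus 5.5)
The idea is to show that each empirical map converges to its own population map in $L^2(\mathbb B_d,\mathrm U_d;\mathbb R^d)$. The ratio $\mathcal T_{n,m}/(nm/(n+m))$ is then a continuous functional of the pair of maps, so it converges to $\mathcal D_\varepsilon(\mathrm P,\mathrm Q)$. Lemma~\ref{lem:injectivity} gives $\mathcal D_\varepsilon(\mathrm P,\mathrm Q)>0$ under $\mathrm P\ne\mathrm Q$, and divergence of $\mathcal T_{n,m}$ follows.

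\textbf{Step 1 (one-sample consistency).} I would show
\[
\|\widehat T^{\mathrm P}_{\varepsilon,n}-T_{\varepsilon,\mathrm P}\|_{L^2(\mathrm U_d)}\to0
\quad\text{and}\quad
\|\widehat T^{\mathrm Q}_{\varepsilon,m}-T_{\varepsilon,\mathrm Q}\|_{L^2(\mathrm U_d)}\to0
\]
in probability, without assuming $\mathrm P=\mathrm Q$. The most economical route is to reuse the one-sample EOT-map CLT of \citet{goldfeld2024limit}, which underlies Theorem~\ref{thm:two-sample-functional-clt}. It gives $\sqrt n(\widehat T^{\mathrm P}_{\varepsilon,n}-T_{\varepsilon,\mathrm P})\rightsquigarrow\mathbb G^{\mathrm P}_\varepsilon$ in $C^{s-1}(\mathcal X;\mathbb R^d)$ for each law on $\mathcal X$ separately, so it does not depend on $H_0$. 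By Prohorov/tightness the scaled process is $O_{\mathbb P}(1)$, hence the unscaled difference is $O_{\mathbb P}(n^{-1/2})$ in sup norm. Sup norm on $\mathbb B_d\subset\mathcal X$ dominates the $L^2(\mathrm U_d)$ norm, which gives the claim. The same argument applies to $\mathrm Q$ with $m\to\infty$; Assumption~\ref{ass:sample-size-balance} ensures both $n,m\to\infty$.

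If one prefers to avoid the CLT, there is an alternative route. Stability of Schr\"odinger potentials under weak convergence (e.g.\ \citet{nutz2023stability}) handles the potentials, and since $\mathrm P_n\to\mathrm P$ weakly a.s.\ on the compact $\mathcal X$, one gets uniform convergence of $\psi$. The barycentric formula is a ratio of integrals of bounded, equicontinuous integrands with denominator bounded away from zero, because the potentials are uniformly bounded on compact sets. These facts pass the convergence to $T$ uniformly on $\mathbb B_d$.

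\textbf{Step 2 (continuous mapping).} Write $\Delta_{n,m}=\widehat T^{\mathrm P}_{\varepsilon,n}-\widehat T^{\mathrm Q}_{\varepsilon,m}$ and $\Delta=T_{\varepsilon,\mathrm P}-T_{\varepsilon,\mathrm Q}$. By the triangle inequality $\|\Delta_{n,m}-\Delta\|_{L^2}\to0$ in probability. Then
\[
\bigl|\|\Delta_{n,m}\|^2-\|\Delta\|^2\bigr|\le\|\Delta_{n,m}-\Delta\|\,(\|\Delta_{n,m}\|+\|\Delta\|)\to0,
\]
and $\|\Delta\|$ is finite because all maps take values in the bounded set $\mathcal X$ (barycenters of measures on the convex ball $\mathcal X$). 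This yields $\mathcal T_{n,m}/(nm/(n+m))\to\mathcal D_\varepsilon(\mathrm P,\mathrm Q)$ in probability.

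\textbf{Step 3 (divergence).} By Lemma~\ref{lem:injectivity}, $\mathrm P\neq\mathrm Q$ implies $T_{\varepsilon,\mathrm P}\neq T_{\varepsilon,\mathrm Q}$ on a set of positive $\mathrm U_d$-measure, so $\mathcal D:=\mathcal D_\varepsilon(\mathrm P,\mathrm Q)>0$. For any $M$, once $nm/(n+m)\cdot\mathcal D/2>M$ we have
\[
\mathbb P\bigl(\mathcal T_{n,m}\le M\bigr)\le\mathbb P\bigl(\mathcal T_{n,m}/(nm/(n+m))<\mathcal D/2\bigr)\to0,
\]
using $nm/(n+m)\to\infty$.

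\textbf{Main obstacle.} The main work is in Step 1: making sure the one-sample EOT-map limit theory (or the stability result) applies to an arbitrary, possibly non-null, compactly supported $\mathrm P$. I expect this to hold because Assumption~\ref{ass:compact} is exactly the hypothesis used there. The rest is elementary.
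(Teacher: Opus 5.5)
Your proposal is correct and follows the same route as the paper. Both go through $L^2$-consistency of each empirical map, then continuous mapping for the squared $L^2$-norm, then Lemma~\ref{lem:injectivity} to get $\mathcal D_\varepsilon(\mathrm P,\mathrm Q)>0$, and finally $nm/(n+m)\to\infty$. The only difference is that you justify the one-sample consistency and the continuity step explicitly; your tightness argument from the \citet{goldfeld2024limit} CLT does indeed not require $H_0$. The paper simply invokes $L^2$-consistency of the empirical maps without further argument.
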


By Lemma~\ref{lem:injectivity}, the population discrepancy satisfies
\(\mathcal D_\varepsilon(\mathrm P,\mathrm Q)>0\) whenever
\(\mathrm P\ne\mathrm Q\). Hence Theorem~\ref{thm:consistency} implies that
any test rejecting for sufficiently large values of \(\mathcal T_{n,m}\) is
consistent against fixed alternatives, provided that the critical values remain
bounded in probability. The bootstrap procedure introduced below will provide
such a calibration under the null.

\subsection{Local alternatives}
\label{subsec:local-alternatives}

We next examine the behavior of the proposed test under contiguous local
alternatives. This analysis describes the sensitivity of the statistic to
departures from the null that shrink at the same rate as the stochastic
fluctuations of the empirical maps. Suppose that
Assumptions~\ref{ass:compact}, \ref{ass:fixed-epsilon}, and
\ref{ass:sample-size-balance} hold, and let \(\mathrm P_0\) denote the common
null distribution. Set \(r_{n,m}:=nm/(n+m)\).
Let \(h_{\mathrm P}\) and \(h_{\mathrm Q}\) be bounded measurable functions. For
sufficiently large \(n,m\), consider local alternatives of the form
\begin{equation}\label{eq:local-alternatives}
	\frac{d\mathrm P_{n,m,h_{\mathrm P}}}{d\mathrm P_0}
	=
	1+\frac{h_{\mathrm P}}{\sqrt{r_{n,m}}},
	\qquad
	\frac{d\mathrm Q_{n,m,h_{\mathrm Q}}}{d\mathrm P_0}
	=
	1+\frac{h_{\mathrm Q}}{\sqrt{r_{n,m}}}.
\end{equation}
Since \(\mathrm P_{n,m,h_{\mathrm P}}\) and
\(\mathrm Q_{n,m,h_{\mathrm Q}}\) are probability measures, necessarily
\(
\int h_{\mathrm P}\,d\mathrm P_0=0
\) and \(
\int h_{\mathrm Q}\,d\mathrm P_0=0.
\)

The centering conditions ensure that
\(h_{\mathrm P}\mathrm P_0\) and \(h_{\mathrm Q}\mathrm P_0\) are signed
perturbations with total mass zero. The use of the effective sample size
\(r_{n,m}\) matches the normalization in the two-sample statistic, so that these
alternatives generate a nondegenerate deterministic drift in the limiting map
process.

The following result shows that, under such local alternatives \eqref{eq:local-alternatives}, the empirical
map difference has the same Gaussian fluctuation as under the null, shifted by
a deterministic first-order drift determined by the difference
\(h_{\mathrm P}-h_{\mathrm Q}\).

\begin{theorem}[Local asymptotic distribution]
	\label{thm:local-power}
	Suppose that the assumptions of
	Theorem~\ref{thm:two-sample-functional-clt} hold. Let
	\(
	\mathbf X_1,\ldots,\mathbf X_n
	\overset{\mathrm{i.i.d.}}{\sim}
	\mathrm P_{n,m,h_{\mathrm P}}\) and \(
	\mathbf Y_1,\ldots,\mathbf Y_m
	\overset{\mathrm{i.i.d.}}{\sim}
	\mathrm Q_{n,m,h_{\mathrm Q}},
	\)
	with the two samples independent. Let \(\dot T_{\varepsilon,\mathrm P_0}\) denote the Hadamard derivative of the map
	functional \(\mathrm P\mapsto T_{\varepsilon,\mathrm P}\) at \(\mathrm P_0\). Then
	\[
	\sqrt{\frac{nm}{n+m}}
	\left(
	\widehat T_{\varepsilon,n}^{\mathrm P}
	-
	\widehat T_{\varepsilon,m}^{\mathrm Q}
	\right)
	\rightsquigarrow
	\mathbb G_{\varepsilon}^{(2)}
	+
	\Delta_{\varepsilon,h_{\mathrm P},h_{\mathrm Q}}^{(2)}
	\]
	in \(L^2(\mathbb B_d,\mathrm U_d;\mathbb R^d)\), where
	\[
	\mathbb G_{\varepsilon}^{(2)}
	=
	\sqrt{1-\lambda}\,
	\dot T_{\varepsilon,\mathrm P_0}
	\bigl[
	\mathbb G_{\mathrm P_0}^{\mathrm P}
	\bigr]
	-
	\sqrt{\lambda}\,
	\dot T_{\varepsilon,\mathrm P_0}
	\bigl[
	\mathbb G_{\mathrm P_0}^{\mathrm Q}
	\bigr]
	\]
	and
	\[
	\Delta_{\varepsilon,h_{\mathrm P},h_{\mathrm Q}}^{(2)}
	=
	\dot T_{\varepsilon,\mathrm P_0}
	\bigl[
	(h_{\mathrm P}-h_{\mathrm Q})\mathrm P_0
	\bigr].
	\]
	Consequently,
	\[
	\mathcal T_{n,m}
	\rightsquigarrow
	\int_{\mathbb B_d}
	\left\|
	\mathbb G_{\varepsilon}^{(2)}(\mathbf u)
	+
	\Delta_{\varepsilon,h_{\mathrm P},h_{\mathrm Q}}^{(2)}(\mathbf u)
	\right\|^2
	\,d\mathrm U_d(\mathbf u).
	\]
\end{theorem}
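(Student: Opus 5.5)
We will prove this by the functional delta method along a triangular array of laws (the local alternatives change with $n,m$). The first step is to fix the framework of \citet{goldfeld2024limit}. There, $\mathrm P\mapsto T_{\varepsilon,\mathrm P}$ is Hadamard differentiable at $\mathrm P_0$, tangentially to a suitable set. The underlying space is $\ell^\infty(\mathcal F)$, with $\mathcal F$ a Donsker class of smooth functions on $\mathcal X$: the unit ball of $C^{s}(\mathcal X)$ for $s$ large, which contains $\mathbf x\mapsto \exp\{(\psi(\mathbf x)-\frac12\|\mathbf u-\mathbf x\|^2)/\varepsilon\}$ and $\mathbf x$ times it, uniformly over $\mathbf u\in\mathbb B_d$ and over bounded potentials. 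The derivative $\dot T_{\varepsilon,\mathrm P_0}$ is continuous and linear into $C^{s-1}$, hence into $L^2(\mathbb B_d,\mathrm U_d;\mathbb R^d)$.

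The second step is to write the empirical measures under the alternatives as
\[
\sqrt{n}(\mathrm P_n-\mathrm P_0)=\sqrt{n}(\mathrm P_n-\mathrm P_{n,m,h_{\mathrm P}})+\sqrt{n/r_{n,m}}\,h_{\mathrm P}\mathrm P_0 .
\]
The first term converges to $\mathbb G^{\mathrm P}_{\mathrm P_0}$ in $\ell^\infty(\mathcal F)$. We justify this either by the uniform-in-law Donsker theorem for triangular arrays, whose conditions hold since $\mathrm P_{n,m,h_{\mathrm P}}\to\mathrm P_0$ in total variation and $\mathcal F$ is uniformly bounded with a finite uniform entropy integral, or by Le Cam's third lemma. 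For the latter, the product of densities is LAN, with log-likelihood ratio asymptotically $N(-\frac12\sigma^2,\sigma^2)$ for
\[
\sigma^2=(1-\lambda)^{-1}\!\int h_{\mathrm P}^2\,d\mathrm P_0\cdot\lambda\ \text{-type constants},
\]
which gives contiguity. Note $n/r_{n,m}=(n+m)/m\to 1/(1-\lambda)$, so the deterministic term tends to $(1-\lambda)^{-1/2}h_{\mathrm P}\mathrm P_0$. Both the random limit and the drift lie in the tangent set: the drift is a bounded signed measure of mass zero, which acts continuously on $\mathcal F$.

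The third step applies the delta method in the form valid for $t_n\to 0$ and perturbations $h_n\to h$. Writing $\mathrm P_n=\mathrm P_0+n^{-1/2}H_n$ with $H_n\rightsquigarrow \mathbb G^{\mathrm P}+(1-\lambda)^{-1/2}h_{\mathrm P}\mathrm P_0$, Hadamard differentiability together with the extended continuous mapping theorem gives
\[
\sqrt n\,(T_{\varepsilon,\mathrm P_n}-T_{\varepsilon,\mathrm P_0})\rightsquigarrow \dot T_{\varepsilon,\mathrm P_0}[\mathbb G^{\mathrm P}]+(1-\lambda)^{-1/2}\dot T_{\varepsilon,\mathrm P_0}[h_{\mathrm P}\mathrm P_0].
\]
The analogous statement holds for $\mathrm Q$ with $m$ and $\lambda^{-1/2}$. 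The crucial point is that both maps are centred at the same $T_{\varepsilon,\mathrm P_0}$, so no bias from the moving population maps appears. Multiplying by $\sqrt{r_{n,m}/n}\to\sqrt{1-\lambda}$ and $\sqrt{r_{n,m}/m}\to\sqrt{\lambda}$ respectively, and using independence of the samples to obtain joint convergence, we subtract. The $\mathrm P_0$-centring cancels, and the drifts combine to $\dot T_{\varepsilon,\mathrm P_0}[h_{\mathrm P}\mathrm P_0]-\dot T_{\varepsilon,\mathrm P_0}[h_{\mathrm Q}\mathrm P_0]$. By linearity this equals $\Delta^{(2)}$, and the Gaussian part is the stated $\mathbb G_\varepsilon^{(2)}$. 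Convergence in $C^{s-1}$ implies convergence in $L^2(\mathbb B_d,\mathrm U_d)$, and the continuous mapping theorem for $f\mapsto\|f\|_{L^2}^2$ then yields the limit law of $\mathcal T_{n,m}$.

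The main obstacle is the second step: weak convergence of the empirical process under sampling from a changing law. This is needed in a space where the Hadamard derivative of \citet{goldfeld2024limit} applies, and it must also be checked that the drift $h\mathrm P_0$ lies in the tangent set. We would try Le Cam's third lemma first. Its joint normality of the process with the log-likelihood ratio follows from a bivariate CLT under $\mathrm P_0$, since $\log(1+h/\sqrt r)=h/\sqrt r-h^2/(2r)+o(1/r)$ uniformly for bounded $h$. The covariance $\mathbb E_{\mathrm P_0}[f h_{\mathrm P}]=(h_{\mathrm P}\mathrm P_0)(f)$ then produces exactly the shift claimed above. Because the third lemma transfers the null limit of the whole process, the delta method can be applied under the null and the result passed to the alternatives, which avoids any triangular-array Donsker argument.
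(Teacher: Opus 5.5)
Your proposal is correct and follows the same route as the paper. The paper's proof asserts the local empirical-process limit with drift $(h_{\mathrm P}-h_{\mathrm Q})\mathrm P_0$, applies the functional delta method for $\mathrm R\mapsto T_{\varepsilon,\mathrm R}$ at $\mathrm P_0$, and finishes with the continuous mapping theorem for the squared $L^2$-norm. Your version is more careful on two points. First, you apply the delta method to each sample separately and combine the two limits via independence, whereas the paper applies it loosely to the combined difference $(\mathrm P_n-\mathrm P_0)-(\mathrm Q_m-\mathrm P_0)$, of which $T_{\varepsilon,\mathrm P_n}-T_{\varepsilon,\mathrm Q_m}$ is not a function. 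Second, you justify the contiguous limit and check that the drift lies in the tangent cone, which the paper merely asserts. The one slip is the garbled variance in your Le Cam step; it should read $\sigma^2=(1-\lambda)^{-1}\int h_{\mathrm P}^2\,d\mathrm P_0+\lambda^{-1}\int h_{\mathrm Q}^2\,d\mathrm P_0$.
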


The local drift in Theorem~\ref{thm:local-power} depends only on the difference
between the two tangent directions. Thus local alternatives with
\(h_{\mathrm P}=h_{\mathrm Q}\) correspond to a common first-order perturbation
of both samples and are indistinguishable at the two-sample level. In contrast,
when \(h_{\mathrm P}-h_{\mathrm Q}\) is nonzero, the limiting statistic is a
noncentral Gaussian quadratic form. Thus the local behavior of the test is governed by the image of the tangent
difference \((h_{\mathrm P}-h_{\mathrm Q})\mathrm P_0\) under the derivative of the EOT-map functional.

We now translate this limiting distribution into a local power statement. Let
\(
\mathcal H
:=
L^2(\mathbb B_d,\mathrm U_d;\mathbb R^d).
\)
For a nominal level \(\alpha\in(0,1)\), let \(c_\alpha\) be the
\((1-\alpha)\)-quantile of the null limit
\(\|\mathbb G_{\varepsilon}^{(2)}\|_{\mathcal H}^2\), chosen so that
\(
\Prob
\left(
\|\mathbb G_{\varepsilon}^{(2)}\|_{\mathcal H}^2
>
c_\alpha
\right)
=
\alpha.
\)

\begin{corollary}[Local asymptotic power]
	\label{cor:local-power}
	Suppose that Assumptions~\ref{ass:compact}, \ref{ass:fixed-epsilon}, and
	\ref{ass:sample-size-balance} hold. Let \(G:=\mathbb G_{\varepsilon}^{(2)}\) and \(\Delta:=
	\Delta_{\varepsilon,h_{\mathrm P},h_{\mathrm Q}}^{(2)}\).
	Under the local alternatives in Theorem~\ref{thm:local-power}, the
	asymptotic rejection probability of the level-\(\alpha\) test is
	\[
	\beta(h_{\mathrm P},h_{\mathrm Q})
	=
	\Prob
	\left(
	\|G+\Delta\|_{\mathcal H}^2>c_\alpha
	\right),
	\]
	which satisfies
	\(\beta(h_{\mathrm P},h_{\mathrm Q})
	\ge
	\alpha\). Moreover, if \(h_{\mathrm P}\ne h_{\mathrm Q}\) \(\mathrm P_0\text{-a.s.}\),
	then \(\beta(h_{\mathrm P},h_{\mathrm Q}) >
	\alpha\).
\end{corollary}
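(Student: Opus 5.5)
The plan has three steps: identify the limiting power via Theorem~\ref{thm:local-power}, prove $\beta\ge\alpha$ by Anderson's inequality, and then upgrade to strict inequality. The strict step needs two things: a spectral argument for the Gaussian part, and injectivity of the derivative $\dot T_{\varepsilon,\mathrm P_0}$ so that the drift is nonzero.

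\textbf{Step 1: the power formula.} By Theorem~\ref{thm:local-power}, $\mathcal T_{n,m}\rightsquigarrow\|G+\Delta\|_{\mathcal H}^2$ under the local alternatives. The test rejects when $\mathcal T_{n,m}>c_\alpha$ (or with a critical value converging in probability to $c_\alpha$, by Slutsky). So the portmanteau theorem gives the formula for $\beta(h_{\mathrm P},h_{\mathrm Q})$, provided $\|G+\Delta\|_{\mathcal H}^2$ has no atom at $c_\alpha$.

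For the atom, expand in an orthonormal eigenbasis $(e_k)$ of $\mathcal K_\varepsilon^{(2)}$ and write $\delta_k=\langle\Delta,e_k\rangle$. Then
\[
\|G+\Delta\|_{\mathcal H}^2=\sum_{k:\omega_k^{(2)}>0}\bigl(\sqrt{\omega_k^{(2)}}Z_k+\delta_k\bigr)^2+\sum_{k:\omega_k^{(2)}=0}\delta_k^2 .
\]
This is a constant plus a nondegenerate noncentral Gaussian quadratic form, which has a continuous law as long as $\mathcal K_\varepsilon^{(2)}\neq0$. I will assume $\mathcal K_\varepsilon^{(2)}\neq0$; it is also implicit in the choice of $c_\alpha$ with exact level $\alpha$. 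By the same argument $c_\alpha>0$.

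\textbf{Step 2: $\beta\ge\alpha$.} The set $K=\{x\in\mathcal H:\|x\|^2\le c_\alpha\}$ is convex and symmetric, and $G$ is a centered Gaussian element of the separable Hilbert space $\mathcal H$. Anderson's inequality therefore gives $\Prob(G+\Delta\in K)\le\Prob(G\in K)=1-\alpha$, that is, $\beta\ge\alpha$.

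\textbf{Step 3: strictness when $\Delta\neq0$.} There are two cases.
\begin{itemize}
\item If $\delta_k\neq0$ for some $k$ with $\omega_k^{(2)}=0$, the representation above is the null form, stochastically shifted by the positive constant $\sum_{\omega_k=0}\delta_k^2$. Since the null law has positive density around $c_\alpha$, the rejection probability strictly increases.
\item Otherwise some $\delta_k\neq0$ with $\omega_k^{(2)}>0$. Condition on all the other $Z_j$ and apply Anderson's inequality to the remaining coordinates. The one-dimensional map $a\mapsto\Prob(|Z_k+a|>t)$ is strictly increasing in $|a|$ for every $t>0$. The event where the residual threshold $t^2=(c_\alpha-\text{rest})/\omega_k^{(2)}$ is positive has positive probability, because $c_\alpha>0$. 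Integrating over the other coordinates gives the strict inequality.
\end{itemize}

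\textbf{Main obstacle: showing $\Delta\neq0$.} This requires $\dot T_{\varepsilon,\mathrm P_0}[\eta]\neq0$ for $\eta=(h_{\mathrm P}-h_{\mathrm Q})\mathrm P_0\neq0$. I read the hypothesis as $h_{\mathrm P}\neq h_{\mathrm Q}$ on a set of positive $\mathrm P_0$-measure.

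By \eqref{eq:eot-map-gradient}, $\dot T_{\varepsilon,\mathrm P_0}[\eta]=-\nabla\dot\varphi$, where $(\dot\varphi,\dot\psi)$ solves the linearized Schr\"odinger system. Suppose $\Delta=0$. Then $\dot\varphi$ is constant on $\mathbb B_d$, and after renormalization we may take $\dot\varphi\equiv0$. The linearized second equation then forces $\dot\psi\equiv0$ on $\mathcal X$. Substituting into the linearized first equation gives
\[
\int e^{(\psi_{\mathrm P_0}(\mathbf x)-\frac12\|\mathbf u-\mathbf x\|^2)/\varepsilon}\,d\eta(\mathbf x)=0
\qquad\text{for all }\mathbf u\in\mathbb B_d .
\]
This says the Laplace transform $\mathbf u\mapsto\int e^{\langle\mathbf u,\mathbf x\rangle/\varepsilon}\,d\tilde\eta(\mathbf x)$ vanishes on the ball, where $\tilde\eta$ is the compactly supported finite signed measure $e^{(\psi_{\mathrm P_0}(\mathbf x)-\|\mathbf x\|^2/2)/\varepsilon}\,d\eta(\mathbf x)$. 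That transform is real-analytic in $\mathbf u$, so it vanishes identically. Uniqueness of Laplace transforms of compactly supported measures then gives $\tilde\eta=0$, hence $\eta=0$, a contradiction.

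I expect the delicate points to be justifying that the linearized system is exactly this one, and handling the normalization freedom in the potentials. I would mirror the argument used for Lemma~\ref{lem:injectivity} and the derivative formula in Appendix~\ref{app:linearized-covariance-operator}.
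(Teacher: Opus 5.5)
Your proposal is correct. Steps 1 and 2 coincide with the paper. Your proof that $\Delta\neq0$ is the paper's Lemma~\ref{lem:linearized-injectivity}: a constant $a_\gamma$ forces $b_\gamma$ to be constant, which reduces the linearized first equation to $\int K_0(\mathbf u,\cdot)\,d\gamma=0$ on the ball, and analyticity plus transform uniqueness finishes. The paper phrases this last step as a Gaussian convolution followed by Fourier inversion, you as a Laplace transform; the two are equivalent.

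The genuine difference is the strict inequality. The paper first proves Lemma~\ref{lem:drift-nondegeneracy}, namely $\langle\mathcal K_\varepsilon^{(2)}\Delta,\Delta\rangle_{\mathcal H}>0$, using the adjoint identity $\Var\langle G,f\rangle_{\mathcal H}=\|\dot\Phi_{\varepsilon,0}^{\,*}f\|_{L^2(\mathrm P_0)}^2$. It then invokes a strict form of Anderson's inequality from Lewandowski (1995) without stating that result's hypotheses. You bypass both steps. The Karhunen--Lo\`eve expansion, your two-case split and a one-dimensional monotonicity fact together show that any nonzero drift gives $\beta>\alpha$ once $\mathcal K_\varepsilon^{(2)}\neq0$. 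This is more elementary and self-contained. You also check that $c_\alpha$ is a continuity point of the law of $\|G+\Delta\|_{\mathcal H}^2$, which the paper's proof of this corollary omits.

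What the paper's route buys is structure. Since $\mathcal K_\varepsilon^{(2)}=\dot\Phi_{\varepsilon,0}\dot\Phi_{\varepsilon,0}^{\,*}$ and $\Delta=\dot\Phi_{\varepsilon,0}(h_{\mathrm P}-h_{\mathrm Q})$, the drift lies in the Cameron--Martin space of $G$. So your first case, a drift component in $\ker\mathcal K_\varepsilon^{(2)}$, never actually occurs.

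Two details need tightening.
\begin{itemize}
\item In that first case, if $\Delta$ also has components along positive eigenvalues, the first sum is not the null form. Apply Anderson's inequality to it before using the deterministic shift.
\item In both cases, the fact you actually need is small-ball positivity: $\Prob(R<c)>0$ for every $c>0$ and every centered Gaussian quadratic form $R$ (prove it by truncating the series and bounding the tail with Markov's inequality). The bare fact $c_\alpha>0$ does not give this.
\end{itemize}
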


Corollary~\ref{cor:local-power} shows that the test has nontrivial local power
against \(r_{n,m}^{-1/2}\)-perturbations for which the two samples move in
different tangent directions. The inequality
\(\beta(h_{\mathrm P},h_{\mathrm Q})\ge\alpha\) follows from the Anderson
inequality for centered Gaussian measures, while the strict inequality reflects
the nonzero deterministic drift induced by
\(h_{\mathrm P}-h_{\mathrm Q}\).

\subsection{Weighted multiplier bootstrap}
\label{subsec:weighted-bootstrap}

The null limit in Theorem~\ref{thm:null-limit} is generally non-pivotal, since
it depends on the unknown covariance operator of the limiting EOT map process.
Rather than estimating this operator and its eigenvalues directly, we use a
weighted multiplier bootstrap to approximate the null distribution of
\(\mathcal T_{n,m}\). The bootstrap procedure reweights the two empirical target
measures separately while keeping the common source measure fixed. 

Let
\(
\xi_1,\ldots,\xi_n,
\qquad
\zeta_1,\ldots,\zeta_m
\)
be two independent sequences of nonnegative i.i.d.\ multiplier variables,
independent of the data, satisfying
\(\E[\xi_i]=\E[\zeta_j]=1,\ \Var(\xi_i)=\Var(\zeta_j)=1\),
and
\(\E[|\xi_i|^{2+\eta}]<\infty,\ \E[|\zeta_j|^{2+\eta}]<\infty\)
for some \(\eta>0\). A standard choice is the exponential multiplier with mean
one.

Define the normalized weights
\(W_i^{\mathrm P,*} = \xi_i / \sum_{\ell=1}^n \xi_\ell,\ i=1,\ldots,n\),
and
\(W_j^{\mathrm Q,*} = \zeta_j / \sum_{\ell=1}^m \zeta_\ell,\ j=1,\ldots,m\).
The weighted empirical measures are
\[
\mathrm P_n^*
=
\sum_{i=1}^n
W_i^{\mathrm P,*}\delta_{\mathbf X_i},
\qquad
\mathrm Q_m^*
=
\sum_{j=1}^m
W_j^{\mathrm Q,*}\delta_{\mathbf Y_j}.
\]
Let
\(\widehat T_{\varepsilon,n}^{\mathrm P,*} := T_{\varepsilon,\mathrm P_n^*},\
\widehat T_{\varepsilon,m}^{\mathrm Q,*} := T_{\varepsilon,\mathrm Q_m^*}\)
be the corresponding weighted-bootstrap entropic transport maps. Such weighted bootstrap schemes are standard for empirical processes and smooth
functionals thereof
\citep{praestgaard1993exchangeably,vaart1998asymptotic,vaartwellner1996weak,kosorok2008introduction,liu2022restimators}.

To mimic the null fluctuation of the empirical map difference, we center each
bootstrap map at its original empirical counterpart and define
\begin{equation}\label{eq:bootstrap-process-statistic}
	\mathbb Z_{n,m}^*
	:=
	\sqrt{\frac{nm}{n+m}}
	\left\{
	\left(
	\widehat T_{\varepsilon,n}^{\mathrm P,*}
	-
	\widehat T_{\varepsilon,n}^{\mathrm P}
	\right)
	-
	\left(
	\widehat T_{\varepsilon,m}^{\mathrm Q,*}
	-
	\widehat T_{\varepsilon,m}^{\mathrm Q}
	\right)
	\right\}.
\end{equation}

The bootstrap analogue of the test statistic is then
\[
\mathcal T_{n,m}^*
:=
\int_{\mathbb B_d}
\left\|
\mathbb Z_{n,m}^*(\mathbf u)
\right\|^2
\,d\mathrm U_d(\mathbf u).
\]

The resulting procedure is summarized in Algorithm~\ref{alg:eot-map-two-sample-test}
in the Appendix. The following theorem states that the conditional distribution of the bootstrap
process consistently estimates the null distribution of the limiting Gaussian
map process.

\begin{theorem}[Weighted bootstrap validity]
	\label{thm:weighted-bootstrap}
	Suppose that the assumptions of
	Theorem~\ref{thm:two-sample-functional-clt} hold, and let
	\(\mathbb Z_{n,m}^*\) and \(\mathcal T_{n,m}^*\) be constructed by the
	weighted bootstrap procedure described in \eqref{eq:bootstrap-process-statistic}.
	Then, under
	\(H_0:\mathrm P=\mathrm Q=\mathrm P_0\), conditionally on the data,
	\[
	\mathbb Z_{n,m}^*
	\rightsquigarrow_{\Prob}
	\mathbb G_\varepsilon^{(2)}
	\qquad
	\text{in }
	L^2(\mathbb B_d,\mathrm U_d;\mathbb R^d),
	\]
	where \(\mathbb G_\varepsilon^{(2)}\) is the same Gaussian limit as in
	Theorem~\ref{thm:two-sample-functional-clt}. Consequently,
	\[
	\mathcal T_{n,m}^*
	\rightsquigarrow_{\Prob}
	\int_{\mathbb B_d}
	\left\|
	\mathbb G_\varepsilon^{(2)}(\mathbf u)
	\right\|^2
	\,d\mathrm U_d(\mathbf u).
	\]
\end{theorem}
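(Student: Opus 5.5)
The plan is to use the functional delta method for the bootstrap, applied separately to each sample and then combined by independence. Write $\Phi:\mathrm P\mapsto T_{\varepsilon,\mathrm P}$ for the EOT-map functional. It is defined on probability measures supported in $\mathcal X$, viewed as elements of $\ell^\infty(\mathcal F)$ for a suitable Donsker class $\mathcal F$ of smooth functions on $\mathcal X$. Following \citet{goldfeld2024limit}, I take as given that $\Phi$ is Hadamard differentiable at $\mathrm P_0$ tangentially to the relevant set, with derivative $\dot T_{\varepsilon,\mathrm P_0}$ taking values in $C^{s-1}(\mathcal X;\mathbb R^d)$. This is the same input used in Theorem~\ref{thm:two-sample-functional-clt}. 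A natural choice of $\mathcal F$ is a ball of $C^k(\mathcal X)$ for $k$ large, or the class $\{\mathbf x\mapsto \mathbf x^{j}\exp\{(\psi(\mathbf x)-\frac12\|\mathbf u-\mathbf x\|^2)/\varepsilon\}\}$ indexed by $\mathbf u\in\mathbb B_d$ and bounded smooth $\psi$. Either way $\mathcal F$ is uniformly bounded and Donsker because $\mathcal X$ is compact (Assumption~\ref{ass:compact}).

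\textbf{Step 1: weighted empirical process.} The multipliers $\xi_i$ have mean one, unit variance and $2+\eta$ moments. By the exchangeably weighted bootstrap theorem of \citet{praestgaard1993exchangeably} (see also \citet{vaartwellner1996weak}, Thm.~3.6.13), conditionally on the data
\[
\sqrt n(\mathrm P_n^*-\mathrm P_n)\rightsquigarrow_{\Prob}\mathbb G_{\mathrm P_0}
\quad\text{in }\ell^\infty(\mathcal F).
\]
The normalization by $\sum_\ell\xi_\ell/n\to1$ a.s.\ is harmless here, since $\mathrm P_n^*-\mathrm P_n=(\bar\xi)^{-1}n^{-1}\sum(\xi_i-1)(\delta_{\mathbf X_i}-\mathrm P_n)$. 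The same statement holds for $\mathrm Q_m^*$, with an independent limit.

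\textbf{Step 2: bootstrap delta method.} I will apply the delta method for the bootstrap (\citet{vaartwellner1996weak}, Thm.~3.9.11). Its hypotheses are Hadamard differentiability at $\mathrm P_0$, the unconditional convergence $\sqrt n(\mathrm P_n-\mathrm P_0)\rightsquigarrow\mathbb G_{\mathrm P_0}$, and Step~1. It yields
\[
\sqrt n\bigl(T_{\varepsilon,\mathrm P_n^*}-T_{\varepsilon,\mathrm P_n}\bigr)\rightsquigarrow_{\Prob}\dot T_{\varepsilon,\mathrm P_0}[\mathbb G_{\mathrm P_0}]
\quad\text{in } C^{s-1}.
\]
The statement is in outer probability, so it covers possible measurability issues. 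One point needs care: both $\mathrm P_n$ and $\mathrm P_n^*$ must lie in the domain, i.e.\ be probability measures supported on $\mathcal X$. This holds because the weights are nonnegative and normalized. I expect this step to be the main obstacle, because Theorem~3.9.11 needs the derivative to exist in a sufficiently strong sense, along sequences $t_n\to0$ and $h_n\to h$ with $\mathrm P_0+t_nh_n$ in the domain. One would have to check that the Goldfeld et al.\ differentiability is of exactly this tangential form, uniformly enough to handle the random base points $\mathrm P_n$. My first attempt would be to write $T_{\varepsilon,\mathrm P_n^*}-T_{\varepsilon,\mathrm P_n}=[\Phi(\mathrm P_n^*)-\Phi(\mathrm P_0)]-[\Phi(\mathrm P_n)-\Phi(\mathrm P_0)]$ and apply the chain-rule form of the theorem, which only requires differentiability at $\mathrm P_0$.

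\textbf{Step 3: combine and conclude.} The two bootstrap maps depend on independent multipliers, and $\sqrt{nm/(n+m)}=\sqrt{1-\lambda}\sqrt n\,(1+o(1))=\sqrt{\lambda}\sqrt m\,(1+o(1))$ by Assumption~\ref{ass:sample-size-balance}. Joint conditional convergence of the pair then follows by independence, for example by working with bounded Lipschitz distances on product spaces. The linear map $(a,b)\mapsto a-b$ together with Slutsky gives
\[
\mathbb Z_{n,m}^*\rightsquigarrow_{\Prob}\sqrt{1-\lambda}\,\dot T_{\varepsilon,\mathrm P_0}[\mathbb G^{\mathrm P}_{\mathrm P_0}]-\sqrt\lambda\,\dot T_{\varepsilon,\mathrm P_0}[\mathbb G^{\mathrm Q}_{\mathrm P_0}].
\]
This limit equals $\mathbb G_\varepsilon^{(2)}$ by Remark~\ref{rem:null-covariance} and Theorem~\ref{thm:local-power}. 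Convergence in $C^{s-1}(\mathcal X;\mathbb R^d)$ passes to $L^2(\mathbb B_d,\mathrm U_d;\mathbb R^d)$, because the embedding is continuous: the sup norm dominates the $L^2$ norm. Finally, $f\mapsto\|f\|_{\mathcal H}^2$ is continuous, so the conditional continuous mapping theorem for the bootstrap (\citet{kosorok2008introduction}, Prop.~10.7) gives the stated limit for $\mathcal T^*_{n,m}$.
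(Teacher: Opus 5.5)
Your proposal is correct and follows the same route as the paper's proof: a weighted-bootstrap empirical-process CLT for each sample, the conditional functional delta method at $\mathrm P_0$, conditional Slutsky with $\sqrt{m/(n+m)}\to\sqrt{1-\lambda}$ and $\sqrt{n/(n+m)}\to\sqrt{\lambda}$, and then the conditional continuous mapping theorem for the squared $L^2$ norm. Your worry in Step~2 is already settled by the chain-rule decomposition you propose, which is why the bootstrap delta method needs Hadamard differentiability only at $\mathrm P_0$. Your remarks on the weight normalization, on the domain of $\Phi$, and on joint conditional convergence via independence make explicit details that the paper leaves implicit.
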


Theorem~\ref{thm:weighted-bootstrap} justifies using the conditional quantiles
of \(\mathcal T_{n,m}^*\) as critical values. The next corollary records the
resulting asymptotic size under the null, local power under contiguous
alternatives, and consistency under fixed alternatives.

\begin{corollary}[Asymptotic size and power of the bootstrap test]
	\label{cor:bootstrap-size-power}
	Suppose that Assumptions~\ref{ass:compact}, \ref{ass:fixed-epsilon}, and
	\ref{ass:sample-size-balance} hold. Assume further that, under
	\(H_0:\mathrm P=\mathrm Q=\mathrm P_0\), the common distribution
	\(\mathrm P_0\) is not a Dirac measure.
	
	Let \(c_{n,m,1-\alpha}^{*}\) be the conditional \((1-\alpha)\)-quantile of
	\(\mathcal T_{n,m}^{*}\), and define
	\[
	\phi_{n,m,\alpha}^{*}
	:=
	\mathbf 1
	\left\{
	\mathcal T_{n,m}>c_{n,m,1-\alpha}^{*}
	\right\}.
	\]
	Then, under \(H_0:\mathrm P=\mathrm Q=\mathrm P_0\),
	\[
	\Prob_{H_0}
	\left(
	\phi_{n,m,\alpha}^{*}=1
	\right)
	\to
	\alpha .
	\]
	Under the local alternatives of Theorem~\ref{thm:local-power},
	\[
	\Prob_{n,m,h_{\mathrm P},h_{\mathrm Q}}
	\left(
	\phi_{n,m,\alpha}^{*}=1
	\right)
	\to
	\beta(h_{\mathrm P},h_{\mathrm Q}),
	\]
	where \(\beta(h_{\mathrm P},h_{\mathrm Q})\) is defined in
	Corollary~\ref{cor:local-power}. Finally, under any fixed alternative
	\(\mathrm P\ne\mathrm Q\),
	\[
	\Prob_{\mathrm P,\mathrm Q}
	\left(
	\phi_{n,m,\alpha}^{*}=1
	\right)
	\to
	1 .
	\]
\end{corollary}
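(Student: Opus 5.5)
The corollary has three claims: size, local power, and consistency. I would prove them in that order, each time combining Theorem~\ref{thm:weighted-bootstrap} for the critical value with the appropriate limit theorem for $\mathcal T_{n,m}$.

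\textbf{Size.} Let $F$ denote the distribution function of $L:=\|\mathbb G_\varepsilon^{(2)}\|_{\mathcal H}^2$. By Theorem~\ref{thm:weighted-bootstrap}, the conditional law of $\mathcal T^*_{n,m}$ converges in probability to that of $L$. If $F$ is continuous and strictly increasing at $c_\alpha$, then the conditional quantile satisfies $c^*_{n,m,1-\alpha}\to c_\alpha$ in probability; this is standard quantile convergence, as in Lemma~23.3 of van der Vaart. Theorem~\ref{thm:null-limit} gives $\mathcal T_{n,m}\rightsquigarrow L$. Slutsky's lemma then yields $\mathcal T_{n,m}-c^*_{n,m,1-\alpha}\rightsquigarrow L-c_\alpha$, and since $\Prob(L=c_\alpha)=0$ we get $\Prob(\mathcal T_{n,m}>c^*)\to 1-F(c_\alpha)=\alpha$.

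The main obstacle is verifying continuity and strict monotonicity of $F$. Write $L=\sum_k\omega_k^{(2)}Z_k^2$. As soon as at least one eigenvalue $\omega_k^{(2)}$ is positive, $L$ has a continuous distribution with support $[0,\infty)$, which gives both properties on $(0,\infty)$. So the task reduces to showing $\mathcal K_\varepsilon^{(2)}=\mathcal K_{\varepsilon,\mathrm P_0}\neq0$ (Remark~\ref{rem:null-covariance}), and this is where the non-Dirac assumption enters. The one-sample limit is $\dot T_{\varepsilon,\mathrm P_0}[\mathbb G_{\mathrm P_0}]$, and its variance in a direction $h$ is the variance under $\mathrm P_0$ of the influence function of $\mathrm P\mapsto\langle T_{\varepsilon,\mathrm P},g\rangle$. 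I would argue by contradiction. If the influence function were $\mathrm P_0$-a.s.\ constant, then $\dot T_{\varepsilon,\mathrm P_0}[(f-\int f\,d\mathrm P_0)\mathrm P_0]=0$ for every bounded $f$. Because $\mathrm P_0$ is not Dirac, there are nontrivial mean-zero perturbations $(1+tf)\mathrm P_0$. The map $t\mapsto T_{\varepsilon,(1+tf)\mathrm P_0}$ would then have zero derivative along all such directions in a neighbourhood; this would use differentiability at nearby measures, which I would argue via the same Goldfeld et al.\ machinery. That forces $T_{\varepsilon,(1+tf)\mathrm P_0}=T_{\varepsilon,\mathrm P_0}$, contradicting Lemma~\ref{lem:injectivity}. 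A more direct alternative is to compute the derivative of the barycentric projection from the linearized Schr\"odinger system in Appendix~\ref{app:linearized-covariance-operator} and check that a perturbation shifting mass between two distinct support points moves $T$ nontrivially. I expect this nondegeneracy step to be the hardest part.

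\textbf{Local power.} The alternatives in~\eqref{eq:local-alternatives} are contiguous to the null: the likelihood ratio is LAN with bounded scores, so Le~Cam's first lemma applies. Convergence in probability of $c^*_{n,m,1-\alpha}$ to $c_\alpha$ under the null therefore transfers to the local alternatives. Theorem~\ref{thm:local-power} gives $\mathcal T_{n,m}\rightsquigarrow\|G+\Delta\|_{\mathcal H}^2$. This limit also has a continuous law, since it is a noncentral quadratic form with at least one positive weight. Slutsky's lemma then gives convergence of the rejection probability to $\beta(h_{\mathrm P},h_{\mathrm Q})$.

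\textbf{Fixed alternatives.} The bootstrap process $\mathbb Z^*_{n,m}$ is centered at each sample's own empirical map, so it does not use $H_0$. I would apply the one-sample weighted bootstrap for $\mathrm P$ and for $\mathrm Q$ separately; these are the ingredients behind Theorem~\ref{thm:weighted-bootstrap}, applied at $\mathrm P$ and at $\mathrm Q$. Each yields a tight conditional limit, so $\mathcal T^*_{n,m}=O_{\Prob}(1)$ conditionally and hence $c^*_{n,m,1-\alpha}=O_{\Prob}(1)$. Theorem~\ref{thm:consistency} together with Lemma~\ref{lem:injectivity} gives $\mathcal T_{n,m}\to\infty$ in probability. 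Therefore $\Prob(\mathcal T_{n,m}>c^*)\to1$.
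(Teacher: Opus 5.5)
Your overall structure matches the paper's proof. You get quantile convergence from Theorem~\ref{thm:weighted-bootstrap} and then apply Slutsky for size and local power. For fixed alternatives you combine tightness of $c^*_{n,m,1-\alpha}$, from one-sample bootstrap arguments at $\mathrm P$ and $\mathrm Q$, with Theorem~\ref{thm:consistency}. In places you are more careful than the paper:
\begin{itemize}
\item you require strict monotonicity at $c_\alpha$ for quantile convergence;
\item you transfer $c^*_{n,m,1-\alpha}\to c_\alpha$ to the local alternatives via contiguity;
\item you check that $\|G+\Delta\|_{\mathcal H}^2$ has no atom at $c_\alpha$.
\end{itemize}
The genuine gap is the step you flag as hardest: the nondegeneracy $\mathcal K_\varepsilon^{(2)}\neq0$, on which the size claim rests.

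Your main argument for it fails. Suppose $\mathcal K_{\varepsilon,\mathrm P_0}=0$. Then you only know that the derivative of $\mathrm P\mapsto T_{\varepsilon,\mathrm P}$ vanishes at the single point $\mathrm P_0$. Nothing forces the derivative at $\mathrm P_t=(1+tf)\mathrm P_0$, $t\neq0$, to vanish, so you only get $T_{\varepsilon,\mathrm P_t}-T_{\varepsilon,\mathrm P_0}=o(t)$. That is fully compatible with Lemma~\ref{lem:injectivity}; think of $t\mapsto t^2$. Identifiability of the nonlinear map does not imply injectivity, or even nontriviality, of its linearization.

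The paper supplies this missing ingredient as a separate lemma on the linearized Schr\"odinger system:
\begin{itemize}
\item If $\dot T_{\varepsilon,\mathrm P_0}[\gamma]=-\nabla a_\gamma=0$, then $a_\gamma\equiv C$ on the ball.
\item The second linearized equation with $\int K_0\,d\mathrm U_d=1$ then gives $b_\gamma\equiv-C$.
\item Hence $\int K_0(\mathbf u,\mathbf x)\,d\gamma(\mathbf x)=0$ on $\operatorname{int}\mathbb B_d$.
\item Removing the factor $e^{\varphi_0/\varepsilon}$, the Gaussian convolution of the finite signed measure $e^{\psi_0/\varepsilon}\gamma$ vanishes on an open set.
\item By real analyticity it vanishes on $\mathbb R^d$, and Fourier uniqueness gives $\gamma=0$.
\end{itemize}
Apply this with $h_A=\mathbf 1_A-\mathrm P_0(A)$, where $0<\mathrm P_0(A)<1$. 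Shifting mass between two support points is not a direction of the form $h\mathrm P_0$ when $\mathrm P_0$ is nonatomic. Your ``more direct alternative'' points toward this argument but omits its content.

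A much shorter fix also fits your influence-function viewpoint. Test against a constant field $c\in\mathbb R^d$. Since $\int T_{\varepsilon,\mathrm P}\,d\mathrm U_d=\E_{\pi_{\varepsilon,\mathrm P}}[\mathbf X]=\int\mathbf x\,d\mathrm P$ for every $\mathrm P$ (including empirical measures), you have $\langle\sqrt{nm/(n+m)}(\widehat T^{\mathrm P}_{\varepsilon,n}-\widehat T^{\mathrm Q}_{\varepsilon,m}),c\rangle_{\mathcal H}=\sqrt{nm/(n+m)}\,c^\top(\bar{\mathbf X}_n-\bar{\mathbf Y}_m)$. The continuous mapping theorem and the classical CLT then give $\Var\langle\mathbb G_\varepsilon^{(2)},c\rangle_{\mathcal H}=c^\top\Cov_{\mathrm P_0}(\mathbf X)\,c$. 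This is positive for some $c$ because $\mathrm P_0$ is not a Dirac measure.
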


\section{Numerical experiments}
\label{sec:numerical-experiments}

This section evaluates the finite-sample performance and computational properties of
the proposed EOT-map two-sample test. The numerical study is organized around
four questions. First, we examine whether the weighted bootstrap provides
reliable calibration under the homogeneity null. Second, we evaluate power
against representative multivariate alternatives, including location, scale,
dependence, mixture, and nonlinear deformation changes. Third, we compare the
proposed method with distance-, kernel-, graph-, rank-, and
transport-based two-sample procedures, including energy distance, MMD, Sinkhorn
divergence, nearest-neighbor graph tests, OT-rank Energy, and Wasserstein-based
tests. Finally, beyond rejection probabilities, we illustrate two diagnostic and computational advantages of the method: the map-level diagnostic information provided by the
estimated vector field and the ability to control computation through the
number of reference points used in the Monte Carlo approximation.

\subsection{Simulation setup}
\label{subsec:implementation-details}
Although the asymptotic theory is developed under compact-support assumptions,
the simulations also include Gaussian and Student-\(t\) settings. For each finite sample, the empirical measures are
finitely supported, so the empirical EOT problems are well defined. We implement
the proposed EOT-map two-sample test using
Algorithm~\ref{alg:eot-map-two-sample-test} in the Appendix, with a common Monte
Carlo reference sample for the integral over \(\mathrm U_d\).

Unless otherwise stated, tests are conducted at level \(\alpha=0.05\), with
balanced samples \(m=n\), \(N=n\) reference points, and exponential bootstrap
weights with mean one and variance one. The regularization parameter is chosen
by a median-distance pilot rule. Let
\(\mathcal Z_{n,m}=\{ \mathbf X_1,\ldots, \mathbf X_n, \mathbf Y_1,\ldots, \mathbf Y_m\}\) denote the pooled
sample. We set
\[
\varepsilon
=
c_\varepsilon
\operatorname{median}
\left\{
\|a-b\|^2:
a,b\in\mathcal Z_{n,m},\ a\ne b
\right\},
\]
with \(c_\varepsilon=0.2\) in the main experiments. Sensitivity analyses for \(\varepsilon\), \(N\), and computation time are reported
in Appendix~\ref{app:sensitivity-computation}, where we show that smaller
reference samples can substantially reduce computation time while maintaining
stable rejection probabilities.

The empirical EOT plans are computed using the Sinkhorn algorithm implemented
in the Python Optimal Transport package, with maximum \(1000\) iterations and
stopping tolerance \(10^{-9}\). All simulations were performed on a workstation
with AMD EPYC 7763 processors operating at 2.45 GHz and 512 GB RAM, using 61
threads across Monte Carlo repetitions.

\subsection{Empirical size under the null}
\label{subsec:type1-error}

We first examine the finite-sample calibration of the proposed bootstrap test
under the homogeneity null \(H_0:\mathrm P=\mathrm Q\). To cover different
distributional shapes, we consider three null distributions: the standard
Gaussian distribution \(N(0,I_d)\), a Student-\(t\) distribution with
\(\nu=5\) degrees of freedom, and the symmetric Gaussian mixture
\(\frac12 N(\mu,I_d)+\frac12 N(-\mu,I_d),\ \boldsymbol{\mu}=(1,0,\ldots,0)^\top\).
These choices represent light-tailed unimodal, heavy-tailed, and multimodal
settings, respectively. In each case, the two samples are generated independently
from the same distribution.

We conduct experiments for \(d\in\{2,5,10\}\) and balanced sample sizes
\(n=m\in\{100,500,1000,2000\}\). For each configuration, we use \(R=1000\)
Monte Carlo replications and \(B=500\) bootstrap repetitions, and set the number
of reference points to \(N=n\). We report the empirical rejection probability at
the nominal level \(\alpha=0.05\).

\begin{table}[!htbp]
	\centering
	\resizebox{\textwidth}{!}{%
	\begin{tabular}{llcccc}
		\toprule
		Null distribution
		& Dimension
		& \(n=m=100\)
		& \(n=m=500\)
		& \(n=m=1000\)
		& \(n=m=2000\) \\
		\midrule
		\multirow{3}{*}{Gaussian}
		& \(d=2\)
		& \(0.057\) & \(0.050\) & \(0.055\) & \(0.048\) \\
		& \(d=5\)
		& \(0.052\) & \(0.063\) & \(0.050\) & \(0.060\) \\
		& \(d=10\)
		& \(0.051\) & \(0.045\) & \(0.063\) & \(0.053\) \\
		\midrule
		\multirow{3}{*}{Student-\(t\)}
		& \(d=2\)
		& \(0.049\) & \(0.038\) & \(0.044\) & \(0.045\) \\
		& \(d=5\)
		& \(0.035\) & \(0.042\) & \(0.047\) & \(0.046\) \\
		& \(d=10\)
		& \(0.057\) & \(0.047\) & \(0.052\) & \(0.048\) \\
		\midrule
		\multirow{3}{*}{Gaussian mixture}
		& \(d=2\)
		& \(0.050\) & \(0.053\) & \(0.052\) & \(0.049\) \\
		& \(d=5\)
		& \(0.056\) & \(0.054\) & \(0.056\) & \(0.058\) \\
		& \(d=10\)
		& \(0.067\) & \(0.060\) & \(0.057\) & \(0.050\) \\
		\bottomrule
	\end{tabular}%
	}
	\caption{
		Empirical size of the EOT-map two-sample test under the homogeneity null.
		The nominal level is \(\alpha=0.05\). Entries report empirical rejection
		probabilities. 
	}
	\label{tab:type1-size}
\end{table}
Table~\ref{tab:type1-size} reports the empirical rejection probabilities under
the homogeneity null. Overall, the proposed bootstrap test is well calibrated
across the considered null distributions, dimensions, and sample sizes. Most
empirical sizes are close to the nominal level \(0.05\), with fluctuations of the
order expected from Monte Carlo error. The Gaussian null shows stable size
control across all dimensions. The Student-\(t\) null is slightly conservative in
a few configurations, especially when \(d=5\) and \(n=m=100\), but remains close
to the target level as the sample size increases. The Gaussian mixture null also
exhibits satisfactory calibration, with the largest empirical size \(0.067\)
occurring in the smallest-sample, highest-dimensional configuration. Overall,
these results indicate that the weighted multiplier bootstrap provides reliable
finite-sample calibration for the proposed EOT-map statistic.

\subsection{Power under fixed alternatives}
\label{subsec:power-fixed-alternatives}

We next evaluate the finite-sample power of the proposed test under three
classes of fixed alternatives. Throughout this experiment, the baseline
distribution is \(\mathrm P=N(0,I_d)\), and the second sample is generated from
one of the following alternatives:
\[
\begin{array}{ll}
	\text{location shift:}
	& \mathrm Q=N(\delta \mathbf e_1,I_d),
	\quad
	\delta\in\{0,0.1,0.2,\ldots,1.0\}, \\[2mm]
	\text{scale change:}
	& \mathrm Q=N(0,\sigma^2 I_d),
	\quad
	\sigma\in\{1.00,1.05,1.10,\ldots,1.60\}, \\[2mm]
	\text{correlation change:}
	& \mathrm Q=N(0,\Sigma_\rho),
	\quad
	\rho\in\{0,0.1,0.2,\ldots,0.9\}.
\end{array}
\]
Here \(\mathbf e_1=(1,0,\ldots,0)^\top\), and \(\Sigma_\rho\) has unit diagonal entries
and all off-diagonal entries equal to \(\rho\). The null cases correspond to
\(\delta=0\), \(\sigma=1\), and \(\rho=0\), respectively. These alternatives are
used to assess sensitivity to changes in location, marginal scale, and
dependence structure.

The experiments are conducted for \(d\in\{2,5\}\) and balanced sample sizes
\(n=m\in\{100,500\}\). For each configuration, we use \(R=1000\) Monte Carlo
replications and \(B=300\) bootstrap repetitions, set \(N=n\), and report the
empirical rejection probability at the nominal level \(\alpha=0.05\).

\begin{figure}[!htbp]
	\centering
	\includegraphics[width=0.98\textwidth]{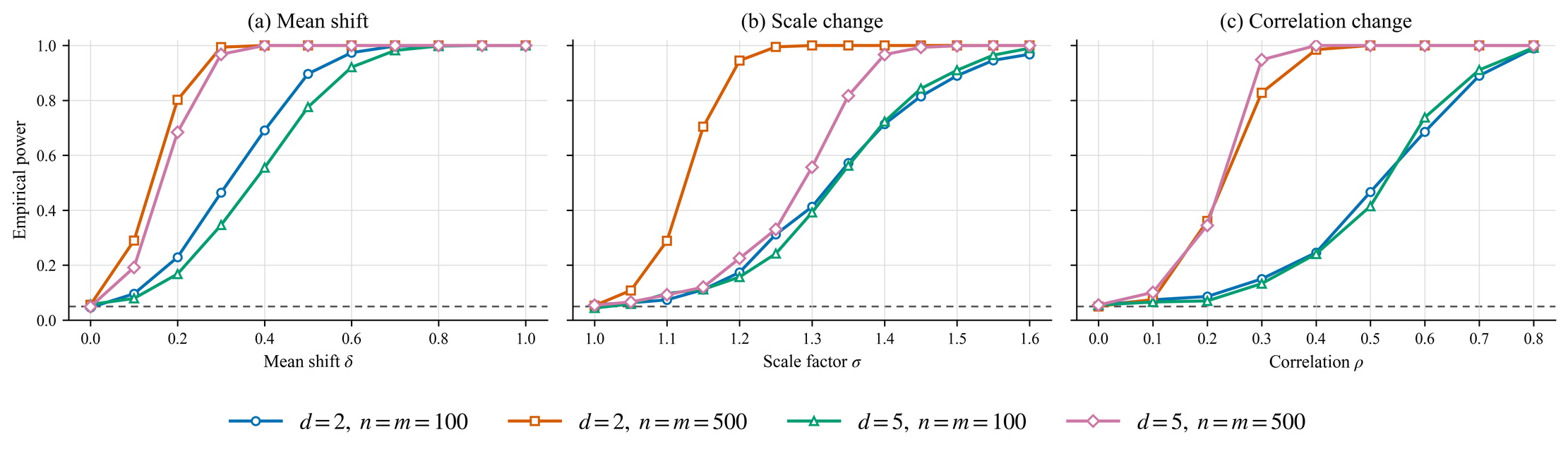}
	\caption{
		Empirical power under fixed alternatives. The nominal level is
		\(\alpha=0.05\). The panels correspond to mean-shift, scale-change,
		and correlation-change alternatives. The horizontal dashed line marks
		the nominal level. 
	}
	\label{fig:power-fixed-alternatives}
\end{figure}

Figure~\ref{fig:power-fixed-alternatives} displays the empirical power curves
under the three fixed alternatives. At the null values
\(\delta=0\), \(\sigma=1\), and \(\rho=0\), the rejection probabilities are close
to the nominal level, in agreement with the size results in
Table~\ref{tab:type1-size}. Away from the null, the power generally increases
with the signal strength and with the sample size.

The location-shift experiments show particularly strong power in the considered
settings. For \(n=m=500\), the power is
already close to one for moderate mean shifts in both dimensions. For
\(n=m=100\), the power still increases steadily with \(\delta\), although the
increase is slower in dimension \(d=5\). This suggests that the integrated
map-discrepancy statistic is particularly sensitive to coherent directional
changes in the target distribution.

The scale alternative is more challenging. In dimension \(d=2\), the test has
high power for moderate scale changes, especially when \(n=m=500\). By contrast,
when \(d=5\) and \(n=m=100\), the power remains low over the considered range
of \(\sigma\). The substantial improvement at \(n=m=500\) indicates that this
difficulty is mainly a finite-sample effect.

For the correlation alternative, the power increases sharply with \(\rho\). The
test attains high power for moderate correlations when \(n=m=500\), while the
small-sample curves increase more gradually. Compared with the scale-change
case, the higher-dimensional setting is not uniformly harder here, since a
common correlation perturbation changes many covariance entries
simultaneously. Overall, these results show that the proposed test is sensitive
to location, scale, and dependence changes, with clear power gains as the
sample size increases.

\subsection{Comparison with existing tests}
\label{subsec:comparison-existing-tests}

We compare the proposed EOT-map test with several standard two-sample
procedures: the energy test \citep{baringhaus2004new,szekely2013energy},
the Gaussian-kernel maximum mean discrepancy (MMD) test
\citep{gretton2012kernel}, the Sinkhorn divergence test
\citep{cuturi2013sinkhorn,genevay2019sample}, the Schilling
\(k\)-nearest-neighbor graph test \citep{schilling1986multivariate},
the Wasserstein test \citep{ramdas2017wasserstein}, and the OT-rank
Energy test of \citet{deb2023multivariate}. All methods are calibrated at
the nominal level \(\alpha=0.05\). The energy, MMD, Sinkhorn, \(k\)-NN,
Wasserstein, and OT-rank Energy tests use permutation calibration with
\(300\) random permutations, whereas the EOT-map test uses the weighted
bootstrap with \(300\) bootstrap repetitions
\citep{praestgaard1993exchangeably,vaartwellner1996weak,kosorok2008introduction}.
The EOT-map statistic is computed with \(N=200\) reference points sampled
from \(\mathrm U_d\) on \(\mathbb B_d\).

\noindent\textbf{Null calibration.}
Table~\ref{tab:benchmark-size} shows that all methods have reasonable size
control under the Gaussian, uniform, and Student-\(t\) nulls. The EOT-map sizes
range from \(0.039\) to \(0.064\), close to the nominal level given Monte Carlo
error.

\begin{table}[!htbp]
	\centering
	\resizebox{0.98\textwidth}{!}{
		\begin{tabular}{llccccccc}
			\toprule
			Null distribution
			& Dimension
			& Energy
			& MMD
			& Sinkhorn
			& \(k\)-NN
			& Wasserstein
			& OT-rank
			& EOT-map \\
			\midrule
			\multirow{2}{*}{Gaussian}
			& \(d=2\)
			& 0.046 & 0.037 & 0.044 & 0.057 & 0.046 & 0.046 & 0.039 \\
			& \(d=5\)
			& 0.047 & 0.048 & 0.061 & 0.047 & 0.056 & 0.052 & 0.051 \\
			\midrule
			\multirow{2}{*}{Uniform}
			& \(d=2\)
			& 0.049 & 0.055 & 0.054 & 0.050 & 0.059 & 0.052 & 0.055 \\
			& \(d=5\)
			& 0.055 & 0.053 & 0.063 & 0.051 & 0.062 & 0.066 & 0.064 \\
			\midrule
			\multirow{2}{*}{Student-\(t\)}
			& \(d=2\)
			& 0.052 & 0.054 & 0.051 & 0.038& 0.046& 0.052 & 0.050 \\
			& \(d=5\)
			& 0.042 & 0.049 & 0.034& 0.041 & 0.060 & 0.037 & 0.064 \\
			\bottomrule
		\end{tabular}
	}
	\caption{
		Empirical size of the two-sample tests.
		The nominal level is \(\alpha=0.05\), \(n=m=200\), and each entry is based on
		\(R=1000\) Monte Carlo replications.
	}
	\label{tab:benchmark-size}
\end{table}

\noindent\textbf{Power under alternatives.}
Figure~\ref{fig:location-power-comparison} summarizes the location-shift
experiment. EOT-map is among the strongest methods, especially for
weak-to-moderate shifts in dimension \(d=5\), where the location change induces
a coherent directional displacement between the two estimated maps.

\begin{figure}[!htbp]
	\centering
	\includegraphics[width=0.98\textwidth]{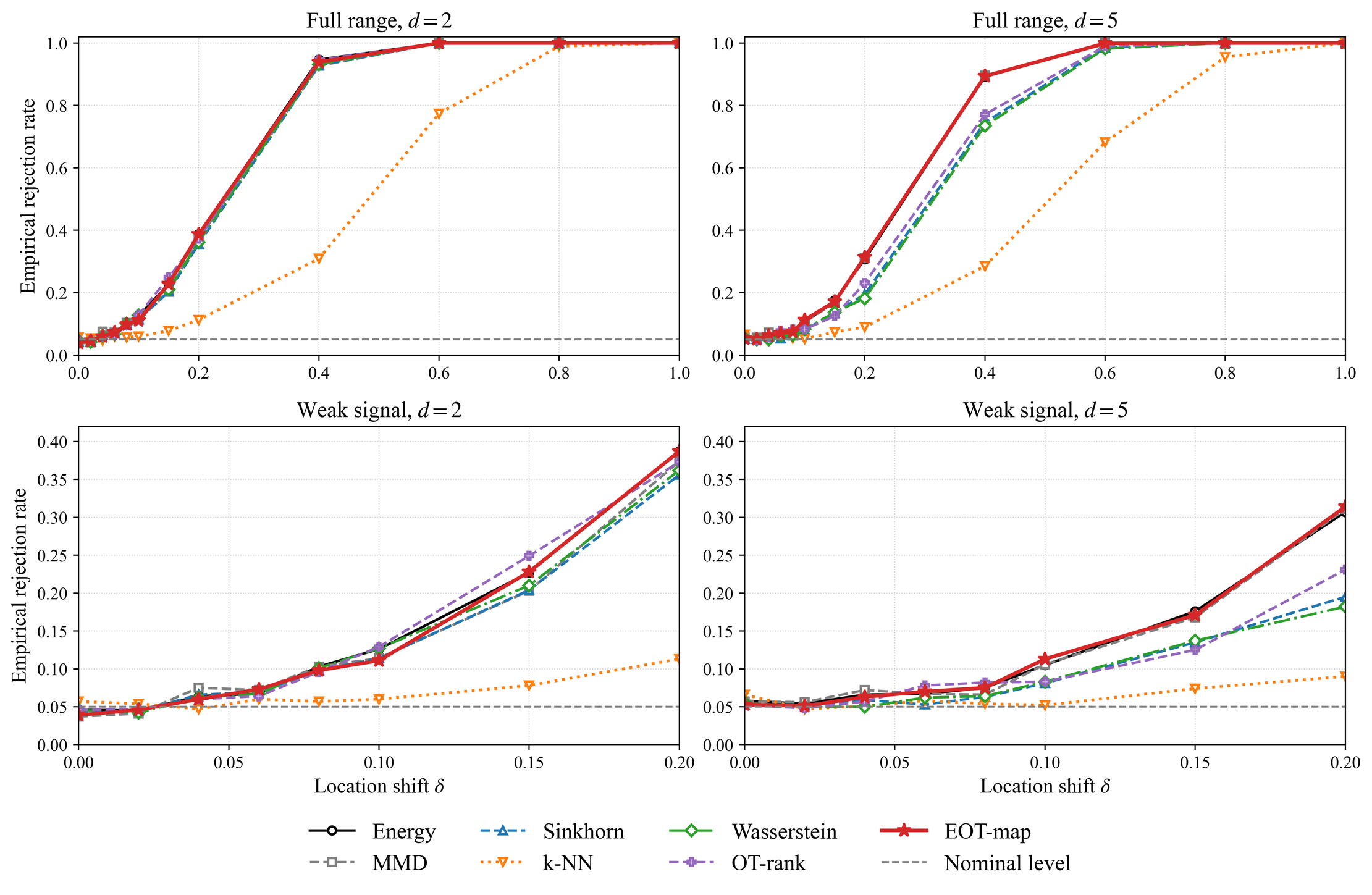}
	\caption{
		Empirical rejection rates under location alternatives
		\(\mathrm Q=N(\delta \mathbf e_1,I_d)\) with \(n=m=200\).
		The top panels show the full range of the location shift \(\delta\),
		and the bottom panels zoom in on the weak-signal region.
		The horizontal dashed line marks the nominal level \(\alpha=0.05\).
	}
	\label{fig:location-power-comparison}
\end{figure}

Table~\ref{tab:benchmark-power-other} gives representative power results for
scale, correlation, mixture, and nonlinear deformation alternatives. The full
benchmark table is reported in Appendix~\ref{app:benchmark-comparison-details}.

\begin{table}[!htbp]
	\centering
	\resizebox{0.98\textwidth}{!}{
		\begin{tabular}{lllccccccc}
			\toprule
			Alternative
			& Dimension
			& Parameter
			& Energy
			& MMD
			& Sinkhorn
			& \(k\)-NN
			& Wasserstein
			& OT-rank
			& EOT-map \\
			\midrule
			Scale & \(d=2\) & \(\sigma=1.2\) & 0.365 & 0.567 & 0.576 & 0.222 & 0.584 & 0.124 & 0.614 \\
			Scale & \(d=5\) & \(\sigma=1.2\) & 0.632 & 0.797 & 0.645 & 0.351 & 0.579 & 0.096 & 0.414 \\
			Correlation & \(d=2\) & \(\rho=0.4\) & 0.160 & 0.297 & 0.721 & 0.322 & 0.722 & 0.260 & 0.730 \\
			Correlation & \(d=5\) & \(\rho=0.2\) & 0.134 & 0.186 & 0.722 & 0.347 & 0.700 & 0.072 & 0.561 \\
			Mixture & \(d=2\) & \(\delta=1.0\) & 0.785 & 0.938 & 0.968 & 0.616 & 0.966 & 0.352 & 0.974 \\
			Mixture & \(d=5\) & \(\delta=1.0\) & 0.301 & 0.440 & 0.739 & 0.444 & 0.708 & 0.065 & 0.588 \\
			Nonlinear deformation & \(d=2\) & \(\tau=0.40\) & 0.156 & 0.292 & 0.636 & 0.297 & 0.638 & 0.233 & 0.639 \\
			Nonlinear deformation & \(d=5\) & \(\tau=0.70\) & 0.180 & 0.232 & 0.878 & 0.689 & 0.855 & 0.130 & 0.618 \\
			\bottomrule
		\end{tabular}
	}
	\caption{
		Representative empirical power under scale, correlation, mixture, and
		nonlinear deformation alternatives. The nominal level is \(\alpha=0.05\),
		\(n=m=200\), and each entry is based on \(R=1000\) Monte Carlo replications.
		Full results are reported in Appendix~\ref{app:benchmark-comparison-details}.
	}
	\label{tab:benchmark-power-other}
\end{table}

The EOT-map test is strongest or close to strongest under location alternatives,
and for two-dimensional nonlinear deformation it is nearly indistinguishable
from Sinkhorn divergence and Wasserstein. In some high-dimensional scale,
correlation, mixture, and nonlinear deformation settings, Sinkhorn divergence,
Wasserstein, or MMD can be more sensitive. Thus, EOT-map should be viewed as a
competitive and interpretable complement to scalar discrepancy tests rather than
as a uniformly dominant replacement.

The runtime results in Table~\ref{tab:benchmark-runtime} show that the current
EOT-map implementation is slower than Energy, MMD, \(k\)-NN, and OT-rank Energy,
but faster than permutation-calibrated Sinkhorn divergence; detailed timings are
reported in Appendix~\ref{app:benchmark-comparison-details}.

A distinctive advantage of the EOT-map test is that, in addition to a rejection
decision, it provides the estimated vector field
\(\widehat T_{\varepsilon,n}^{\mathrm P}(\mathbf u) - \widehat T_{\varepsilon,m}^{\mathrm Q}(\mathbf u)\),
which can be visualized to reveal whether the distributional difference is
primarily a translation, a radial expansion, a dependence deformation, a local
nonlinear distortion, or a more complex shape change.

\subsection{Map-level diagnostics}
\label{subsec:map-visualization}
Having established that the proposed statistic is competitive as a test, we
next illustrate the additional diagnostic information available before the
final \(L^2\)-aggregation. Specifically, we visualize the estimated map
discrepancy in two dimensions and provide a coordinate-wise decomposition of
the statistic. The purpose of this experiment is not to assess rejection
probabilities, but to demonstrate how the map-based comparison reveals the
directional and geometric structure of distributional differences.

For \(d=2\), we plot the estimated vector field
\(\widehat D_{\varepsilon}(\mathbf u) = \widehat T_{\varepsilon,n}^{\mathrm P}(\mathbf u) - \widehat T_{\varepsilon,m}^{\mathrm Q}(\mathbf u),\ \mathbf u\in\mathbb B_2\).
The background color represents the magnitude \(\|\widehat D_{\varepsilon}(\mathbf u)\|\), while the
arrows indicate the local direction of the map discrepancy over the common
reference domain.

We consider four representative alternatives, all with baseline distribution
\(\mathrm P=N(0,I_2)\). The target distribution \(\mathrm Q\) is respectively
given by a mean shift \(N(0.4\mathbf e_1,I_2)\), a scale change \(N(0,1.4^2I_2)\), a
correlation change \(N(0,\Sigma_{0.6})\), and the nonlinear perturbation
defined by drawing
\(\mathbf Y_0=(Y_{0,1},Y_{0,2})^\top\sim N(0,I_2)\) and setting
\[
\mathbf Y
=
\mathbf Y_0
+
0.2
\begin{pmatrix}
	\sin(5Y_{0,2})\\
	\sin(5Y_{0,1})
\end{pmatrix}.
\]
The distribution of \(\mathbf Y\) is then used as the nonlinear perturbation
alternative.

The sample size is \(n=m=2000\). The reference points are placed on a regular
grid over \(\mathbb B_2\), with additional boundary points used to fill the disk.
The regularization parameter is selected by the same median-distance rule as in
the benchmark comparison, with \(c_\varepsilon=0.2\). Each panel uses its own
color scale for the discrepancy intensity.

Figure~\ref{fig:map-visualization} displays the resulting map discrepancies.
The mean-shift alternative produces an almost constant horizontal displacement
field, matching the perturbation in the first coordinate. The scale-change
alternative yields a radial outward field, reflecting the larger dispersion of
the target distribution. The correlation-change alternative produces a
shearing pattern, revealing the change in dependence structure. The nonlinear
local perturbation generates a spatially varying deformation field over
\(\mathbb B_2\).

To summarize coordinate-wise contributions, write
\(\widehat D_{\varepsilon}
=(\widehat D_{\varepsilon,1},\ldots,\widehat D_{\varepsilon,d})^\top\). The statistic
can be decomposed as
\begin{equation}\label{eq:cor_contribution}
	\mathcal T_{n,m}
	=
	\sum_{k=1}^d
	\mathcal T_{n,m}^{(k)},
	\qquad
	\mathcal T_{n,m}^{(k)}
	=
	\frac{nm}{n+m}
	\int_{\mathbb B_d}
	\widehat D_{\varepsilon,k}(\mathbf u)^2\,d\mathrm U_d(\mathbf u),
\end{equation}
and we report the normalized contribution
\(r_k=\mathcal T_{n,m}^{(k)}/\mathcal T_{n,m}\). These ratios provide a simple
diagnostic for identifying which coordinate directions contribute most to the
overall map discrepancy.

The reported ratios in Figure~\ref{fig:map-visualization} are consistent with
the geometry of the four alternatives. For the mean shift, the first coordinate
dominates, with \(r_1=0.990\) and \(r_2=0.010\). For the isotropic scale change,
the contributions are nearly balanced, with \(r_1=0.486\) and \(r_2=0.514\).
The correlation alternative gives similarly balanced marginal contributions,
but the vector field reveals a non-axis-aligned shearing structure that is not
captured by the coordinate ratios alone. For the nonlinear local perturbation,
the first coordinate contributes more strongly, with \(r_1=0.779\) and
\(r_2=0.221\), while the vector field shows that the discrepancy is spatially
nonlinear.

Thus, the visualization and the coordinate-wise decomposition provide
complementary information. The vector field shows where and in which direction
the two distributions differ, whereas the ratios \(r_k\) summarize the marginal
coordinate contributions to the scalar statistic. This illustrates that, although
the final test statistic is an \(L^2\)-type scalar summary, the EOT-map
comparison retains interpretable directional information that is not directly
available from scalar discrepancy tests such as energy distance, MMD, or
Sinkhorn divergence.

\begin{figure}[!htbp]
	\centering
	\includegraphics[width=0.85\textwidth]{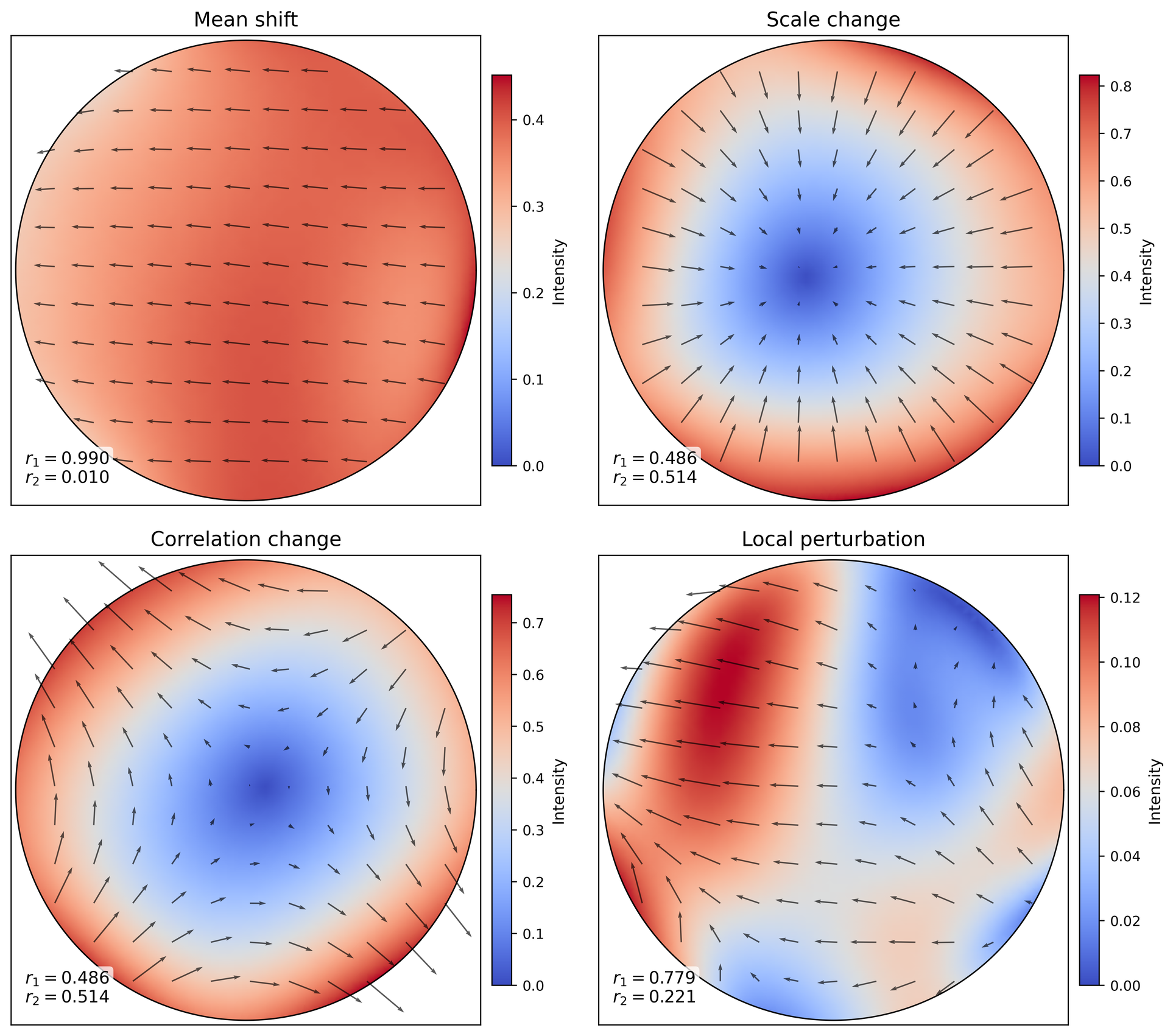}
	\caption{
		Visualization of the estimated EOT-map discrepancy in dimension
		\(d=2\). The background color represents the discrepancy intensity
		\(\|\widehat D_{\varepsilon}(\mathbf u)\|\), and the arrows indicate the direction of the
		estimated map difference
		\(\widehat D_{\varepsilon}(\mathbf u)=\widehat T_{\varepsilon,n}^{\mathrm P}(\mathbf u)
		-\widehat T_{\varepsilon,m}^{\mathrm Q}(\mathbf u)\). Each panel reports the
		coordinate-wise contribution ratios \(r_1\) and \(r_2\). 
	}
	\label{fig:map-visualization}
\end{figure}

Additional sensitivity analyses for the regularization parameter, the reference-sample size, and computation time are reported in Appendix~\ref{app:sensitivity-computation}. Overall, the proposed test is stable over a range of regularization choices, and moderate reference-sample sizes provide reliable approximations while reducing computation time. \mbox{}

\section{Real-data application}
\label{sec:realdata}

We illustrate the proposed EOT-map two-sample test using the Citi Bike trip
history data from Jersey City in March 2026. The data contain individual bike
trips with trip times, station information, geographic coordinates, and rider
type. We examine whether the spatial distributions of trip starting locations
differ between member and casual riders.

Let \(\mathrm P_{\mathrm{mem}}\) and \(\mathrm P_{\mathrm{cas}}\) denote the distributions of
the starting-location coordinates for member and casual riders, respectively.
We test
\(H_0: \mathrm P_{\mathrm{mem}} = \mathrm P_{\mathrm{cas}} \quad \text{versus} \quad H_1: \mathrm P_{\mathrm{mem}} \neq \mathrm P_{\mathrm{cas}}\).
This comparison is motivated by potentially different usage patterns across
rider types: member rides are often associated with regular commuting, whereas
casual rides may be more related to occasional or leisure travel.

We preprocess the trip records by removing observations with missing rider type
or missing starting-location coordinates. The two-dimensional starting-location
coordinates are then standardized, and balanced samples of \(5000\) member rides
and \(5000\) casual rides are retained for the analysis.

We compare the EOT-map test with the same benchmark procedures as in the
simulation study. The implementation follows the settings described in
Section~\ref{sec:numerical-experiments}: the EOT-map test is calibrated by the weighted
bootstrap with \(300\) bootstrap repetitions, while the permutation-calibrated
benchmark tests use \(300\) random permutations.

Applied to the balanced sample of \(5000\) rides per group, all tests produce
extremely small \(p\)-values and reject the null hypothesis. Because the sample
size is large, this result mainly confirms a detectable distributional
difference between member and casual riders. We therefore focus on a subsampling
analysis to assess whether the finding remains stable at smaller sample sizes.

For each \(n=m\in\{500,1000,2000,3000\}\), we draw \(30\) balanced subsamples
from the member and casual groups and repeat all tests. The results are reported
in Table~\ref{tab:citibike-subsampling}. At \(n=m=500\), the EOT-map test
rejects in \(83.3\%\) of the subsamples, with median \(p\)-value \(0.015\). The
MMD and Sinkhorn tests have the same rejection rate, whereas the energy and
\(k\)-NN tests reject in \(76.7\%\) of the subsamples. Once \(n=m=1000\), the
EOT-map, energy, MMD, and Sinkhorn tests reject in all subsamples, and this
pattern persists for \(n=m=2000\) and \(n=m=3000\). The \(k\)-NN test is less
stable, although its rejection rate increases from \(76.7\%\) to \(93.3\%\) as
the subsample size grows. Overall, the subsampling results indicate a persistent
and statistically robust difference between the starting-location distributions
of member and casual riders.

Figure~\ref{fig:citibike-density} visualizes the spatial distributions of the
two groups. Both rider types are concentrated around the downtown and waterfront
areas of Jersey City, but their relative intensities differ across locations.
The signed density-difference map highlights regions where one group has higher
relative start-trip intensity than the other. These localized patterns are
consistent with the formal testing results and suggest that the rejection is not
driven solely by a global location shift.

To summarize the coordinate-wise contributions to the map discrepancy in \eqref{eq:cor_contribution}, we consider the Citi Bike data with \(d=2\). The overall test statistic equals \(74.020\), and the coordinate-wise contributions are \(31.802\) and \(42.218\), yielding contribution ratios of \(0.430\) and \(0.570\), respectively. As the two coordinates represent standardized longitude and latitude, both the east--west and north--south components drive the detected discrepancy, with the second coordinate exhibiting a slightly larger contribution.
Figure~\ref{fig:citibike-eotmap} visualizes the corresponding EOT-map discrepancy
on the source unit disk. For this application, the displayed map difference uses
\(\mathrm P=\mathrm P_{\mathrm{mem}}\) and \(\mathrm Q=\mathrm P_{\mathrm{cas}}\), so the member-minus-casual
direction follows the same \(\mathrm P-\mathrm Q\) convention.

\begin{table}[!htbp]
	\centering
	\begin{tabular}{llccc}
		\toprule
		\(n=m\) & Method & Rejection rate & Median \(p\)-value & Mean \(p\)-value \\
		\midrule
		500  & Energy      & 0.767 & 0.015 & 0.041 \\
		500  & MMD         & 0.833 & 0.007 & 0.031 \\
		500  & Sinkhorn    & 0.833 & 0.012 & 0.045 \\
		500  & \(k\)-NN    & 0.767 & 0.015 & 0.098 \\
		500  & Wasserstein & 0.867 & 0.008 & 0.037 \\
		500  & OT-rank     & 0.633 & 0.023 & 0.075 \\
		500  & EOT-map     & 0.833 & 0.015 & 0.037 \\
		\midrule
		1000 & Energy      & 1.000 & 0.000 & 0.005 \\
		1000 & MMD         & 1.000 & 0.000 & 0.004 \\
		1000 & Sinkhorn    & 1.000 & 0.000 & 0.005 \\
		1000 & \(k\)-NN    & 0.833 & 0.000 & 0.056 \\
		1000 & Wasserstein & 1.000 & 0.000 & 0.004 \\
		1000 & OT-rank     & 0.967 & 0.000 & 0.007 \\
		1000 & EOT-map     & 1.000 & 0.000 & 0.004 \\
		\midrule
		2000 & Energy      & 1.000 & 0.000 & 0.000 \\
		2000 & MMD         & 1.000 & 0.000 & 0.000 \\
		2000 & Sinkhorn    & 1.000 & 0.000 & 0.000 \\
		2000 & \(k\)-NN    & 0.900 & 0.000 & 0.060 \\
		2000 & Wasserstein & 1.000 & 0.000 & 0.000 \\
		2000 & OT-rank     & 1.000 & 0.000 & 0.000 \\
		2000 & EOT-map     & 1.000 & 0.000 & 0.000 \\
		\midrule
		3000 & Energy      & 1.000 & 0.000 & 0.000 \\
		3000 & MMD         & 1.000 & 0.000 & 0.000 \\
		3000 & Sinkhorn    & 1.000 & 0.000 & 0.000 \\
		3000 & \(k\)-NN    & 0.933 & 0.000 & 0.018 \\
		3000 & Wasserstein & 1.000 & 0.000 & 0.000 \\
		3000 & OT-rank     & 1.000 & 0.000 & 0.000 \\
		3000 & EOT-map     & 1.000 & 0.000 & 0.000 \\
		\bottomrule
	\end{tabular}
	\caption{Subsampling results for the Citi Bike data. Each entry is based on
		\(30\) balanced subsamples. Values reported as 0.000 are below numerical resolution.}
	\label{tab:citibike-subsampling}
\end{table}

\begin{figure}[!htbp]
	\centering
	\includegraphics[width=0.32\textwidth]{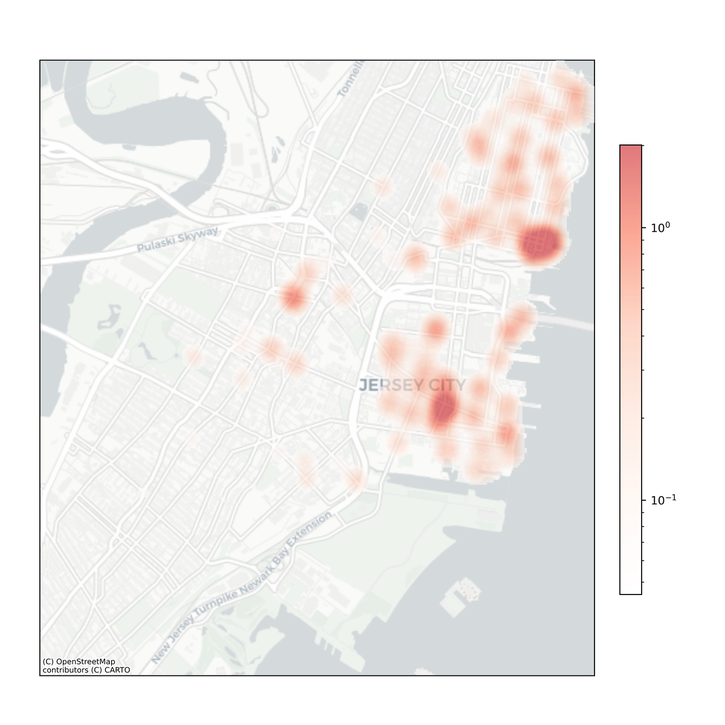}
	\includegraphics[width=0.32\textwidth]{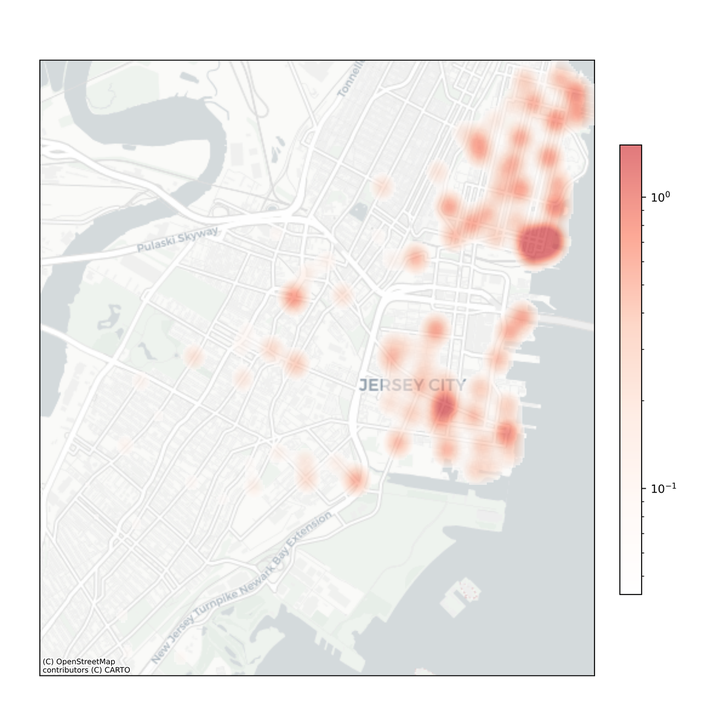}
	\includegraphics[width=0.32\textwidth]{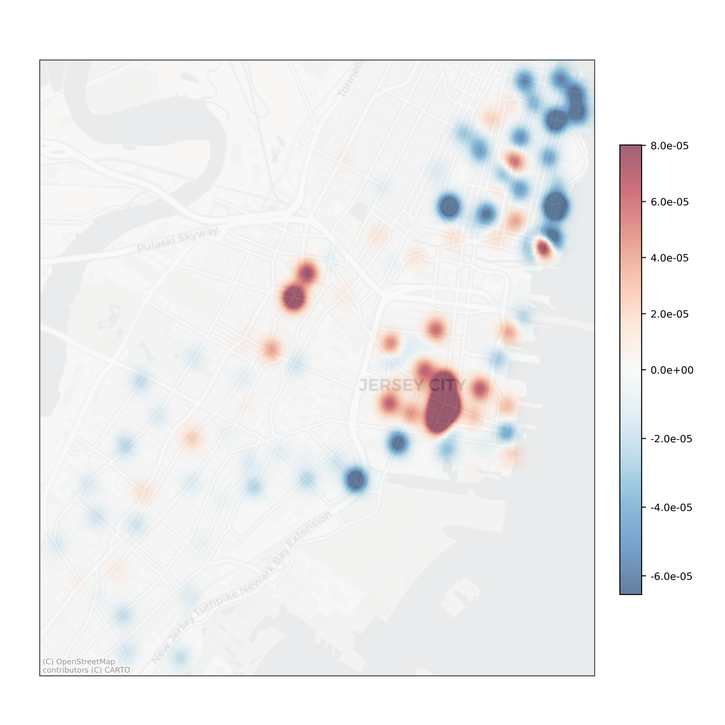}
	\caption{Spatial distributions of Citi Bike starting locations. Left:
		member riders. Middle: casual riders. Right: signed density difference
		between member and casual riders. Red regions indicate higher relative
		start-trip intensity for member riders, whereas blue regions indicate higher
		relative start-trip intensity for casual riders.}
	\label{fig:citibike-density}
\end{figure}

\begin{figure}[!htbp]
	\centering
	\includegraphics[width=0.58\textwidth]{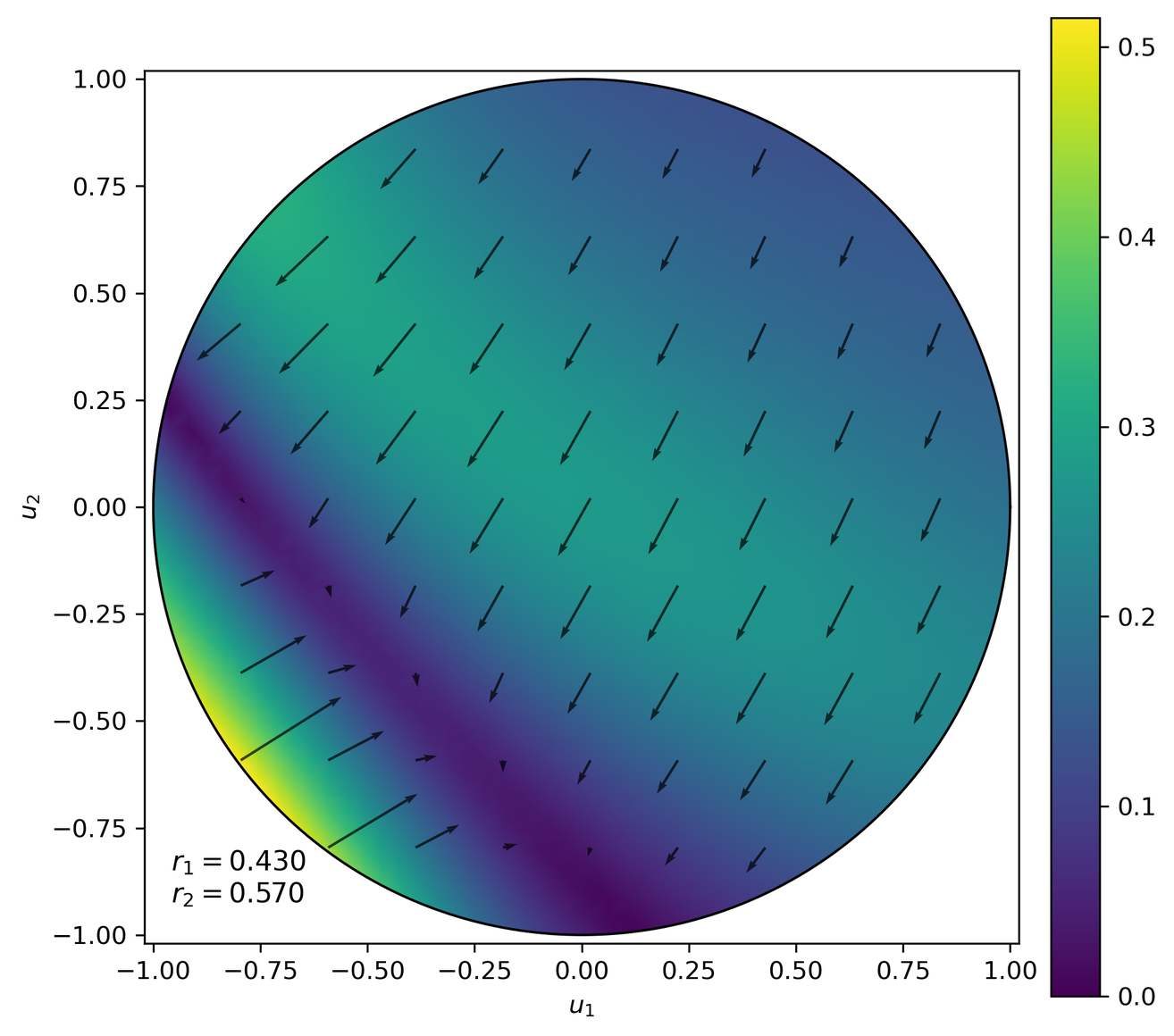}
	\caption{Empirical EOT-map discrepancy on the source unit disk. The color
		represents the magnitude
		\(\|\widehat T_{\varepsilon,\mathrm{mem}}(\mathbf u)
		-\widehat T_{\varepsilon,\mathrm{cas}}(\mathbf u)\|\), and the arrows represent the
		direction of the map difference. The coordinate contribution ratios are
		\(0.430\) and \(0.570\), respectively.}
	\label{fig:citibike-eotmap}
\end{figure}

\section{Conclusion and discussion}
\label{sec:conclusion}

This paper proposed a two-sample testing procedure based on common-reference
entropic optimal transport maps. The method represents each distribution as a
smooth vector-valued map from the fixed reference law \(\mathrm U_d\) on
\(\mathbb B_d\), and compares the two distributions through the squared
\(L^2(\mathbb B_d,\mathrm U_d;\mathbb R^d)\) distance between the resulting
maps.

For fixed \(\varepsilon>0\), we established the identifiability of the
population EOT map, derived the null limit distribution of the proposed
statistic, proved consistency under fixed alternatives, characterized local
asymptotic power, and justified a weighted multiplier bootstrap for
calibration.

Extensive simulations show that the proposed test is well calibrated and competitive
with standard distance-, kernel-, graph-, rank-, and transport-based two-sample
tests. The sensitivity analysis further shows that moderate numbers of
reference points can provide stable empirical size and power, offering a simple
way to control computational cost. Beyond testing, the estimated map difference
provides an interpretable vector-field summary of distributional changes. The
Citi Bike application further illustrates this diagnostic value by detecting
differences between member and casual riders and decomposing the map
discrepancy into coordinate-wise contributions.

\section*{Acknowledgments}
Yiming Ma and Hang Liu contributed equally to this work. Correspondence should be addressed to Weiwei Zhuang (\href{mailto:weizh@ustc.edu.cn}{weizh@ustc.edu.cn}).
	The authors acknowledge the use of ChatGPT for language polishing and 
	code-development assistance during the preparation of this manuscript.
	Hang Liu's
	research was supported by the National Natural Science Foundation of China
	(NSFC) grant 12401372 and University of Science and Technology of China
	grant 2024ycjg13. Weiwei Zhuang's research was supported by the National
	Natural Science Foundation of China (NSFC) grant 72571262.

\bibliographystyle{plainnat}
\bibliography{TST.bib}

\newpage
\appendix

\section{ Proofs of the main results}

\begin{proof}[Proof of Lemma \ref{lem:injectivity}]
	
	We first prove the direct implication. If \(\mathrm P=\mathrm Q\), then the
	two entropic optimal transport problems from \(\mathrm U_d\) to the target
	measure are identical. By uniqueness of the entropic optimal coupling, their
	barycentric projections coincide.  It remains to prove the converse implication. 
	
	For the cost convention
	\[
	c(\mathbf u,\mathbf x)=\frac12\|\mathbf u-\mathbf x\|^2,
	\]
	the entropic transport map satisfies
	\[
	T_{\varepsilon,\mathrm P}(\mathbf{u})
	=
	\mathbf{u}
	-
	\nabla \varphi_{\mathrm P}(\mathbf{u}),
	\]
	and similarly for \(\mathrm Q\).
	
	Therefore,
	\[
	T_{\varepsilon,\mathrm P}
	=
	T_{\varepsilon,\mathrm Q}
	\quad
	\mathrm U_d\text{-a.e.}
	\]
	implies
	\[
	\nabla \varphi_{\mathrm P}
	=
	\nabla \varphi_{\mathrm Q}
	\quad
	\mathrm U_d\text{-a.e.}
	\]
	Since the potentials are smooth on \(\operatorname{int}\mathbb{B}_d\), the
	difference
	\[
	\nabla(\varphi_{\mathrm P}-\varphi_{\mathrm Q})
	\]
	is continuous. Hence it vanishes everywhere on
	\(\operatorname{int}\mathbb{B}_d\). Since \(\operatorname{int}\mathbb{B}_d\) is
	connected, there exists a constant \(C\in\mathbb{R}\) such that
	\[
	\varphi_{\mathrm P}(\mathbf{u})
	=
	\varphi_{\mathrm Q}(\mathbf{u}) + C,
	\qquad
	\mathbf{u}\in\operatorname{int}\mathbb{B}_d.
	\]
	
	By the Schrödinger system,
	\[
	e^{-\varphi_{\mathrm P}(\mathbf{u})/\varepsilon}
	=
	\int_{\mathbb{R}^d}
	\exp\left\{
	-\frac{\|\mathbf{u}-\mathbf{x}\|^2}{2\varepsilon}
	\right\}
	e^{\psi_{\mathrm P}(\mathbf{x})/\varepsilon}
	\,d\mathrm P(\mathbf{x}),
	\]
	and similarly for \(\mathrm Q\). Therefore
	\[
	\int_{\mathbb{R}^d}
	\exp\left\{
	-\frac{\|\mathbf{u}-\mathbf{x}\|^2}{2\varepsilon}
	\right\}
	\,d\Lambda_{\mathrm P}(\mathbf{x})
	=
	e^{-C/\varepsilon}
	\int_{\mathbb{R}^d}
	\exp\left\{
	-\frac{\|\mathbf{u}-\mathbf{x}\|^2}{2\varepsilon}
	\right\}
	\,d\Lambda_{\mathrm Q}(\mathbf{x}),
	\]
	where
	\[
	d\Lambda_{\mathrm P}(\mathbf{x})
	:=
	e^{\psi_{\mathrm P}(\mathbf{x})/\varepsilon}\,d\mathrm P(\mathbf{x}),
	\qquad
	d\Lambda_{\mathrm Q}(\mathbf{x})
	:=
	e^{\psi_{\mathrm Q}(\mathbf{x})/\varepsilon}\,d\mathrm Q(\mathbf{x}).
	\]
	Because \(\mathrm P\) and \(\mathrm Q\) are compactly supported and the
	potentials are finite on the corresponding supports, \(\Lambda_{\mathrm P}\)
	and \(\Lambda_{\mathrm Q}\) are finite measures.
	
	Let
	\[
	\Gamma
	:=
	\Lambda_{\mathrm P}
	-
	e^{-C/\varepsilon}\Lambda_{\mathrm Q}.
	\]
	Then \(\Gamma\) is a finite signed measure and
	\[
	\int_{\mathbb{R}^d}
	\exp\left\{
	-\frac{\|\mathbf{u}-\mathbf{x}\|^2}{2\varepsilon}
	\right\}
	\,d\Gamma(\mathbf{x})
	=
	0,
	\qquad
	\mathbf{u}\in\operatorname{int}\mathbb{B}_d.
	\]
	
	The left-hand side is a real-analytic function of \(\mathbf u\). Since it
	vanishes on the open set \(\operatorname{int}\mathbb{B}_d\), the identity
	theorem for real-analytic functions implies that it vanishes on all of
	\(\mathbb R^d\). Hence
	\[
	G_\varepsilon * \Gamma=0
	\quad\text{on }\mathbb R^d,
	\]
	where
	\[
	G_\varepsilon(\mathbf x)
	=
	\exp\left\{
	-\frac{\|\mathbf x\|^2}{2\varepsilon}
	\right\}.
	\]
	Taking Fourier transforms yields
	\[
	\widehat G_\varepsilon(\boldsymbol\xi)\,
	\widehat\Gamma(\boldsymbol\xi)
	=
	0.
	\]
	Moreover,
	\[
	\widehat G_\varepsilon(\boldsymbol\xi)
	=
	(2\pi\varepsilon)^{d/2}
	\exp\left\{
	-\frac{\varepsilon\|\boldsymbol\xi\|^2}{2}
	\right\}
	>
	0
	\qquad
	\text{for every }\boldsymbol\xi\in\mathbb R^d.
	\]
	Therefore
	\[
	\widehat\Gamma(\boldsymbol\xi)=0
	\quad\text{for all }\boldsymbol\xi\in\mathbb R^d.
	\]
	By uniqueness of the Fourier transform for finite signed measures,
	\[
	\Gamma=0.
	\]
	Hence
	\[
	\Lambda_{\mathrm P}
	=
	e^{-C/\varepsilon}\Lambda_{\mathrm Q}.
	\]
	
	Finally, using the second equation in the Schrödinger system,
	\[
	e^{\psi_{\mathrm P}(\mathbf x)/\varepsilon}
	=
	\left[
	\int_{\mathbb B_d}
	\exp\left\{
	\frac{
		\varphi_{\mathrm P}(\mathbf u)-\frac12\|\mathbf u-\mathbf x\|^2
	}{\varepsilon}
	\right\}
	\,d\mathrm U_d(\mathbf u)
	\right]^{-1},
	\]
	and similarly for \(\mathrm Q\), the identity
	\[
	\varphi_{\mathrm P}
	=
	\varphi_{\mathrm Q}+C
	\]
	implies
	\[
	e^{\psi_{\mathrm P}(\mathbf x)/\varepsilon}
	=
	e^{-C/\varepsilon}
	e^{\psi_{\mathrm Q}(\mathbf x)/\varepsilon}.
	\]
	Substituting this into
	\[
	\Lambda_{\mathrm P}
	=
	e^{-C/\varepsilon}\Lambda_{\mathrm Q}
	\]
	gives
	\[
	e^{-C/\varepsilon}
	e^{\psi_{\mathrm Q}(\mathbf x)/\varepsilon}
	d\mathrm P(\mathbf x)
	=
	e^{-C/\varepsilon}
	e^{\psi_{\mathrm Q}(\mathbf x)/\varepsilon}
	d\mathrm Q(\mathbf x).
	\]
	Since
	\[
	e^{-C/\varepsilon}
	e^{\psi_{\mathrm Q}(\mathbf x)/\varepsilon}>0,
	\]
	we obtain
	\[
	d\mathrm P(\mathbf x)=d\mathrm Q(\mathbf x).
	\]
	Therefore \(\mathrm P=\mathrm Q\).
\end{proof}

\begin{proof}[Proof of Theorem \ref{thm:two-sample-functional-clt}]
	Under \(H_0\), write the common distribution as \(\mathrm P_0\), so that
	\[
	\mathrm P=\mathrm Q=\mathrm P_0
	\]
	and
	\[
	T_{\varepsilon,\mathrm P}
	=
	T_{\varepsilon,\mathrm Q}
	=
	T_{\varepsilon,0}.
	\]
	By the one-sample empirical EOT map CLT,
	\[
	\sqrt n
	\left(
	\widehat T_{\varepsilon,n}^{\mathrm P}
	-
	T_{\varepsilon,0}
	\right)
	\rightsquigarrow
	\mathbb G_{\varepsilon}^{\mathrm P}
	\qquad
	\text{in }C^{s-1}(\mathcal X;\mathbb R^d),
	\]
	and
	\[
	\sqrt m
	\left(
	\widehat T_{\varepsilon,m}^{\mathrm Q}
	-
	T_{\varepsilon,0}
	\right)
	\rightsquigarrow
	\mathbb G_{\varepsilon}^{\mathrm Q}
	\qquad
	\text{in }C^{s-1}(\mathcal X;\mathbb R^d),
	\]
	where \(\mathbb G_{\varepsilon}^{\mathrm P}\) and
	\(\mathbb G_{\varepsilon}^{\mathrm Q}\) are independent copies of the same
	mean-zero Gaussian limit. Independence follows from the independence of the
	two samples.
	
	Therefore,
	\[
	\sqrt{\frac{nm}{n+m}}
	\left(
	\widehat T_{\varepsilon,n}^{\mathrm P}
	-
	\widehat T_{\varepsilon,m}^{\mathrm Q}
	\right)
	\]
	can be decomposed as
	\[
	\sqrt{\frac{m}{n+m}}
	\sqrt n
	\left(
	\widehat T_{\varepsilon,n}^{\mathrm P}
	-
	T_{\varepsilon,0}
	\right)
	-
	\sqrt{\frac{n}{n+m}}
	\sqrt m
	\left(
	\widehat T_{\varepsilon,m}^{\mathrm Q}
	-
	T_{\varepsilon,0}
	\right).
	\]
	Since \(n/(n+m)\to\lambda\), we have
	\[
	\sqrt{\frac{m}{n+m}}\to \sqrt{1-\lambda},
	\qquad
	\sqrt{\frac{n}{n+m}}\to \sqrt{\lambda}.
	\]
	Hence, by Slutsky's theorem,
	\[
	\sqrt{\frac{nm}{n+m}}
	\left(
	\widehat T_{\varepsilon,n}^{\mathrm P}
	-
	\widehat T_{\varepsilon,m}^{\mathrm Q}
	\right)
	\rightsquigarrow
	\sqrt{1-\lambda}\,\mathbb G_{\varepsilon}^{\mathrm P}
	-
	\sqrt{\lambda}\,\mathbb G_{\varepsilon}^{\mathrm Q}
	=: \mathbb G_{\varepsilon}^{(2)}.
	\]
	This proves the convergence in \(C^{s-1}(\mathcal X;\mathbb R^d)\).
	
	Since \(\mathbb B_d\subset\mathcal X\) and \(s\ge1\), the restriction map
	\[
	\iota:
	C^{s-1}(\mathcal X;\mathbb R^d)
	\to
	L^2(\mathbb B_d,\mathrm U_d;\mathbb R^d),
	\qquad
	\iota(f)=f|_{\mathbb B_d},
	\]
	is continuous. Indeed,
	\[
	\|\iota(f)\|_{L^2(\mathbb B_d,\mathrm U_d;\mathbb R^d)}
	\le
	\|f\|_{\infty,\mathcal X}
	\le
	\|f\|_{C^{s-1}(\mathcal X;\mathbb R^d)}.
	\]
	Therefore, the \(L^2\) convergence follows from the continuous mapping theorem.
\end{proof}

\begin{proof}[Proof of Theorem \ref{thm:null-limit}]
	By Theorem~\ref{thm:two-sample-functional-clt},
	\[
	\sqrt{\frac{nm}{n+m}}
	\left(
	\widehat T_{\varepsilon,n}^{\mathrm P}
	-
	\widehat T_{\varepsilon,m}^{\mathrm Q}
	\right)
	\rightsquigarrow
	\mathbb G_{\varepsilon}^{(2)}
	\qquad
	\text{in }L^2(\mathbb B_d,\mathrm U_d;\mathbb R^d).
	\]
	Define the functional
	\[
	\Psi(f)
	=
	\int_{\mathbb B_d}
	\|f(\mathbf u)\|^2
	\,d\mathrm U_d(\mathbf u)
	=
	\|f\|_{L^2(\mathbb B_d,\mathrm U_d;\mathbb R^d)}^2 .
	\]
	The map
	\[
	\Psi:L^2(\mathbb B_d,\mathrm U_d;\mathbb R^d)\to\mathbb R
	\]
	is continuous. Hence, by the continuous mapping theorem,
	\[
	\mathcal T_{n,m}
	=
	\Psi\left(
	\sqrt{\frac{nm}{n+m}}
	\left(
	\widehat T_{\varepsilon,n}^{\mathrm P}
	-
	\widehat T_{\varepsilon,m}^{\mathrm Q}
	\right)
	\right)
	\rightsquigarrow
	\Psi(\mathbb G_{\varepsilon}^{(2)}).
	\]
	This proves the first claim.
	
	For the second claim, since \(\mathbb G_{\varepsilon}^{(2)}\) is a mean-zero
	Gaussian random element in the Hilbert space
	\(L^2(\mathbb B_d,\mathrm U_d;\mathbb R^d)\), its covariance operator
	\(\mathcal K_{\varepsilon}^{(2)}\) is self-adjoint, positive, and trace-class.
	By the Karhunen--Loève expansion,
	\[
	\mathbb G_{\varepsilon}^{(2)}
	=
	\sum_{k=1}^{\infty}
	\sqrt{\omega_k^{(2)}}Z_k e_k^{(2)}
	\quad
	\text{in }L^2(\mathbb B_d,\mathrm U_d;\mathbb R^d),
	\]
	where \((e_k^{(2)})_{k\geq 1}\) is an orthonormal system of eigenfunctions of
	\(\mathcal K_{\varepsilon}^{(2)}\), \((\omega_k^{(2)})_{k\geq 1}\) are the
	corresponding eigenvalues, and \((Z_k)_{k\geq 1}\) are i.i.d.\ standard normal
	variables. Therefore,
	\[
	\|\mathbb G_{\varepsilon}^{(2)}\|_{L^2(\mathbb B_d,\mathrm U_d;\mathbb R^d)}^2
	\overset{d}{=}
	\sum_{k=1}^{\infty}
	\omega_k^{(2)}Z_k^2.
	\]
	This proves the result.
\end{proof}

\begin{proof}[Proof of Theorem \ref{thm:consistency}]
	By the \(L^2(\mathbb B_d,\mathrm U_d;\mathbb R^d)\)-consistency of the
	empirical entropic transport maps,
	\[
	\widehat T_{\varepsilon,n}^{\mathrm P}
	-
	\widehat T_{\varepsilon,m}^{\mathrm Q}
	\overset{\Prob}{\longrightarrow}
	T_{\varepsilon,\mathrm P}
	-
	T_{\varepsilon,\mathrm Q}.
	\]
	The continuous mapping theorem applied to the squared \(L^2\)-norm gives
	\[
	\frac{\mathcal T_{n,m}}{nm/(n+m)}
	\overset{\Prob}{\longrightarrow}
	\mathcal D_\varepsilon(\mathrm P,\mathrm Q).
	\]
	By Lemma~\ref{lem:injectivity}, \(\mathrm P\ne\mathrm Q\) implies
	\[
	\mathcal D_\varepsilon(\mathrm P,\mathrm Q)>0.
	\]
	Since Assumption~\ref{ass:sample-size-balance} implies
	\[
	\frac{nm}{n+m}\to\infty,
	\]
	we obtain
	\[
	\mathcal T_{n,m}\to\infty
	\]
	in probability under \(\mathbb P_{\mathrm P,\mathrm Q}\).
\end{proof}

\begin{proof}[Proof of Theorem \ref{thm:local-power}]
	Under the stated local alternatives, the local empirical-process expansion gives
	\[
	\sqrt{\frac{nm}{n+m}}
	\left[
	(\mathrm P_n-\mathrm P_0)
	-
	(\mathrm Q_m-\mathrm P_0)
	\right]
	\rightsquigarrow
	\sqrt{1-\lambda}\,\mathbb G_{\mathrm P_0}^{\mathrm P}
	-
	\sqrt{\lambda}\,\mathbb G_{\mathrm P_0}^{\mathrm Q}
	+
	(h_{\mathrm P}-h_{\mathrm Q})\mathrm P_0,
	\]
	where \(\mathbb G_{\mathrm P_0}^{\mathrm P}\) and
	\(\mathbb G_{\mathrm P_0}^{\mathrm Q}\) are independent
	\(\mathrm P_0\)-Brownian bridges. Since
	\[
	\Phi(\mathrm R)
	:=
	T_{\varepsilon,\mathrm R}
	\]
	is Hadamard differentiable at \(\mathrm P_0\), the functional delta method yields
	\[
	\sqrt{\frac{nm}{n+m}}
	\left(
	\widehat T_{\varepsilon,n}^{\mathrm P}
	-
	\widehat T_{\varepsilon,m}^{\mathrm Q}
	\right)
	\rightsquigarrow
	\mathbb G_{\varepsilon}^{(2)}
	+
	\dot T_{\varepsilon,\mathrm P_0}
	\bigl[
	(h_{\mathrm P}-h_{\mathrm Q})\mathrm P_0
	\bigr]
	\]
	in \(L^2(\mathbb B_d,\mathrm U_d;\mathbb R^d)\). The convergence of
	\(\mathcal T_{n,m}\) follows from the continuous mapping theorem applied to the
	squared \(L^2\)-norm.
\end{proof}

The proof of Corollary~\ref{cor:local-power} relies on the following two technical lemmas.
\begin{lemma}[Injectivity of the linearized entropic transport map]
	\label{lem:linearized-injectivity}
	Let \(\mathrm P_0\) be supported on a compact set
	\(\mathcal X\subset\mathbb R^d\), and assume that \(\mathrm P_0\) satisfies the
	regularity conditions imposed in the main text. Fix \(\varepsilon>0\), and let
	\((\varphi_0,\psi_0)\) be Schrödinger potentials associated with the entropic
	optimal transport problem between \(\mathrm U_d\) and \(\mathrm P_0\), for the
	quadratic cost
	\[
	c(\mathbf u,\mathbf x)=\frac12\|\mathbf u-\mathbf x\|^2.
	\]
	Define
	\[
	K_0(\mathbf u,\mathbf x)
	=
	\exp\left\{
	\frac{
		\varphi_0(\mathbf u)+\psi_0(\mathbf x)-\frac12\|\mathbf u-\mathbf x\|^2
	}{\varepsilon}
	\right\},
	\qquad
	(\mathbf u,\mathbf x)\in\mathbb B_d\times\mathcal X.
	\]
	Let
	\[
	\gamma=h\,\mathrm P_0,
	\qquad
	\int h\,d\mathrm P_0=0,
	\]
	where \(h\) is bounded and measurable. If
	\[
	\dot T_{\varepsilon,\mathrm P_0}[\gamma]=0
	\qquad
	\text{in }
	L^2(\mathbb B_d,\mathrm U_d;\mathbb R^d),
	\]
	then
	\[
	\gamma=0.
	\]
	Equivalently,
	\[
	\dot T_{\varepsilon,\mathrm P_0}[h\mathrm P_0]=0
	\quad
	\Longrightarrow
	\quad
	h=0
	\quad
	\mathrm P_0\text{-a.s.}
	\]
\end{lemma}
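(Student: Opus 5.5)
The argument linearizes the proof of Lemma~\ref{lem:injectivity}. For \(t\) small, set \(\mathrm P_t:=\mathrm P_0+t\gamma=(1+th)\mathrm P_0\). This is a probability measure on \(\mathcal X\) because \(h\) is bounded and \(\int h\,d\mathrm P_0=0\).

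By the Hadamard differentiability of the Schr\"odinger potentials, already used for Theorem~\ref{thm:local-power}, the normalized potentials \((\varphi_t,\psi_t)\) are differentiable in \(t\) at \(0\). Their derivatives \((\dot\varphi,\dot\psi)\) are continuous on \(\mathbb B_d\times\mathcal X\), and \(\dot\varphi\) is smooth on \(\operatorname{int}\mathbb B_d\). By~\eqref{eq:eot-map-gradient}, \(T_{\varepsilon,\mathrm P_t}=\mathrm{id}-\nabla\varphi_t\), so \(\dot T_{\varepsilon,\mathrm P_0}[\gamma]=-\nabla\dot\varphi\). If this vanishes in \(L^2(\mathrm U_d)\), continuity gives \(\nabla\dot\varphi\equiv0\) on the connected open set \(\operatorname{int}\mathbb B_d\). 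Hence \(\dot\varphi\equiv c\) for a constant \(c\), and by continuity also on \(\mathbb B_d\).

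Next I differentiate the Schr\"odinger system at \(t=0\).

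\emph{Second equation.} Write \(e^{-\psi_t(\mathbf x)/\varepsilon}=\int \exp\{(\varphi_t(\mathbf u)-\tfrac12\|\mathbf u-\mathbf x\|^2)/\varepsilon\}\,d\mathrm U_d(\mathbf u)\). Differentiating, the factor \(\dot\varphi=c\) comes out of the integral, and dividing by \(e^{-\psi_0/\varepsilon}\) gives \(-\dot\psi(\mathbf x)=\int K_0(\mathbf u,\mathbf x)\dot\varphi(\mathbf u)\,d\mathrm U_d(\mathbf u)=c\). So \(\dot\psi\equiv-c\) on \(\mathcal X\); this is the linear analogue of \(\psi_{\mathrm P}=\psi_{\mathrm Q}-C\).

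\emph{First equation.} Write \(e^{-\varphi_t(\mathbf u)/\varepsilon}=\int \exp\{(\psi_t(\mathbf x)-\tfrac12\|\mathbf u-\mathbf x\|^2)/\varepsilon\}\,(1+th(\mathbf x))\,d\mathrm P_0(\mathbf x)\). Differentiating and multiplying by \(e^{\varphi_0/\varepsilon}\) gives, for all \(\mathbf u\in\mathbb B_d\),
\[
-\frac{c}{\varepsilon}
=
\int K_0(\mathbf u,\mathbf x)\Bigl(\frac{\dot\psi(\mathbf x)}{\varepsilon}+h(\mathbf x)\Bigr)\,d\mathrm P_0(\mathbf x)
=
-\frac{c}{\varepsilon}+\int K_0(\mathbf u,\mathbf x)h(\mathbf x)\,d\mathrm P_0(\mathbf x).
\]
The last equality uses the marginal identity \(\int K_0(\mathbf u,\cdot)\,d\mathrm P_0=1\). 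Cancelling the constants, and writing \(\Gamma:=e^{\psi_0/\varepsilon}h\,\mathrm P_0\) for the resulting finite signed measure (finite because \(\psi_0\) is bounded on the compact set \(\mathcal X\)), this becomes
\[
\int \exp\Bigl\{-\frac{\|\mathbf u-\mathbf x\|^2}{2\varepsilon}\Bigr\}\,d\Gamma(\mathbf x)=0,
\qquad \mathbf u\in\mathbb B_d .
\]

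From here I follow the proof of Lemma~\ref{lem:injectivity} verbatim. The left-hand side is real-analytic in \(\mathbf u\), so it vanishes on all of \(\mathbb R^d\), i.e.\ \(G_\varepsilon*\Gamma\equiv0\). Since \(\widehat G_\varepsilon>0\), this forces \(\widehat\Gamma\equiv0\), hence \(\Gamma=0\). Because \(e^{\psi_0/\varepsilon}>0\), we conclude \(h=0\) \(\mathrm P_0\)-a.s., that is, \(\gamma=0\).

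The main obstacle is the first step: justifying that \(\dot T_{\varepsilon,\mathrm P_0}[\gamma]\) really equals \(-\nabla\dot\varphi\) with \(\dot\varphi\) being \(C^1\). This requires the derivative of the potentials to hold in a \(C^s\) topology, so that differentiation in \(t\) and \(\nabla_{\mathbf u}\) commute, and the system to be differentiable under the integral sign. I would take both facts from the Hadamard-differentiability result of \citet{goldfeld2024limit}, which gives differentiability of the potentials in \(C^s(\mathcal X)\). Alternatively, I would differentiate the explicit kernel formula directly, since every quantity there is bounded and smooth on compacta.
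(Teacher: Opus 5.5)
Your proposal is correct and takes essentially the same approach as the paper. You show the first-order potential perturbation is constant, push this through the linearized Schr\"odinger system to get \(\int K_0(\mathbf u,\mathbf x)h(\mathbf x)\,d\mathrm P_0(\mathbf x)=0\), and conclude via analytic continuation and Fourier injectivity of the Gaussian kernel; the only differences are that you derive the linearized system by differentiating the Schr\"odinger equations for \((1+th)\mathrm P_0\) directly where the paper states it, and that the \(C^1\) regularity of \(\dot\varphi\) you flag is simply assumed in the paper as well.
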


\begin{proof}
	Let \(\mathrm P_t=\mathrm P_0+t\gamma\), for \(t\) sufficiently small, and let
	\((\varphi_t,\psi_t)\) denote Schrödinger potentials associated with
	\((\mathrm U_d,\mathrm P_t)\). By Hadamard differentiability of the
	Schrödinger potentials, there exist first-order perturbations
	\(a_\gamma\) and \(b_\gamma\) such that
	\[
	\varphi_t=\varphi_0+t a_\gamma+o(t),
	\qquad
	\psi_t=\psi_0+t b_\gamma+o(t).
	\]
	The perturbations satisfy the linearized Schrödinger system
	\begin{equation}\label{eq:linearized-schrodinger-a}
		a_\gamma(\mathbf u)
		+
		\int_{\mathcal X} K_0(\mathbf u,\mathbf x)b_\gamma(\mathbf x)\,d\mathrm P_0(\mathbf x)
		=
		-\varepsilon
		\int_{\mathcal X} K_0(\mathbf u,\mathbf x)\,d\gamma(\mathbf x),
	\end{equation}
	and
	\begin{equation}\label{eq:linearized-schrodinger-b}
		b_\gamma(\mathbf x)
		+
		\int_{\mathbb B_d}K_0(\mathbf u,\mathbf x)a_\gamma(\mathbf u)\,d\mathrm U_d(\mathbf u)
		=
		0.
	\end{equation}
	
	Moreover,
	\begin{equation}\label{eq:normalization}
		\int_{\mathcal X}K_0(\mathbf u,\mathbf x)\,d\mathrm P_0(\mathbf x)=1,
		\qquad
		\int_{\mathbb B_d}K_0(\mathbf u,\mathbf x)\,d\mathrm U_d(\mathbf u)=1.
	\end{equation}

	For the cost convention \(c(\mathbf u,\mathbf x)=\frac12\|\mathbf u-\mathbf x\|^2\), the entropic map satisfies
	\[
	T_{\varepsilon,\mathrm P}(\mathbf u)=\mathbf u-\nabla\varphi_{\mathrm P}(\mathbf u).
	\]
	Therefore,
	\[
	\dot T_{\varepsilon,\mathrm P_0}[\gamma](\mathbf u)
	=
	-\nabla a_\gamma(\mathbf u).
	\]
	If \(\dot T_{\varepsilon,\mathrm P_0}[\gamma]=0\) in
	\(L^2(\mathbb B_d,\mathrm U_d;\mathbb R^d)\), then
	\[
	\nabla a_\gamma=0
	\qquad
	\mathrm U_d\text{-a.e. on }\mathbb B_d.
	\]
	Since \(a_\gamma\) belongs to a sufficiently smooth function space, in
	particular \(a_\gamma\in C^1(\operatorname{int}\mathbb B_d)\), continuity gives
	\[
	\nabla a_\gamma(\mathbf u)=0
	\qquad
	\text{for all }\mathbf u\in\operatorname{int}(\mathbb B_d).
	\]
	Because \(\operatorname{int}(\mathbb B_d)\) is connected, there exists
	\(C\in\mathbb R\) such that
	\[
	a_\gamma(\mathbf u)=C
	\qquad
	\mathbf u\in\operatorname{int}(\mathbb B_d).
	\]
	Since \(\mathrm U_d(\partial\mathbb B_d)=0\), this implies
	\(a_\gamma=C\) \(\mathrm U_d\)-a.e. on \(\mathbb B_d\).
	
	Using \eqref{eq:linearized-schrodinger-b} and the second normalization in \eqref{eq:normalization}, we get
	\[
	b_\gamma(\mathbf x)=-C
	\qquad
	\mathrm P_0\text{-a.e.}
	\]
	Substituting \(a_\gamma=C\) and \(b_\gamma=-C\) into \eqref{eq:linearized-schrodinger-a}, and using the first
	normalization in \eqref{eq:normalization}, yields
	
	\[
	\int_{\mathcal X}K_0(\mathbf u,\mathbf x)\,d\gamma(\mathbf x)=0
	\qquad
	\text{for all }\mathbf u\in\operatorname{int}(\mathbb B_d).
	\]
	
	Since
	\[
	K_0(\mathbf u,\mathbf x)
	=
	\exp\left\{\frac{\varphi_0(\mathbf u)}{\varepsilon}\right\}
	\exp\left\{
	\frac{\psi_0(\mathbf x)-\frac12\|\mathbf u-\mathbf x\|^2}{\varepsilon}
	\right\},
	\]
	and \(\exp\{\varphi_0(\mathbf u)/\varepsilon\}>0\), we have
	\[
	\int_{\mathcal X}
	\exp\left\{
	-\frac{\|\mathbf u-\mathbf x\|^2}{2\varepsilon}
	\right\}
	\exp\left\{
	\frac{\psi_0(\mathbf x)}{\varepsilon}
	\right\}
	\,d\gamma(\mathbf x)
	=
	0
	\]
	for all \(\mathbf u\in\operatorname{int}(\mathbb B_d)\).
	
	Define the finite signed measure
	\[
	d\widetilde\gamma(\mathbf x)
	=
	\exp\left\{
	\frac{\psi_0(\mathbf x)}{\varepsilon}
	\right\}
	\,d\gamma(\mathbf x).
	\]
	This measure is finite because \(\gamma=h\,\mathrm P_0\), \(h\) is bounded, and
	\(\exp\{\psi_0/\varepsilon\}\) is bounded on the compact set \(\mathcal X\).
	Then
	\[
	F(\mathbf u)
	:=
	\int_{\mathcal X}
	\exp\left\{
	-\frac{\|\mathbf u-\mathbf x\|^2}{2\varepsilon}
	\right\}
	\,d\widetilde\gamma(\mathbf x)
	=
	0
	\]
	for all \(\mathbf u\in\operatorname{int}(\mathbb B_d)\).
	
	The function \(F\) is real analytic on \(\mathbb R^d\). Since it vanishes on the
	nonempty open set \(\operatorname{int}(\mathbb B_d)\), the identity theorem for
	real analytic functions gives
	\[
	F(\mathbf u)=0
	\qquad
	\text{for all }\mathbf u\in\mathbb R^d.
	\]
	Equivalently,
	\[
	G_\varepsilon*\widetilde\gamma=0
	\qquad
	\text{on }\mathbb R^d,
	\]
	where
	\[
	G_\varepsilon(z)
	=
	\exp\left\{
	-\frac{\|z\|^2}{2\varepsilon}
	\right\}.
	\]
	Taking Fourier transforms,
	\[
	\widehat G_\varepsilon(\boldsymbol\xi)\widehat{\widetilde\gamma}(\boldsymbol\xi)=0.
	\]
	Since
	\[
	\widehat G_\varepsilon(\boldsymbol\xi)
	=
	(2\pi\varepsilon)^{d/2}
	\exp\left\{
	-\frac{\varepsilon\|\boldsymbol\xi\|^2}{2}
	\right\}>0
	\qquad
	\text{for all }\boldsymbol\xi\in\mathbb R^d,
	\]
	we obtain
	\[
	\widehat{\widetilde\gamma}(\boldsymbol\xi)=0
	\qquad
	\text{for all }\boldsymbol\xi\in\mathbb R^d.
	\]
	By uniqueness of Fourier transforms for finite signed measures,
	\[
	\widetilde\gamma=0.
	\]
	Finally, the density
	\[
	\exp\left\{
	\frac{\psi_0(\mathbf x)}{\varepsilon}
	\right\}
	\]
	is strictly positive, so \(\gamma=0\). This proves the claim.
\end{proof}
\begin{lemma}[Nondegeneracy of nonzero local drift directions]
	\label{lem:drift-nondegeneracy}
	Let
	\[
	\mathcal H
	:=
	L^2(\mathbb B_d,\mathrm U_d;\mathbb R^d).
	\]  
	By the Hadamard differentiability assumption used in
	Theorem~\ref{thm:local-power}, the derivative of the entropic transport map at
	\(\mathrm P_0\) induces a continuous linear operator
	\[
	\dot\Phi_{\varepsilon,0}:
	L^2_0(\mathrm P_0)
	\to
	\mathcal H,
	\qquad
	\dot\Phi_{\varepsilon,0}h
	:=
	\dot T_{\varepsilon,\mathrm P_0}[h\,\mathrm P_0],
	\]
	where
	\[
	L^2_0(\mathrm P_0)
	:=
	\left\{
	h\in L^2(\mathrm P_0):
	\int h\,d\mathrm P_0=0
	\right\}.
	\]
	Let \(G:=\mathbb G_{\varepsilon}^{(2)}\). If
	\[
	\Delta
	=
	\dot\Phi_{\varepsilon,0}(h_{\mathrm P}-h_{\mathrm Q})
	\ne0
	\qquad
	\text{in }\mathcal H,
	\]
	then
	\[
	\Var
	\left(
	\left\langle
	G,\Delta
	\right\rangle_{\mathcal H}
	\right)
	>0.
	\]
	Equivalently,
	\[
	\left\langle
	\mathcal K_\varepsilon^{(2)}\Delta,\Delta
	\right\rangle_{\mathcal H}
	>0.
	\]
\end{lemma}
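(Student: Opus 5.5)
The plan is to reduce the claim to an adjoint computation in \(L^2_0(\mathrm P_0)\), where positivity becomes immediate once the test direction is paired against \(h:=h_{\mathrm P}-h_{\mathrm Q}\) itself.

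\textbf{Step 1 (covariance of \(G\) along \(\Delta\)).} By Theorem~\ref{thm:local-power},
\[
G=\sqrt{1-\lambda}\,\dot\Phi_{\varepsilon,0}[\mathbb G^{\mathrm P}_{\mathrm P_0}]-\sqrt{\lambda}\,\dot\Phi_{\varepsilon,0}[\mathbb G^{\mathrm Q}_{\mathrm P_0}],
\]
with independent \(\mathrm P_0\)-Brownian bridges. Independence gives
\[
\Var(\langle G,\Delta\rangle_{\mathcal H})
=(1-\lambda)V+\lambda V=V,
\qquad
V:=\Var\bigl(\langle\dot\Phi_{\varepsilon,0}[\mathbb G_{\mathrm P_0}],\Delta\rangle_{\mathcal H}\bigr),
\]
so it suffices to show \(V>0\). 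This also gives the equivalent form \(\langle\mathcal K^{(2)}_\varepsilon\Delta,\Delta\rangle_{\mathcal H}=V\).

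\textbf{Step 2 (adjoint representation).} Since \(\dot\Phi_{\varepsilon,0}:L^2_0(\mathrm P_0)\to\mathcal H\) is continuous and linear, \(h\mapsto\langle\dot\Phi_{\varepsilon,0}h,\Delta\rangle_{\mathcal H}\) is a bounded functional on \(L^2_0(\mathrm P_0)\). By Riesz it equals \(\langle g,h\rangle_{L^2(\mathrm P_0)}\) for a unique \(g=\dot\Phi_{\varepsilon,0}^{*}\Delta\in L^2_0(\mathrm P_0)\). The Brownian bridge acts as the isonormal process on \(L^2_0(\mathrm P_0)\): \(\mathbb G_{\mathrm P_0}(f)\sim N(0,\Var_{\mathrm P_0}(f))\). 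Linearity and continuity of the derivative then identify
\[
\langle\dot\Phi_{\varepsilon,0}[\mathbb G_{\mathrm P_0}],\Delta\rangle_{\mathcal H}=\mathbb G_{\mathrm P_0}(g),
\]
so that \(V=\|g\|^2_{L^2(\mathrm P_0)}\).

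\textbf{Step 3 (nonvanishing).} Take \(h=h_{\mathrm P}-h_{\mathrm Q}\in L^2_0(\mathrm P_0)\); it is bounded and centered. Then
\[
\langle g,h\rangle_{L^2(\mathrm P_0)}=\langle\dot\Phi_{\varepsilon,0}h,\Delta\rangle_{\mathcal H}=\|\Delta\|^2_{\mathcal H}>0,
\]
so \(g\neq0\) and hence \(V>0\). Lemma~\ref{lem:linearized-injectivity} is not needed here. It only enters when \(h_{\mathrm P}\ne h_{\mathrm Q}\) is used to guarantee \(\Delta\neq0\) in Corollary~\ref{cor:local-power}.

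\textbf{Main obstacle.} The delicate step is Step 2. The Hadamard derivative is a priori defined on the tangent set in which the Brownian bridge lives (for example \(\ell^\infty\) over a Donsker class of smooth functions), not literally on \(L^2_0(\mathrm P_0)\). So I must justify that the extension \(\dot\Phi_{\varepsilon,0}\) evaluated at \(\mathbb G_{\mathrm P_0}\) agrees with the \(L^2\)-isonormal construction. My first attempt would use the explicit linearized Schr\"odinger system from the proof of Lemma~\ref{lem:linearized-injectivity}. There \(a_\gamma\), and hence \(-\nabla a_\gamma\), depends on \(\gamma\) only through integrals \(\int f_{\mathbf u}\,d\gamma\) of smooth kernels \(f_{\mathbf u}\) built from \(K_0\), solved by a bounded inverse. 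This yields an influence function \(\ell\in L^2_0(\mathrm P_0;\mathcal H)\) with \(\dot\Phi_{\varepsilon,0}\gamma=\int\ell\,d\gamma\). Approximating \(\mathbb G_{\mathrm P_0}\) by finite-dimensional projections then gives \(\langle\dot\Phi_{\varepsilon,0}[\mathbb G_{\mathrm P_0}],\Delta\rangle=\mathbb G_{\mathrm P_0}(\langle\ell(\cdot),\Delta\rangle_{\mathcal H})\) with \(g=\langle\ell(\cdot),\Delta\rangle_{\mathcal H}\).
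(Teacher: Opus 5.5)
Your proposal is correct and is essentially the paper's proof. Both arguments express $\Var(\langle G,\Delta\rangle_{\mathcal H})$ as $\|\dot\Phi_{\varepsilon,0}^{*}\Delta\|_{L^2(\mathrm P_0)}^2$ via the adjoint and the isonormal structure of the two independent Brownian bridges, then pair $\dot\Phi_{\varepsilon,0}^{*}\Delta$ with $h_{\mathrm P}-h_{\mathrm Q}$ to obtain $\|\Delta\|_{\mathcal H}^2>0$; the paper argues this last step by contradiction, while you argue it directly. The identification $\langle\dot\Phi_{\varepsilon,0}[\mathbb G_{\mathrm P_0}],\Delta\rangle_{\mathcal H}=\mathbb G_{\mathrm P_0}(\dot\Phi_{\varepsilon,0}^{*}\Delta)$, which you single out as the main obstacle, is simply asserted in the paper, so your influence-function justification via the linearized Schr\"odinger system adds rigor without changing the route.
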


\begin{proof}
	Let
	\[
	r:=h_{\mathrm P}-h_{\mathrm Q},
	\qquad
	\Delta=\dot\Phi_{\varepsilon,0} r.
	\]
	By the Hadamard differentiability assumption used in
	Theorem~\ref{thm:local-power},
	\[
	\dot\Phi_{\varepsilon,0}:
	L^2_0(\mathrm P_0)\to\mathcal H
	\]
	is a continuous linear operator. Hence it admits a Hilbert-space adjoint
	\[
	\dot\Phi_{\varepsilon,0}^{\,*}:
	\mathcal H\to L^2_0(\mathrm P_0).
	\]
	For any \(f\in\mathcal H\), the representation of \(G\) and the definition of
	the adjoint imply
	\[
	\left\langle G,f\right\rangle_{\mathcal H}
	=
	\sqrt{1-\lambda}\,
	\mathbb G_{\mathrm P_0}^{\mathrm P}
	\left(
	\dot\Phi_{\varepsilon,0}^{\,*} f
	\right)
	-
	\sqrt{\lambda}\,
	\mathbb G_{\mathrm P_0}^{\mathrm Q}
	\left(
	\dot\Phi_{\varepsilon,0}^{\,*} f
	\right).
	\]
	Since \(\mathbb G_{\mathrm P_0}^{\mathrm P}\) and
	\(\mathbb G_{\mathrm P_0}^{\mathrm Q}\) are independent isonormal Gaussian
	processes on \(L^2_0(\mathrm P_0)\), it follows that
	\[
	\Var
	\left(
	\left\langle G,f\right\rangle_{\mathcal H}
	\right)
	=
	\left\|
	\dot\Phi_{\varepsilon,0}^{\,*} f
	\right\|_{L^2(\mathrm P_0)}^2.
	\]
	Taking \(f=\Delta=\dot\Phi_{\varepsilon,0}r\), we obtain
	\[
	\Var
	\left(
	\left\langle G,\Delta\right\rangle_{\mathcal H}
	\right)
	=
	\left\|
	\dot\Phi_{\varepsilon,0}^{\,*}
	\dot\Phi_{\varepsilon,0}r
	\right\|_{L^2(\mathrm P_0)}^2.
	\]
	If this variance were zero, then
	\[
	\dot\Phi_{\varepsilon,0}^{\,*}
	\dot\Phi_{\varepsilon,0}r=0.
	\]
	Taking the inner product with \(r\) in \(L^2(\mathrm P_0)\) gives
	\[
	0
	=
	\left\langle
	\dot\Phi_{\varepsilon,0}^{\,*}
	\dot\Phi_{\varepsilon,0}r,r
	\right\rangle_{L^2(\mathrm P_0)}
	=
	\left\|
	\dot\Phi_{\varepsilon,0}r
	\right\|_{\mathcal H}^2
	=
	\|\Delta\|_{\mathcal H}^2,
	\]
	which contradicts \(\Delta\ne0\). Therefore
	\[
	\Var
	\left(
	\left\langle G,\Delta\right\rangle_{\mathcal H}
	\right)>0.
	\]
	Finally, by the definition of the covariance operator of \(G\),
	\[
	\Var
	\left(
	\left\langle G,\Delta\right\rangle_{\mathcal H}
	\right)
	=
	\left\langle
	\mathcal K_\varepsilon^{(2)}\Delta,\Delta
	\right\rangle_{\mathcal H}.
	\]
	Hence
	\[
	\left\langle
	\mathcal K_\varepsilon^{(2)}\Delta,\Delta
	\right\rangle_{\mathcal H}>0.
	\]
\end{proof}

\begin{proof}[Proof of Corollary \ref{cor:local-power}] 

	By Theorem~\ref{thm:local-power},
	\[
	\mathcal T_{n,m}
	\rightsquigarrow
	\|G+\Delta\|_{\mathcal H}^2
	\]
	under the stated local alternatives. Hence the asymptotic rejection probability
	is
	\[
	\beta(h_{\mathrm P},h_{\mathrm Q})
	=
	\Prob
	\left(
	\|G+\Delta\|_{\mathcal H}^2>c_\alpha
	\right).
	\]
	
	Let
	\[
	B_\alpha
	:=
	\left\{
	f\in\mathcal H:
	\|f\|_{\mathcal H}^2\le c_\alpha
	\right\}.
	\]
	The set \(B_\alpha\) is closed, convex, and symmetric. Since \(G\) is a centered
	Gaussian random element in \(\mathcal H\), Anderson's inequality gives
	\[
	\Prob(G+\Delta\in B_\alpha)
	\le
	\Prob(G\in B_\alpha).
	\]
	Taking complements yields
	\[
	\beta(h_{\mathrm P},h_{\mathrm Q})
	=
	\Prob
	\left(
	\|G+\Delta\|_{\mathcal H}^2>c_\alpha
	\right)
	\ge
	\Prob
	\left(
	\|G\|_{\mathcal H}^2>c_\alpha
	\right)
	=
	\alpha .
	\]
	
	It remains to prove the strict statement. By definition,
	\[
	\Delta
	=
	\dot T_{\varepsilon,\mathrm P_0}
	\left[
	(h_{\mathrm P}-h_{\mathrm Q})\mathrm P_0
	\right].
	\]
	If
	\[
	h_{\mathrm P}\ne h_{\mathrm Q}
	\quad
	\mathrm P_0\text{-a.s.},
	\]
	then
	\[
	(h_{\mathrm P}-h_{\mathrm Q})\mathrm P_0
	\ne 0
	\]
	is a signed measure. Since
	\[
	\int (h_{\mathrm P}-h_{\mathrm Q})\,d\mathrm P_0=0,
	\]
	Lemma~\ref{lem:linearized-injectivity} implies
	\[
	\Delta\ne0
	\qquad
	\text{in }\mathcal H .
	\]
	
	Therefore, \(\Delta\) is a nonzero shift in the range of
	\(\dot T_{\varepsilon,\mathrm P_0}\). Under the corresponding strict Anderson
	inequality condition (see \citep{lewandowski1995anderson}), by Lemma~\ref{lem:drift-nondegeneracy},
	\[
	\Prob(G+\Delta\in B_\alpha)
	<
	\Prob(G\in B_\alpha).
	\]
	Taking complements gives
	\[
	\beta(h_{\mathrm P},h_{\mathrm Q})
	=
	\Prob
	\left(
	\|G+\Delta\|_{\mathcal H}^2>c_\alpha
	\right)
	>
	\Prob
	\left(
	\|G\|_{\mathcal H}^2>c_\alpha
	\right)
	=
	\alpha .
	\]
	This proves the claim.
\end{proof}
\begin{proof}[Proof of Theorem \ref{thm:weighted-bootstrap}]
	Under \(H_0\), write the common distribution as \(\mathrm P_0\). By the weighted
	bootstrap empirical-process CLT, conditionally on the data,
	\[
	\sqrt n(\mathrm P_n^*-\mathrm P_n)
	\rightsquigarrow_{\Prob}
	\mathbb G_{\mathrm P_0}^{\mathrm P},
	\qquad
	\sqrt m(\mathrm Q_m^*-\mathrm Q_m)
	\rightsquigarrow_{\Prob}
	\mathbb G_{\mathrm P_0}^{\mathrm Q},
	\]
	in the corresponding empirical-process tangent spaces, where
	\(\mathbb G_{\mathrm P_0}^{\mathrm P}\) and
	\(\mathbb G_{\mathrm P_0}^{\mathrm Q}\) are independent
	\(\mathrm P_0\)-Brownian bridges. The independence follows from the independence
	of the two samples and of the two multiplier sequences.
	
	Since
	\[
	\Phi(\mathrm R)=T_{\varepsilon,\mathrm R}
	\]
	is Hadamard differentiable at \(\mathrm P_0\), the conditional functional delta
	method yields
	\[
	\sqrt n
	\left(
	\widehat T_{\varepsilon,n}^{\mathrm P,*}
	-
	\widehat T_{\varepsilon,n}^{\mathrm P}
	\right)
	\rightsquigarrow_{\Prob}
	\dot T_{\varepsilon,\mathrm P_0}
	\left[
	\mathbb G_{\mathrm P_0}^{\mathrm P}
	\right],
	\]
	and
	\[
	\sqrt m
	\left(
	\widehat T_{\varepsilon,m}^{\mathrm Q,*}
	-
	\widehat T_{\varepsilon,m}^{\mathrm Q}
	\right)
	\rightsquigarrow_{\Prob}
	\dot T_{\varepsilon,\mathrm P_0}
	\left[
	\mathbb G_{\mathrm P_0}^{\mathrm Q}
	\right]
	\]
	in \(L^2(\mathbb B_d,\mathrm U_d;\mathbb R^d)\).
	
	By definition,
	\[
	\begin{aligned}
		\mathbb Z_{n,m}^*
		=
		\sqrt{\frac{m}{n+m}}\,
		\sqrt n
		\left(
		\widehat T_{\varepsilon,n}^{\mathrm P,*}
		-
		\widehat T_{\varepsilon,n}^{\mathrm P}
		\right)
		-
		\sqrt{\frac{n}{n+m}}\,
		\sqrt m
		\left(
		\widehat T_{\varepsilon,m}^{\mathrm Q,*}
		-
		\widehat T_{\varepsilon,m}^{\mathrm Q}
		\right).
	\end{aligned}
	\]
	By Assumption~\ref{ass:sample-size-balance},
	\[
	\sqrt{\frac{m}{n+m}}
	\to
	\sqrt{1-\lambda},
	\qquad
	\sqrt{\frac{n}{n+m}}
	\to
	\sqrt{\lambda}.
	\]
	Thus, by conditional Slutsky's theorem,
	\[
	\mathbb Z_{n,m}^*
	\rightsquigarrow_{\Prob}
	\sqrt{1-\lambda}\,
	\dot T_{\varepsilon,\mathrm P_0}
	\left[
	\mathbb G_{\mathrm P_0}^{\mathrm P}
	\right]
	-
	\sqrt{\lambda}\,
	\dot T_{\varepsilon,\mathrm P_0}
	\left[
	\mathbb G_{\mathrm P_0}^{\mathrm Q}
	\right].
	\]
	The right-hand side is precisely
	\[
	\mathbb G_\varepsilon^{(2)}.
	\]
	Therefore,
	\[
	\mathbb Z_{n,m}^*
	\rightsquigarrow_{\Prob}
	\mathbb G_\varepsilon^{(2)}
	\qquad
	\text{in }
	L^2(\mathbb B_d,\mathrm U_d;\mathbb R^d).
	\]
	
	Finally, the map
	\[
	f
	\mapsto
	\int_{\mathbb B_d}
	\|f(\mathbf u)\|^2\,d\mathrm U_d(\mathbf u)
	\]
	is continuous on \(L^2(\mathbb B_d,\mathrm U_d;\mathbb R^d)\). The conditional
	continuous mapping theorem gives
	\[
	\mathcal T_{n,m}^*
	\rightsquigarrow_{\Prob}
	\int_{\mathbb B_d}
	\left\|
	\mathbb G_\varepsilon^{(2)}(\mathbf u)
	\right\|^2
	\,d\mathrm U_d(\mathbf u).
	\]
\end{proof}

The proof of Corollary~\ref{cor:bootstrap-size-power} relies on two auxiliary results.
\begin{lemma}[Bootstrap critical values under fixed alternatives]
	\label{lem:bootstrap-tightness-fixed-alt}
	Suppose that Assumptions~\ref{ass:compact}, \ref{ass:fixed-epsilon}, and \ref{ass:sample-size-balance} hold for a fixed pair \((\mathrm P,\mathrm Q)\) with \(\mathrm P\ne \mathrm Q\). Assume further that the multiplier variables satisfy the moment conditions stated in Section~\ref{subsec:weighted-bootstrap}.
	Then, under \(\mathbb P_{\mathrm P,\mathrm Q}\),
	\[
	\mathcal T_{n,m}^{*}
	=
	O_{\mathbb P}^{*}(1)
	\]
	in \(\mathbb P_{\mathrm P,\mathrm Q}\)-probability. Consequently, for every
	\(\alpha\in(0,1)\), the conditional \((1-\alpha)\)-quantile
	\(c_{n,m,1-\alpha}^{*}\) satisfies
	\[
	c_{n,m,1-\alpha}^{*}
	=
	O_{\mathbb P_{\mathrm P,\mathrm Q}}(1).
	\]
\end{lemma}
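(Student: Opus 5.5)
The key point is that $\mathcal T_{n,m}^*$ is centered at each sample's own empirical map, so its behaviour does not use $\mathrm P=\mathrm Q$. The plan is to rerun the bootstrap argument of Theorem~\ref{thm:weighted-bootstrap} separately for each sample, linearizing at $\mathrm P$ and $\mathrm Q$ instead of at the common $\mathrm P_0$.

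\textbf{Step 1: decomposition.} Write
\[
\mathbb Z_{n,m}^*=\sqrt{\tfrac{m}{n+m}}\,\sqrt n\bigl(\widehat T^{\mathrm P,*}_{\varepsilon,n}-\widehat T^{\mathrm P}_{\varepsilon,n}\bigr)-\sqrt{\tfrac{n}{n+m}}\,\sqrt m\bigl(\widehat T^{\mathrm Q,*}_{\varepsilon,m}-\widehat T^{\mathrm Q}_{\varepsilon,m}\bigr).
\]
Each bracket involves only one sample, its own multipliers, and its own population law. The null hypothesis therefore plays no role.

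\textbf{Step 2: conditional limits for each bracket.} Apply the weighted bootstrap empirical-process CLT to each sample. This gives, conditionally on the data,
\[
\sqrt n(\mathrm P_n^*-\mathrm P_n)\rightsquigarrow_{\Prob}\mathbb G_{\mathrm P},\qquad \sqrt m(\mathrm Q_m^*-\mathrm Q_m)\rightsquigarrow_{\Prob}\mathbb G_{\mathrm Q},
\]
where $\mathbb G_{\mathrm P}$ and $\mathbb G_{\mathrm Q}$ are the Brownian bridges of $\mathrm P$ and $\mathrm Q$. These hold under the moment condition $\E|\xi|^{2+\eta}<\infty$, in the same function class used in the one-sample EOT-map CLT.

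The map $\mathrm R\mapsto T_{\varepsilon,\mathrm R}$ is Hadamard differentiable at $\mathrm P$ and at $\mathrm Q$ separately. The differentiability result of \citet{goldfeld2024limit} holds at every compactly supported law in $\mathcal X$, so Assumption~\ref{ass:compact} suffices. The conditional functional delta method then gives
\[
\sqrt n\bigl(\widehat T^{\mathrm P,*}_{\varepsilon,n}-\widehat T^{\mathrm P}_{\varepsilon,n}\bigr)\rightsquigarrow_{\Prob}\dot T_{\varepsilon,\mathrm P}[\mathbb G_{\mathrm P}],
\]
and the analogous statement for $\mathrm Q$, both in $L^2(\mathbb B_d,\mathrm U_d;\mathbb R^d)$. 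The two limits are tight Gaussian elements.

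\textbf{Step 3: combine and bound.} By Assumption~\ref{ass:sample-size-balance} and conditional Slutsky, $\mathbb Z_{n,m}^*$ converges conditionally to
\[
\sqrt{1-\lambda}\,\dot T_{\varepsilon,\mathrm P}[\mathbb G_{\mathrm P}]-\sqrt\lambda\,\dot T_{\varepsilon,\mathrm Q}[\mathbb G_{\mathrm Q}],
\]
with the two terms independent. This is a tight Gaussian element with a.s.\ finite squared $L^2$-norm. The continuous mapping theorem then shows that $\mathcal T_{n,m}^*$ converges conditionally in probability to a finite random variable $W$, which gives $\mathcal T_{n,m}^*=O^*_{\mathbb P}(1)$ in $\mathbb P_{\mathrm P,\mathrm Q}$-probability.

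\textbf{Quantile bound.} Let $\mathrm{BL}_1$ denote the bounded-Lipschitz functions with norm at most one. Conditional weak convergence means $\sup_{\mathrm{BL}_1}|\E^*f(\mathcal T^*)-\E f(W)|\to0$ in probability. From this, for a fixed large $K$ with $\Prob(W>K/2)<\alpha/2$, the conditional probability $\Prob^*(\mathcal T^*_{n,m}>K)$ is below $\alpha$ with probability tending to one. Use a Lipschitz function sandwiched between the indicators of $(K/2,\infty)$ and $(K,\infty)$ to make this step rigorous. On that event $c^*_{n,m,1-\alpha}\le K$, which is the claim.

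\textbf{Main obstacle.} The delicate step is Step 2: justifying the conditional delta method at each distinct law $\mathrm P$ and $\mathrm Q$, rather than at a single $\mathrm P_0$. If a full conditional limit proves unavailable, I would fall back on a cruder tightness argument. That argument rests on the Lipschitz-type stability bound
\[
\|T_{\varepsilon,\mu}-T_{\varepsilon,\nu}\|_{L^2}\lesssim\|\mu-\nu\|_{\mathcal F},
\]
uniform over measures on $\mathcal X$, for a Donsker class $\mathcal F$ of smooth functions. Combined with $\E^*\|\sqrt n(\mathrm P_n^*-\mathrm P_n)\|_{\mathcal F}=O_{\mathbb P}(1)$, this yields the bound directly via conditional Markov.
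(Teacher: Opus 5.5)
Your proposal is correct and takes essentially the same route as the paper. Both arguments split $\mathbb Z^*_{n,m}$ into the two one-sample bootstrap brackets, apply the weighted bootstrap empirical-process result and the conditional functional delta method at $\mathrm P$ and at $\mathrm Q$ separately, and combine the brackets using the bounded coefficients. You then turn conditional control of $\mathcal T^*_{n,m}$ into a bound on $c^*_{n,m,1-\alpha}$; you additionally identify the full conditional Gaussian limit (the paper only asserts conditional tightness) and make that quantile step rigorous with a bounded-Lipschitz sandwich, whereas the paper works directly on the event $\mathbb P^*(\mathcal T^*_{n,m}>M)\le\alpha$.
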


\begin{proof}
	By the weighted bootstrap empirical-process tightness and the bootstrap
	functional delta method applied to
	\[
	\Phi:\mathrm R\mapsto T_{\varepsilon,\mathrm R}
	\]
	at \(\mathrm P\) and at \(\mathrm Q\), respectively, we have
	\[
	\sqrt n
	\left(
	\widehat T_{\varepsilon,n}^{\mathrm P,*}
	-
	\widehat T_{\varepsilon,n}^{\mathrm P}
	\right)
	=
	O_{\mathbb P}^{*}(1),
	\qquad
	\sqrt m
	\left(
	\widehat T_{\varepsilon,m}^{\mathrm Q,*}
	-
	\widehat T_{\varepsilon,m}^{\mathrm Q}
	\right)
	=
	O_{\mathbb P}^{*}(1)
	\]
	in
	\[
	\mathcal H:=L^2(\mathbb B_d,\mathrm U_d;\mathbb R^d),
	\]
	in \(\mathbb P_{\mathrm P,\mathrm Q}\)-probability.
	
	By definition,
	\[
	\begin{aligned}
		\mathbb Z_{n,m}^{*}
		=
		\sqrt{\frac{m}{n+m}}\,
		\sqrt n
		\left(
		\widehat T_{\varepsilon,n}^{\mathrm P,*}
		-
		\widehat T_{\varepsilon,n}^{\mathrm P}
		\right)
		-
		\sqrt{\frac{n}{n+m}}\,
		\sqrt m
		\left(
		\widehat T_{\varepsilon,m}^{\mathrm Q,*}
		-
		\widehat T_{\varepsilon,m}^{\mathrm Q}
		\right).
	\end{aligned}
	\]
	By Assumption~\ref{ass:sample-size-balance}, the two deterministic
	coefficients are bounded. Therefore
	\[
	\mathbb Z_{n,m}^{*}
	=
	O_{\mathbb P}^{*}(1)
	\qquad
	\text{in }\mathcal H,
	\]
	in \(\mathbb P_{\mathrm P,\mathrm Q}\)-probability. Since
	\[
	\mathcal T_{n,m}^{*}
	=
	\left\|
	\mathbb Z_{n,m}^{*}
	\right\|_{\mathcal H}^{2},
	\]
	the continuity of the squared norm yields
	\[
	\mathcal T_{n,m}^{*}
	=
	O_{\mathbb P}^{*}(1)
	\]
	in \(\mathbb P_{\mathrm P,\mathrm Q}\)-probability.
	
	It remains to translate conditional tightness of
	\(\mathcal T_{n,m}^{*}\) into tightness of the conditional quantiles.
	By conditional tightness, for every \(\eta>0\), there exists
	\(M<\infty\) such that, for all sufficiently large \(n,m\),
	\[
	\mathbb P_{\mathrm P,\mathrm Q}
	\left\{
	\mathbb P^{*}
	\left(
	\mathcal T_{n,m}^{*}>M
	\right)
	>\alpha
	\right\}
	<\eta .
	\]
	On the event
	\[
	\mathbb P^{*}
	\left(
	\mathcal T_{n,m}^{*}>M
	\right)
	\le \alpha,
	\]
	the conditional \((1-\alpha)\)-quantile satisfies
	\[
	c_{n,m,1-\alpha}^{*}
	\le M.
	\]
	Therefore,
	\[
	\mathbb P_{\mathrm P,\mathrm Q}
	\left(
	c_{n,m,1-\alpha}^{*}>M
	\right)
	<\eta
	\]
	for all sufficiently large \(n,m\). This proves
	\[
	c_{n,m,1-\alpha}^{*}
	=
	O_{\mathbb P_{\mathrm P,\mathrm Q}}(1).
	\]
\end{proof}

\begin{proof}[Proof of Corollary \ref{cor:bootstrap-size-power}]
	We first show that \(L_0\) has no atoms. Since \(\mathrm P_0\) is not a Dirac
	measure, there exists a Borel set \(A\) such that
	\[
	0<\mathrm P_0(A)<1.
	\]
	Define
	\[
	h_A
	:=
	\mathbf 1_A-\mathrm P_0(A).
	\]
	Then \(h_A\) is bounded,
	\[
	\int h_A\,d\mathrm P_0=0,
	\]
	and
	\[
	h_A\ne0
	\qquad
	\mathrm P_0\text{-a.s.}
	\]
	By Lemma~\ref{lem:linearized-injectivity}, 
	\[
	\dot T_{\varepsilon,\mathrm P_0}
	[h_A\mathrm P_0]
	\ne0
	\qquad
	\text{in }\mathcal H.
	\]
	Hence the linearized EOT map
	\[
	\dot\Phi_{\varepsilon,0}h
	:=
	\dot T_{\varepsilon,\mathrm P_0}[h\mathrm P_0]
	\]
	is not the zero operator.
	
	Under \(H_0\), we may write
	\[
	\mathbb G_{\varepsilon}^{(2)}
	=
	\dot\Phi_{\varepsilon,0}[\mathbb W],
	\]
	where
	\[
	\mathbb W
	:=
	\sqrt{1-\lambda}\,
	\mathbb G_{\mathrm P_0}^{\mathrm P}
	-
	\sqrt{\lambda}\,
	\mathbb G_{\mathrm P_0}^{\mathrm Q}.
	\]
	Since \(\mathbb G_{\mathrm P_0}^{\mathrm P}\) and
	\(\mathbb G_{\mathrm P_0}^{\mathrm Q}\) are independent
	\(\mathrm P_0\)-Brownian bridges, \(\mathbb W\) is again a
	\(\mathrm P_0\)-Brownian bridge. Because
	\(\dot\Phi_{\varepsilon,0}\) is not the zero operator, there exists
	\(f\in\mathcal H\) such that
	\[
	\dot\Phi_{\varepsilon,0}^{\,*}f
	\ne0
	\qquad
	\text{in }L_0^2(\mathrm P_0).
	\]
	Therefore
	\[
	\begin{aligned}
		\Var
		\left(
		\left\langle
		\mathbb G_{\varepsilon}^{(2)},f
		\right\rangle_{\mathcal H}
		\right)
		&=
		\Var
		\left(
		\mathbb W
		\left(
		\dot\Phi_{\varepsilon,0}^{\,*}f
		\right)
		\right)\\
		&=
		\left\|
		\dot\Phi_{\varepsilon,0}^{\,*}f
		\right\|_{L^2(\mathrm P_0)}^2
		>0.
	\end{aligned}
	\]
	Thus \(\mathbb G_{\varepsilon}^{(2)}\) is nondegenerate, equivalently,
	\(
	\mathcal K_{\varepsilon}^{(2)}\ne0.
	\)
	
By Theorem~\ref{thm:null-limit}, the null limit admits the spectral representation
\(
L_0
:=
\|\mathbb G_{\varepsilon}^{(2)}\|_{\mathcal H}^2
\overset{d}{=}
\sum_{k=1}^{\infty}\omega_k^{(2)} Z_k^2,
\)
Since \(\mathcal K_{\varepsilon}^{(2)}\neq 0\), there exists \(k_0\ge 1\) such that
\(\omega_{k_0}^{(2)}>0\). Hence
\[
L_0
\overset{d}{=}
\omega_{k_0}^{(2)} Z_{k_0}^2 + R,
\qquad
R
:=
\sum_{k\neq k_0}\omega_k^{(2)} Z_k^2,
\]
where \(R\) is independent of \(Z_{k_0}\). Since
\(\omega_{k_0}^{(2)}Z_{k_0}^2\) has a continuous distribution, for every
\(t\in\mathbb R\),
\[
\mathbb P(L_0=t)
=
\mathbb E\left[
\mathbb P\left(
\omega_{k_0}^{(2)}Z_{k_0}^2=t-R
\,\middle|\, R
\right)
\right]
=0.
\]
Therefore \(L_0\) has no atoms. In particular, \(c_\alpha\) is a continuity
point of the distribution of \(L_0\), and
\(\mathbb P(L_0>c_\alpha)=\alpha\).

	By Theorem~\ref{thm:weighted-bootstrap},
	\[
	\mathcal T_{n,m}^{*}
	\rightsquigarrow_{\Prob}
	L_0
	\]
	conditionally on the data. Since \(c_\alpha\) is a continuity point of the
	distribution of \(L_0\), the convergence of conditional distributions implies
	\[
	c_{n,m,1-\alpha}^{*}
	\overset{\Prob}{\to}
	c_\alpha .
	\]
	
	Under \(H_0\), Theorem~\ref{thm:null-limit} gives
	\[
	\mathcal T_{n,m}
	\rightsquigarrow
	L_0 .
	\]
	Hence, by Slutsky's theorem,
	\[
	\Prob_{H_0}
	\left(
	\phi_{n,m,\alpha}^{*}=1
	\right)
	=
	\Prob_{H_0}
	\left(
	\mathcal T_{n,m}>c_{n,m,1-\alpha}^{*}
	\right)
	\to
	\Prob(L_0>c_\alpha)
	=
	\alpha .
	\]
	
	Under the local alternatives of Theorem~\ref{thm:local-power},
	\[
	\mathcal T_{n,m}
	\rightsquigarrow
	\left\|
	\mathbb G_{\varepsilon}^{(2)}
	+
	\Delta_{\varepsilon,h_{\mathrm P},h_{\mathrm Q}}^{(2)}
	\right\|_{\mathcal H}^{2}.
	\]
	Together with
	\[
	c_{n,m,1-\alpha}^{*}
	\overset{\Prob}{\to}
	c_\alpha,
	\]
	Slutsky's theorem yields
	\[
	\Prob_{n,m,h_{\mathrm P},h_{\mathrm Q}}
	\left(
	\phi_{n,m,\alpha}^{*}=1
	\right)
	\to
	\Prob
	\left(
	\left\|
	\mathbb G_{\varepsilon}^{(2)}
	+
	\Delta_{\varepsilon,h_{\mathrm P},h_{\mathrm Q}}^{(2)}
	\right\|_{\mathcal H}^{2}
	>
	c_\alpha
	\right).
	\]
	
	Finally, under any fixed alternative \(\mathrm P\ne\mathrm Q\),
	Theorem~\ref{thm:consistency} gives
	\[
	\mathcal T_{n,m}\to\infty
	\qquad
	\text{in probability}.
	\]
	By Lemma~\ref{lem:bootstrap-tightness-fixed-alt},
	\[
	c_{n,m,1-\alpha}^{*}
	=
	O_{\mathbb P_{\mathrm P,\mathrm Q}}(1).
	\]
	Therefore,
	\[
	\Prob_{\mathrm P,\mathrm Q}
	\left(
	\mathcal T_{n,m}>c_{n,m,1-\alpha}^{*}
	\right)
	\to
	1,
	\]
	which proves the fixed-alternative consistency of the bootstrap test.
\end{proof}

\section{Linearized representation of the covariance operator}
\label{app:linearized-covariance-operator}

This section gives an operator-level representation of the covariance
operator appearing in Theorem~\ref{thm:null-limit}. The representation is not a
closed-form formula for the eigenvalues, but it identifies the covariance
operator as the Brownian-bridge covariance propagated through the derivative of
the entropic transport map functional.

Let \(H_0:\mathrm P=\mathrm Q=\mathrm P_0\) hold. Let
\((\varphi_0,\psi_0)\) be normalized Schrödinger potentials associated with
\((\mathrm U_d,\mathrm P_0)\), and define
\[
	K_0(\mathbf u,\mathbf x)
	:=
	\exp\left\{
	\frac{
		\varphi_0(\mathbf u)+\psi_0(\mathbf x)-\frac12\|\mathbf u-\mathbf x\|^2
	}{\varepsilon}
	\right\}.
	\]
	Then \(K_0\) is the density of the entropic optimal coupling with respect to
	\(\mathrm U_d\otimes \mathrm P_0\), and it satisfies the marginal normalizations
	\(\int K_0(\mathbf u,\mathbf x)\,d\mathrm P_0(\mathbf x)=1,\ \int K_0(\mathbf u,\mathbf x)\,d\mathrm U_d(\mathbf u)=1\).

For a finite signed perturbation \(\gamma\) of \(\mathrm P_0\) satisfying
\(\gamma(\mathbb R^d)=0\),
define
\[
	\eta_\gamma(\mathbf u)
	:=
	\int K_0(\mathbf u,\mathbf x)\,d\gamma(\mathbf x).
\]
Let \(\mathbb G_{\mathrm P_0}\) denote the \(\mathrm P_0\)-Brownian bridge.
Then
\(\eta_{\mathbb G_{\mathrm P_0}}(\mathbf u) = \mathbb G_{\mathrm P_0}(K_0(\mathbf u,\cdot))\)
is a mean-zero Gaussian process indexed by \(\mathbf u\in\mathbb B_d\). Its covariance
kernel is
\[
C_\eta(\mathbf u,\mathbf v)
:=
\Cov
\left(
\eta_{\mathbb G_{\mathrm P_0}}(\mathbf u),
\eta_{\mathbb G_{\mathrm P_0}}(\mathbf v)
\right).
\]
By the Brownian bridge covariance identity,
\[
\Cov
\left(
\mathbb G_{\mathrm P_0}(f),
\mathbb G_{\mathrm P_0}(g)
\right)
=
\int fg\,d\mathrm P_0
-
\int f\,d\mathrm P_0
\int g\,d\mathrm P_0,
\]
	taking \(f(\mathbf x)=K_0(\mathbf u,\mathbf x)\) and \(g(\mathbf x)=K_0(\mathbf v,\mathbf x)\) gives
	\[
	C_\eta(\mathbf u,\mathbf v)
	=
	\int K_0(\mathbf u,\mathbf x)K_0(\mathbf v,\mathbf x)\,d\mathrm P_0(\mathbf x)-1.
\]

We next describe the derivative of the EOT map. Denote by \(a_\gamma\) and
\(b_\gamma\) the first-order perturbations of \(\varphi_0\) and \(\psi_0\),
respectively, in the target direction \(\gamma\). Formally,
\[
\varphi_t
=
\varphi_0+t a_\gamma+o(t),
\qquad
\psi_t
=
\psi_0+t b_\gamma+o(t),
\qquad
\mathrm P_t
=
\mathrm P_0+t\gamma+o(t).
\]
Differentiating the Schrödinger system at \((\mathrm U_d,\mathrm P_0)\) gives
\[
	a_\gamma(\mathbf u)
	+
	\int K_0(\mathbf u,\mathbf x)b_\gamma(\mathbf x)\,d\mathrm P_0(\mathbf x)
	=
	-\varepsilon
	\int K_0(\mathbf u,\mathbf x)\,d\gamma(\mathbf x),
	\]
	and
	\[
	b_\gamma(\mathbf x)
	+
	\int K_0(\mathbf u,\mathbf x)a_\gamma(\mathbf u)\,d\mathrm U_d(\mathbf u)
	=
	0.
	\]
	Eliminating \(b_\gamma\) yields
	\(\mathcal A_0a_\gamma = -\varepsilon\eta_\gamma\),
	where
	\[
	(\mathcal A_0a)(\mathbf u)
	=
	a(\mathbf u)
	-
	\int K_0(\mathbf u,\mathbf x)
	\left[
	\int K_0(\mathbf v,\mathbf x)a(\mathbf v)\,d\mathrm U_d(\mathbf v)
	\right]
	d\mathrm P_0(\mathbf x).
\]
The operator \(\mathcal A_0\) is considered on the normalized subspace
\[
L^2_0(\mathbb B_d,\mathrm U_d)
=
\left\{
a\in L^2(\mathbb B_d,\mathrm U_d):
\int a\,d\mathrm U_d=0
\right\}.
\]
By Lemma~\ref{lem:reduced-schrodinger-invertibility}, \(\mathcal A_0\) is
boundedly invertible on this subspace. Hence
\(a_\gamma = -\varepsilon\mathcal A_0^{-1}\eta_\gamma\).

Since, under the present cost convention,
	\[
	T_{\varepsilon,\mathrm P}(\mathbf u)
	=
	\mathbf u-\nabla\varphi_{\mathrm P}(\mathbf u),
	\]
	the first-order derivative of the entropic transport map at \(\mathrm P_0\) in
	direction \(\gamma\) is
	\[
	\dot T_{\varepsilon,\mathrm P_0}[\gamma](\mathbf u)
	=
	-\nabla a_\gamma(\mathbf u)
	=
	\varepsilon
	\nabla\mathcal A_0^{-1}\eta_\gamma(\mathbf u).
\]
Therefore the one-sample Gaussian limit admits the representation
\(\mathbb G_\varepsilon = \varepsilon\nabla\mathcal A_0^{-1}\eta_{\mathbb G_{\mathrm P_0}}\).

Consequently, the covariance operator of the one-sample Gaussian EOT map limit
is
\[
\mathcal K_\varepsilon
=
\varepsilon^2
\nabla\mathcal A_0^{-1}
\mathcal C_\eta
(\mathcal A_0^{-1})^\ast
\nabla^\ast .
\]

Finally, under the two-sample null hypothesis and the sample-size balance
condition,
\(\mathbb G_\varepsilon^{(2)} = \sqrt{1-\lambda}\,\mathbb G_\varepsilon^{\mathrm P} - \sqrt{\lambda}\,\mathbb G_\varepsilon^{\mathrm Q}\),
where \(\mathbb G_\varepsilon^{\mathrm P}\) and
\(\mathbb G_\varepsilon^{\mathrm Q}\) are independent copies of
\(\mathbb G_\varepsilon\). Therefore the two-sample covariance operator is
\(\mathcal K_\varepsilon^{(2)} = (1-\lambda)\mathcal K_\varepsilon + \lambda\mathcal K_\varepsilon = \mathcal K_\varepsilon\).
Combining the previous displays gives the operator-level representation
\[
\mathcal K_\varepsilon^{(2)}
=
\varepsilon^2
\nabla\mathcal A_0^{-1}
\mathcal C_\eta
(\mathcal A_0^{-1})^\ast
\nabla^\ast .
\]
Hence the weights \((\omega_k^{(2)})_{k\ge1}\) in
Theorem~\ref{thm:null-limit} are the eigenvalues of this operator.

\begin{lemma}[Invertibility of the reduced linearized Schrödinger operator]
	\label{lem:reduced-schrodinger-invertibility}
	Let \(K_0\) be defined as above, and define
	\[
	(\mathcal M_0 a)(\mathbf u)
	:=
	\int_{\mathcal X} K_0(\mathbf u,\mathbf x)
	\biggl[
	\int_{\mathbb B_d} K_0(\mathbf v,\mathbf x)a(\mathbf v)\,d\mathrm U_d(\mathbf v)
	\biggr]
	d\mathrm P_0(\mathbf x).
	\]
	\[
	\mathcal A_0:=I-\mathcal M_0 .
	\]
	Then \(\mathcal M_0\) is compact, self-adjoint, and positive on
	\(L^2(\mathbb B_d,\mathrm U_d)\), and
	\[
	\ker(\mathcal A_0)
	=
	\{\text{constant functions}\}.
	\]
	Consequently, the restricted operator
	\[
	\mathcal A_0:
	L^2_0(\mathbb B_d,\mathrm U_d)
	\to
	L^2_0(\mathbb B_d,\mathrm U_d),
	\]
	\[
	L^2_0(\mathbb B_d,\mathrm U_d)
	:=
	\bigl\{
	a\in L^2(\mathbb B_d,\mathrm U_d):
	\int_{\mathbb B_d} a\,d\mathrm U_d=0
	\bigr\},
	\]
	is boundedly invertible.
\end{lemma}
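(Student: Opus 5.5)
The plan is to write \(\mathcal M_0\) as \(\mathcal M_0=S^*S\), where
\[
S:L^2(\mathbb B_d,\mathrm U_d)\to L^2(\mathcal X,\mathrm P_0),\qquad
(Sa)(\mathbf x):=\int_{\mathbb B_d}K_0(\mathbf v,\mathbf x)a(\mathbf v)\,d\mathrm U_d(\mathbf v).
\]
Its adjoint is \((S^*b)(\mathbf u)=\int K_0(\mathbf u,\mathbf x)b(\mathbf x)\,d\mathrm P_0(\mathbf x)\). Under Assumption~\ref{ass:compact} the Schr\"odinger potentials are bounded and continuous on the compact set \(\mathbb B_d\times\mathcal X\), so \(K_0\) is bounded. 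Hence \(K_0\in L^2(\mathrm U_d\otimes\mathrm P_0)\), and \(S\) is Hilbert--Schmidt, in particular compact. It follows immediately that \(\mathcal M_0=S^*S\) is compact, self-adjoint and positive, since \(\langle\mathcal M_0a,a\rangle=\|Sa\|^2\ge0\).

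Next I bound the norm. The normalizations \eqref{eq:normalization} say that \(K_0\) is a Markov kernel in each variable. Jensen's inequality then gives
\[
|Sa(\mathbf x)|^2\le\int K_0(\mathbf v,\mathbf x)a(\mathbf v)^2\,d\mathrm U_d(\mathbf v),
\]
and integrating in \(\mathrm P_0\) with the first normalization yields \(\|Sa\|^2\le\|a\|^2\). Thus \(\|\mathcal M_0\|\le1\), and \(\mathcal M_0 1=1\) because \(S1=1\) and \(S^*1=1\). So constants lie in \(\ker\mathcal A_0\).

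For the reverse inclusion, suppose \(\mathcal M_0a=a\). Then \(\|Sa\|^2=\|a\|^2\), so equality holds in Jensen's inequality for \(\mathrm P_0\)-a.e. \(\mathbf x\). The measure \(K_0(\cdot,\mathbf x)\,d\mathrm U_d\) has strictly positive density with respect to \(\mathrm U_d\) on \(\mathbb B_d\), so equality forces \(a\) to be \(\mathrm U_d\)-a.e. constant. This shows \(\ker\mathcal A_0=\{\text{constants}\}\). The step needing care is verifying strict positivity and boundedness of \(K_0\) from the canonical extensions of the potentials. Given compact support, both follow from the Schr\"odinger system, since the integrands are bounded above and below.

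Finally, \(\mathcal M_0\) is self-adjoint and fixes constants, so it maps \(L^2_0(\mathbb B_d,\mathrm U_d)=\{1\}^\perp\) into itself: \(\langle\mathcal M_0a,1\rangle=\langle a,\mathcal M_01\rangle=0\). The same holds for \(\mathcal A_0\). On this subspace \(\mathcal A_0=I-\mathcal M_0\) is the identity minus a compact operator, and it is injective because its kernel consists only of constants, which meet \(L^2_0\) only in \(0\). By the Fredholm alternative \(\mathcal A_0\) is surjective on \(L^2_0\), and the bounded inverse theorem gives a bounded inverse. Equivalently, by the spectral theorem, \(1\) is an isolated eigenvalue of \(\mathcal M_0\), so the spectrum of \(\mathcal M_0|_{L^2_0}\) lies in \([0,1-\delta]\) for some \(\delta>0\), and \(\|\mathcal A_0^{-1}\|\le\delta^{-1}\).
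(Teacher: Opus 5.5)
Your proposal is correct and follows the paper's proof essentially step for step. Both arguments factor $\mathcal M_0=S^\ast S$ (the paper's $L_0^\ast L_0$) with a Hilbert--Schmidt $S$, use Jensen's inequality with the two marginal normalizations and strict positivity of $K_0$ to identify $\ker(\mathcal A_0)$, and conclude by the Fredholm alternative and the bounded inverse theorem on the invariant subspace $L^2_0(\mathbb B_d,\mathrm U_d)$. Your deviations are minor: you check explicitly that constants lie in the kernel, you get invariance of $L^2_0$ from self-adjointness and $\mathcal M_0 1=1$ rather than by direct integration, and you add the spectral-gap bound $\|\mathcal A_0^{-1}\|\le\delta^{-1}$.
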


\begin{proof}
	Define
	\[
	(L_0a)(\mathbf x)
	:=
	\int_{\mathbb B_d}K_0(\mathbf v,\mathbf x)a(\mathbf v)\,d\mathrm U_d(\mathbf v),
	\qquad
	a\in L^2(\mathbb B_d,\mathrm U_d).
	\]
	Then \(L_0:L^2(\mathbb B_d,\mathrm U_d)\to L^2(\mathcal X,\mathrm P_0)\)
	is an integral operator with bounded kernel \(K_0\). Since \(K_0\) is continuous
	on the compact set \(\mathbb B_d\times\mathcal X\), it is square integrable with
	respect to \(\mathrm U_d\otimes\mathrm P_0\). Hence \(L_0\) is Hilbert--Schmidt
	and therefore compact.
	
	The adjoint of \(L_0\) is
	\[
	(L_0^\ast b)(\mathbf u)
	=
	\int_{\mathcal X}K_0(\mathbf u,\mathbf x)b(\mathbf x)\,d\mathrm P_0(\mathbf x),
	\qquad
	b\in L^2(\mathcal X,\mathrm P_0).
	\]
	Therefore
	\[
	\mathcal M_0=L_0^\ast L_0.
	\]
	It follows that \(\mathcal M_0\) is compact and self-adjoint. 
	
	Moreover, for
	\(a\in L^2(\mathbb B_d,\mathrm U_d)\),
	\[
	\langle \mathcal M_0a,a\rangle_{L^2(\mathrm U_d)}
	=
	\|L_0a\|_{L^2(\mathrm P_0)}^2
	\ge 0,
	\]
	so \(\mathcal M_0\) is positive.
	
	We next identify the kernel of
	\[
	\mathcal A_0=I-\mathcal M_0.
	\]
	Suppose that \(\mathcal A_0a=0\). Then \(\mathcal M_0a=a\), and hence
	\[
	\|a\|_{L^2(\mathrm U_d)}^2
	=
	\langle \mathcal M_0a,a\rangle_{L^2(\mathrm U_d)}
	=
	\|L_0a\|_{L^2(\mathrm P_0)}^2 .
	\]
	Using the marginal normalization
	\[
	\int_{\mathbb B_d}K_0(\mathbf u,\mathbf x)\,d\mathrm U_d(\mathbf u)=1,
	\]
	Jensen's inequality gives, for each \(\mathbf x\),
	\[
	\left[
	\int_{\mathbb B_d}K_0(\mathbf u,\mathbf x)a(\mathbf u)\,d\mathrm U_d(\mathbf u)
	\right]^2
	\le
	\int_{\mathbb B_d}K_0(\mathbf u,\mathbf x)a(\mathbf u)^2\,d\mathrm U_d(\mathbf u).
	\]
	Integrating with respect to \(\mathrm P_0\) yields
	\[
	\|L_0a\|_{L^2(\mathrm P_0)}^2
	\le
	\int_{\mathcal X}\int_{\mathbb B_d}
	K_0(\mathbf u,\mathbf x)a(\mathbf u)^2\,d\mathrm U_d(\mathbf u)\,d\mathrm P_0(\mathbf x).
	\]
	By the other marginal normalization,
	\[
	\int_{\mathcal X}K_0(\mathbf u,\mathbf x)\,d\mathrm P_0(\mathbf x)=1,
	\]
	the right-hand side equals
	\[
	\int_{\mathbb B_d}a(\mathbf u)^2\,d\mathrm U_d(\mathbf u)
	=
	\|a\|_{L^2(\mathrm U_d)}^2.
	\]
	Since equality must hold throughout, equality holds in Jensen's inequality for
	\(\mathrm P_0\)-almost every \(\mathbf x\). For such \(\mathbf x\), the measure
	\(K_0(\mathbf u,\mathbf x)\,d\mathrm U_d(\mathbf u)\) is equivalent to \(\mathrm U_d\), because
	\(K_0(\mathbf u,\mathbf x)>0\). Equality in Jensen's inequality therefore implies that
	\(a\) is \(\mathrm U_d\)-almost surely constant. Hence
	\[
	\ker(\mathcal A_0)
	=
	\{\text{constant functions}\}.
	\]
	
	It remains to restrict the operator to the mean-zero subspace. For every
	\(a\in L^2(\mathbb B_d,\mathrm U_d)\),
	\[
	\begin{aligned}
		\int_{\mathbb B_d}(\mathcal M_0a)(\mathbf u)\,d\mathrm U_d(\mathbf u)
		&=
		\int_{\mathbb B_d}a(\mathbf v)
		\left[
		\int_{\mathcal X}K_0(\mathbf v,\mathbf x)\,d\mathrm P_0(\mathbf x)
		\right]
		d\mathrm U_d(\mathbf v)\\
		&=
		\int_{\mathbb B_d}a(\mathbf v)\,d\mathrm U_d(\mathbf v).
	\end{aligned}
	\]
	Thus \(\mathcal A_0=I-\mathcal M_0\) maps
	\(L^2_0(\mathbb B_d,\mathrm U_d)\) into itself. Since the kernel of
	\(\mathcal A_0\) consists only of constants, its restriction to
	\(L^2_0(\mathbb B_d,\mathrm U_d)\) has trivial kernel.
	
	Finally, \(\mathcal A_0=I-\mathcal M_0\) is the identity minus a compact
	operator, hence it is Fredholm of index zero. The same holds for its restriction
	to the invariant closed subspace \(L^2_0(\mathbb B_d,\mathrm U_d)\). Since the
	restricted operator has trivial kernel and index zero, its range has codimension
	zero. Therefore it is onto \(L^2_0(\mathbb B_d,\mathrm U_d)\). Hence
	\[
	\mathcal A_0:
	L^2_0(\mathbb B_d,\mathrm U_d)
	\to
	L^2_0(\mathbb B_d,\mathrm U_d)
	\]
	is bijective. By the bounded inverse theorem, its inverse is bounded.
\end{proof}

\section{Additional numerical details}

\subsection{Implementation details of the EOT-map test}

\begin{algorithm}[tbp]
	\caption{EOT-map two-sample test}
	\label{alg:eot-map-two-sample-test}
	\footnotesize\begin{algorithmic}[1]
		\REQUIRE Samples \(\mathbf X_1,\ldots,\mathbf X_n\), \(\mathbf Y_1,\ldots,\mathbf Y_m\);
		regularization parameter \(\varepsilon>0\); number of reference points
		\(N\); number of bootstrap repetitions \(B\); significance level
		\(\alpha\).
		
		\STATE Generate common reference points
		\(\mathbf U_1,\ldots,\mathbf U_N\overset{\mathrm{i.i.d.}}{\sim}\mathrm U_d\), and set
		\[
		\mathrm U_{d,N}=\frac1N\sum_{\ell=1}^N\delta_{\mathbf U_\ell},
		\qquad
		\mathrm P_n=\frac1n\sum_{i=1}^n\delta_{\mathbf X_i},
		\qquad
		\mathrm Q_m=\frac1m\sum_{j=1}^m\delta_{\mathbf Y_j}.
		\]
		
		\STATE Compute the empirical EOT maps
		\[
		\widehat T_{\varepsilon,n}^{\mathrm P}
		=
		T_{\varepsilon,(\mathrm U_{d,N},\mathrm P_n)},
		\qquad
		\widehat T_{\varepsilon,m}^{\mathrm Q}
		=
		T_{\varepsilon,(\mathrm U_{d,N},\mathrm Q_m)}
		\]
		at the reference points \(\mathbf U_1,\ldots,\mathbf U_N\).
		
		\STATE Compute
		\[
		\mathcal T_{n,m}
		=
		\frac{nm}{n+m}
		\frac1N
		\sum_{\ell=1}^N
		\left\|
		\widehat T_{\varepsilon,n}^{\mathrm P}(\mathbf U_\ell)
		-
		\widehat T_{\varepsilon,m}^{\mathrm Q}(\mathbf U_\ell)
		\right\|^2 .
		\]
		
		\FOR{\(b=1,\ldots,B\)}
		\STATE Generate independent nonnegative multiplier weights
		\(\xi_1^{(b)},\ldots,\xi_n^{(b)}\) and
		\(\zeta_1^{(b)},\ldots,\zeta_m^{(b)}\), normalized to have sample sums
		one within each sample.
		
		\STATE Form the weighted empirical measures
		\[
		\mathrm P_n^{*,b}
		=
		\sum_{i=1}^n w_{i}^{\mathrm P,b}\delta_{\mathbf X_i},
		\qquad
		\mathrm Q_m^{*,b}
		=
		\sum_{j=1}^m w_{j}^{\mathrm Q,b}\delta_{\mathbf Y_j},
		\]
		where
		\[
		w_i^{\mathrm P,b}
		=
		\frac{\xi_i^{(b)}}{\sum_{r=1}^n\xi_r^{(b)}},
		\qquad
		w_j^{\mathrm Q,b}
		=
		\frac{\zeta_j^{(b)}}{\sum_{r=1}^m\zeta_r^{(b)}} .
		\]
		
		\STATE Compute the bootstrap EOT maps
		\[
		\widehat T_{\varepsilon,n}^{\mathrm P,*,b}
		=
		T_{\varepsilon,(\mathrm U_{d,N},\mathrm P_n^{*,b})},
		\qquad
		\widehat T_{\varepsilon,m}^{\mathrm Q,*,b}
		=
		T_{\varepsilon,(\mathrm U_{d,N},\mathrm Q_m^{*,b})}.
		\]
		
		\STATE Compute the centered bootstrap statistic
		\[
		\mathcal T_{n,m}^{*,b}
		=
		\frac{nm}{n+m}
		\frac1N
		\sum_{\ell=1}^N
		\left\|
		\left(
		\widehat T_{\varepsilon,n}^{\mathrm P,*,b}
		-
		\widehat T_{\varepsilon,n}^{\mathrm P}
		\right)(\mathbf U_\ell)
		-
		\left(
		\widehat T_{\varepsilon,m}^{\mathrm Q,*,b}
		-
		\widehat T_{\varepsilon,m}^{\mathrm Q}
		\right)(\mathbf U_\ell)
		\right\|^2 .
		\]
		\ENDFOR
		
		\STATE Let \(\widehat c_{n,m,N,1-\alpha}^{*}\) be the empirical
		\((1-\alpha)\)-quantile of
		\(\mathcal T_{n,m}^{*,1},\ldots,\mathcal T_{n,m}^{*,B}\).
		
		\STATE Reject \(H_0:\mathrm P=\mathrm Q\) if
		\(\mathcal T_{n,m}>\widehat c_{n,m,N,1-\alpha}^{*}\).
	\end{algorithmic}
\end{algorithm}

\subsection{Additional benchmark comparison results}
\label{app:benchmark-comparison-details}

This appendix provides the full detailed results and extended discussion for the benchmark comparison in Section~\ref{subsec:comparison-existing-tests}. All experimental settings are identical to those described in the main text.

\noindent\textbf{Null calibration.}
We first evaluate empirical size under the Gaussian null
\(\mathrm P=\mathrm Q=N(0,I_d)\), the uniform null
\(\mathrm P=\mathrm Q=\operatorname{Unif}([-1,1]^d)\), and the Student-\(t\) null.

The results are summarized in Table~\ref{tab:benchmark-size} in the main text. Overall, all methods exhibit reasonable size control. The empirical sizes of the proposed EOT-map test range from $0.039$ to $0.064$, while those of OT-rank Energy range from $0.046$ to $0.066$. Both are close to the nominal level $\alpha=0.05$ given the Monte Carlo standard error. The benchmark tests show comparable calibration.

\noindent\textbf{Power under fixed alternatives.}
We next compare empirical power under five classes of fixed alternatives. In all cases, the baseline distribution is $\mathrm P=N(0,I_d)$. The alternatives include a location shift $\mathrm Q=N(\delta \mathbf e_1,I_d)$, a scale change $\mathrm Q=N(0,\sigma^2 I_d)$, a correlation change $\mathrm Q=N(0,\Sigma_\rho)$, where $\Sigma_\rho$ has unit diagonal entries and common off-diagonal entry $\rho$, and a symmetric Gaussian mixture $\mathrm Q = \frac12 N(-\delta \mathbf e_1,I_d) + \frac12 N(\delta \mathbf e_1,I_d)$. The mixture alternative preserves the mean but changes the distributional shape and increases the variance in the first coordinate.

We also consider a nonlinear local deformation. For $d\in\{2,5\}$, let $\mathbf X=(X_1,\ldots,X_d)^\top\sim N(0,I_d)$ and define \[\mathbf Y_\tau = \mathbf X+ \tau (\sin(X_2), 0.5\sin(X_1), 0, \ldots, 0)^\top.\] The alternative distribution is $\mathrm Q_\tau=\mathcal L(\mathbf Y_\tau)$, where $\mathcal L(\mathbf Y_\tau)$ denotes the law of $\mathbf Y_\tau$. For $d=2$, this is a two-dimensional nonlinear deformation; for $d=5$, the same deformation is embedded in the first two coordinates, with the remaining coordinates acting as nuisance dimensions.

Figure~\ref{fig:location-power-comparison} in the main text visualizes the empirical rejection rates under the location alternatives. The EOT-map test is among the strongest methods across the considered dimensions. In dimension $d=2$, its power is comparable to that of the energy, MMD, Wasserstein, and OT-rank Energy tests. In dimension $d=5$, the advantage of EOT-map becomes more visible for weak-to-moderate shifts. For example, around $\delta=0.2$, EOT-map has a higher rejection rate than OT-rank Energy and Wasserstein; by $\delta=0.4$, its rejection rate is close to $0.9$, while the corresponding curves for OT-rank Energy and Wasserstein remain visibly lower. Overall, the proposed method remains competitive across dimensions and is particularly effective when the location shift induces a coherent directional displacement between the two estimated maps from the common reference distribution.

Table~\ref{tab:benchmark-power-other-full} reports the empirical power under scale, correlation, symmetric mixture, and nonlinear deformation alternatives.

\begin{table}[!htbp]
	\centering
	\resizebox{0.98\textwidth}{!}{
		\begin{tabular}{lllccccccc}
			\toprule
			Alternative
			& Dimension
			& Parameter
			& Energy
			& MMD
			& Sinkhorn
			& \(k\)-NN
			& Wasserstein
			& OT-rank
			& EOT-map \\
			\midrule
			
			\multirow{8}{*}{Scale}
			& \multirow{4}{*}{\(d=2\)}
			& \(\sigma=1.1\) & 0.095 & 0.146 & 0.156 & 0.088 & 0.155 & 0.072 & 0.137 \\
			& & \(\sigma=1.2\) & 0.365 & 0.567 & 0.576 & 0.222 & 0.584 & 0.124 & 0.614 \\
			& & \(\sigma=1.4\) & 0.964 & 0.996 & 1.000 & 0.761 & 0.997 & 0.547 & 0.999 \\
			& & \(\sigma=1.6\) & 1.000 & 1.000 & 1.000 & 0.995 & 1.000 & 0.963 & 1.000 \\
			\cmidrule(lr){2-10}
			& \multirow{4}{*}{\(d=5\)}
			& \(\sigma=1.1\) & 0.136 & 0.183 & 0.170 & 0.108 & 0.143 & 0.055 & 0.110 \\
			& & \(\sigma=1.2\) & 0.632 & 0.797 & 0.645 & 0.351 & 0.579 & 0.096 & 0.414 \\
			& & \(\sigma=1.4\) & 1.000 & 1.000 & 1.000 & 0.973 & 1.000 & 0.292 & 0.997 \\
			& & \(\sigma=1.6\) & 1.000 & 1.000 & 1.000 & 1.000 & 1.000 & 0.692 & 1.000 \\
			
			\midrule
			
			\multirow{8}{*}{Correlation}
			& \multirow{4}{*}{\(d=2\)}
			& \(\rho=0.2\) & 0.072 & 0.086 & 0.173 & 0.087 & 0.174 & 0.071 & 0.168 \\
			& & \(\rho=0.4\) & 0.160 & 0.297 & 0.721 & 0.322 & 0.722 & 0.260 & 0.730 \\
			& & \(\rho=0.6\) & 0.634 & 0.879 & 0.999 & 0.871 & 0.998 & 0.832 & 0.998 \\
			& & \(\rho=0.8\) & 0.999 & 1.000 & 1.000 & 1.000 & 1.000 & 1.000 & 1.000 \\
			\cmidrule(lr){2-10}
			& \multirow{4}{*}{\(d=5\)}
			& \(\rho=0.2\) & 0.134 & 0.186 & 0.722 & 0.347 & 0.700 & 0.072 & 0.561 \\
			& & \(\rho=0.4\) & 0.735 & 0.935 & 1.000 & 0.974 & 1.000 & 0.251 & 1.000 \\
			& & \(\rho=0.6\) & 1.000 & 1.000 & 1.000 & 1.000 & 1.000 & 0.748 & 1.000 \\
			& & \(\rho=0.8\) & 1.000 & 1.000 & 1.000 & 1.000 & 1.000 & 1.000 & 1.000 \\
			
			\midrule
			
			\multirow{8}{*}{Mixture}
			& \multirow{4}{*}{\(d=2\)}
			& \(\delta=0.5\) & 0.078 & 0.102 & 0.126 & 0.060 & 0.121 & 0.071 & 0.120 \\
			& & \(\delta=1.0\) & 0.785 & 0.938 & 0.968 & 0.616 & 0.966 & 0.352 & 0.974 \\
			& & \(\delta=1.5\) & 1.000 & 1.000 & 1.000 & 1.000 & 1.000 & 1.000 & 1.000 \\
			& & \(\delta=2.0\) & 1.000 & 1.000 & 1.000 & 1.000 & 1.000 & 1.000 & 1.000 \\
			\cmidrule(lr){2-10}
			& \multirow{4}{*}{\(d=5\)}
			& \(\delta=0.5\) & 0.049 & 0.048 & 0.070 & 0.063 & 0.062 & 0.071 & 0.060 \\
			& & \(\delta=1.0\) & 0.301 & 0.440 & 0.739 & 0.444 & 0.708 & 0.065 & 0.588 \\
			& & \(\delta=1.5\) & 0.998 & 1.000 & 1.000 & 0.998 & 1.000 & 0.175 & 1.000 \\
			& & \(\delta=2.0\) & 1.000 & 1.000 & 1.000 & 1.000 & 1.000 & 0.462 & 1.000 \\
			
			\midrule
			
			\multirow{8}{*}{\shortstack{Nonlinear \\ deformation}}
			& \multirow{4}{*}{\(d=2\)}
			& \(\tau=0.25\) & 0.084 & 0.112 & 0.220 & 0.107 & 0.211 & 0.094 & 0.215 \\
			& & \(\tau=0.40\) & 0.156 & 0.292 & 0.636 & 0.297 & 0.638 & 0.233 & 0.639 \\
			& & \(\tau=0.60\) & 0.497 & 0.759 & 0.966 & 0.742 & 0.966 & 0.724 & 0.967 \\
			& & \(\tau=0.70\) & 0.723 & 0.926 & 0.997 & 0.905 & 0.998 & 0.921 & 0.996 \\
			\cmidrule(lr){2-10}
			& \multirow{4}{*}{\(d=5\)}
			& \(\tau=0.25\) & 0.072 & 0.077 & 0.128 & 0.092 & 0.124 & 0.068 & 0.098 \\
			& & \(\tau=0.40\) & 0.077 & 0.079 & 0.299 & 0.191 & 0.262 & 0.074 & 0.165 \\
			& & \(\tau=0.60\) & 0.141 & 0.176 & 0.739 & 0.510 & 0.691 & 0.104 & 0.444 \\
			& & \(\tau=0.70\) & 0.180 & 0.232 & 0.878 & 0.689 & 0.855 & 0.130 & 0.618 \\
			\bottomrule
		\end{tabular}
	}
	\caption{
		Empirical power under scale, correlation, mixture, and nonlinear
		deformation alternatives.
		The nominal level is \(\alpha=0.05\), \(n=m=200\), and each entry is based on
		\(R=1000\) Monte Carlo replications.
	}
	\label{tab:benchmark-power-other-full}
\end{table}

For scale alternatives, the EOT-map test is competitive in dimension $d=2$, especially for moderate-to-large scale changes. In dimension $d=5$, weak scale changes are better detected by MMD and transport-cost-based tests, while the EOT-map test approaches full power once the scale difference becomes stronger. OT-rank Energy is well calibrated but is less sensitive to scale changes, especially in dimension $d=5$. For example, at $\sigma=1.4$, its empirical power is $0.292$ in dimension $d=5$, compared with $0.997$ for EOT-map.

For correlation alternatives, Sinkhorn divergence and Wasserstein are particularly sensitive in the weak high-dimensional regime. The EOT-map test remains competitive and substantially outperforms Energy and MMD in several settings. In dimension $d=2$, it is essentially as powerful as the strongest transport-based competitors. OT-rank Energy gains power as the correlation signal increases, but is substantially less powerful than the transport-cost and EOT-map methods for weak and moderate correlation changes, especially in dimension $d=5$. At $\rho=0.2$, its power is $0.072$ in dimension $d=5$, whereas the EOT-map test reaches $0.561$.

For symmetric mixture alternatives, the experiment examines sensitivity to non-Gaussian shape changes with no mean shift. In dimension $d=2$, the EOT-map test is close to the strongest methods and is slightly more powerful for moderate mixture separation. In dimension $d=5$, it is less powerful than Sinkhorn divergence and Wasserstein at moderate separation, but remains more powerful than Energy, MMD, and $k$-NN. OT-rank Energy is comparatively less sensitive to this high-dimensional shape alternative: even at $\delta=2.0$, its power is $0.462$ in dimension $d=5$, while the other methods have essentially full power. This suggests that the rank-based energy statistic may lose some of the geometric information captured by the EOT-map discrepancy.

For nonlinear deformation alternatives, the transport-based methods are substantially more sensitive than Energy, MMD, and OT-rank Energy in dimension $d=2$. The EOT-map test has power nearly identical to Sinkhorn divergence and Wasserstein across the reported perturbation strengths. For example, at $\tau=0.40$, its empirical power is $0.639$, compared with $0.636$ for Sinkhorn divergence and $0.638$ for Wasserstein. In dimension $d=5$, where the same two-dimensional deformation is embedded in additional nuisance coordinates, Sinkhorn divergence and Wasserstein are more powerful. The EOT-map test is less sensitive than these transport-cost-based procedures, but its power still increases steadily with the perturbation strength and remains stronger than Energy, MMD, and OT-rank Energy for larger values of $\tau$. At $\tau=0.70$, for instance, EOT-map has power $0.618$, compared with $0.180$, $0.232$, and $0.130$ for Energy, MMD, and OT-rank Energy, respectively.

\noindent\textbf{Computation time.}

Table~\ref{tab:benchmark-runtime} reports the average computation time per test. In the present implementation, the EOT-map test is more computationally demanding than the energy, MMD, $k$-NN, Wasserstein, and OT-rank Energy tests, but is faster than the permutation-calibrated Sinkhorn divergence test.

\begin{table}[!htbp]
	\centering
	\resizebox{0.98\textwidth}{!}{
	\begin{tabular}{lccccccc}
		\toprule
		Dimension
		& Energy
		& MMD
		& Sinkhorn
		& \(k\)-NN
		& Wasserstein
		& OT-rank
		& EOT-map \\
		\midrule
		\(d=2\) & 0.19 & 0.39 & 12.12 & 0.02 & 1.37 & 0.09 & 2.14 \\
		\(d=5\) & 0.23 & 0.45 & 9.80  & 0.02 & 1.49 & 0.12 & 1.11 \\
		\bottomrule
	\end{tabular}
	}
	\caption{
		Average computation time per test, in seconds.
	}
	\label{tab:benchmark-runtime}
\end{table}

This advantage mainly comes from the calibration and debiasing costs of Sinkhorn divergence. Each evaluation of the debiased Sinkhorn divergence needs to solve the minimizer problem \eqref{eq:sinkhorncost} three times, and such calculations are performed repeatedly in permutation calibration. The self-transport terms may also be numerically demanding when $\varepsilon$ is small, because identical empirical supports can induce sharp near-diagonal couplings.

By contrast, the EOT-map test uses a common reference sample and weighted bootstrap calibration. Its computational cost can be directly controlled by the number $N$ of reference points. Since the sensitivity analysis shows that moderate values of $N$ already yield stable empirical size and power, the proposed method provides a practical way to reduce computation, especially in larger-sample settings. The runtime comparison should nevertheless be viewed as implementation- and sample-size-dependent rather than as a theoretical complexity ordering.

\noindent\textbf{Summary of the comparison.}
The benchmark results show that the proposed EOT-map test has well-calibrated finite-sample size and is competitive with existing two-sample procedures. It is particularly effective when the distributional difference admits a geometric or map-level interpretation. Under location alternatives, it is among the strongest methods considered, which is consistent with the fact that a location shift induces a coherent directional displacement between the two estimated maps from the common reference distribution.

The additional structural experiments further highlight the role of the map-based statistic. Under nonlinear deformation alternatives, the EOT-map test substantially outperforms Energy and MMD and achieves power essentially identical to Sinkhorn divergence. Under correlation alternatives, it is competitive with Sinkhorn, especially in dimension $d=2$, and remains substantially more powerful than Energy and MMD for several weak-to-moderate signals. Under symmetric mixture alternatives, the EOT-map test remains competitive, although Sinkhorn divergence and Wasserstein are more sensitive in the moderate high-dimensional setting and OT-rank Energy is substantially less powerful.

The results also show that the EOT-map test does not uniformly dominate all existing scalar discrepancy tests. Under some high-dimensional correlation or scale alternatives, Sinkhorn divergence or MMD can be more sensitive. Thus, the proposed method should be viewed as a competitive and interpretable complement to scalar discrepancy tests, rather than as a uniformly dominant replacement.

A distinctive advantage of the EOT-map test is that, in addition to a rejection decision, it provides the estimated vector field $\widehat T_{\varepsilon,n}^{\mathrm P}(\mathbf u) - \widehat T_{\varepsilon,m}^{\mathrm Q}(\mathbf u)$, which can be visualized to reveal whether the distributional difference is primarily a translation, a radial expansion, a dependence deformation, a local nonlinear distortion, or a more complex shape change.

\subsection{Sensitivity analyses and computation time}
\label{app:sensitivity-computation}

\noindent\textbf{Sensitivity to the regularization parameter.}

We next examine the sensitivity of the proposed test to the entropic
regularization parameter \(\varepsilon\), which controls the smoothing level of
the EOT problem and affects both statistical performance and numerical
stability. The goal is to assess whether the procedure is stable over a
reasonable range of regularization levels and to illustrate the associated
statistical--computational trade-off.

We consider two types of choices for \(\varepsilon\): fixed values
\(\varepsilon\in\{0.2,0.5,1.0\}\), and a scale-adaptive pilot rule with
\(c_\varepsilon\in\{0.01,0.1,0.2,0.5,1.0,10000\}\). The pilot rule uses a
Gaussian pilot sample of size \(N_{\mathrm{pilot}}=3000\), and the resulting
\(\varepsilon\) is fixed across all Monte Carlo replications for the
corresponding configuration. The very large value \(c_\varepsilon=10000\) is
included only as a stress test for excessive regularization.

The experiment is conducted for \(d\in\{2,5\}\) and
\(n=m\in\{200,500,1000\}\). We consider the Gaussian null
\(\mathrm P=\mathrm Q=N(0,I_d)\) and the mean-shift alternative
\(\mathrm P=N(0,I_d)\), \(\mathrm Q=N(0.4\mathbf e_1,I_d)\). For each configuration,
we use \(R=500\) Monte Carlo replications and \(B=300\) bootstrap repetitions,
set the nominal level to \(\alpha=0.05\), and take the number of reference
points to be \(N=n\). Tables~\ref{tab:epsilon-sensitivity-size-time}
and~\ref{tab:epsilon-sensitivity-power-time} report empirical rejection
probabilities, with the mean runtime per replication shown in parentheses.

The results show a clear statistical--computational trade-off. Very small
regularization can lead to size distortion and numerical instability,
especially in higher dimension. For example, when \(d=5\), the fixed choice
\(\varepsilon=0.2\) yields empirical sizes \(0.370\), \(0.646\), and \(0.714\)
for \(n=m=200,500,1000\), respectively. Similarly, the adaptive choice
\(c_\varepsilon=0.01\) produces empirical sizes \(0.190\), \(0.798\), and
\(0.994\) in the same settings. These values are far above the nominal level,
suggesting that overly small regularization can make finite-sample calibration
unreliable.

Moderate adaptive choices provide much more stable size control. In particular,
for \(d=5\), the choices \(c_\varepsilon=0.2\), \(0.5\), and \(1.0\) yield
empirical sizes close to the nominal level across all sample sizes. For example,
with \(c_\varepsilon=0.2\), the empirical sizes are \(0.048\), \(0.056\), and
\(0.058\) for \(n=m=200,500,1000\), respectively. The results in dimension
\(d=2\) are also broadly stable for moderate and large regularization levels.

Across the stable range of \(\varepsilon\), the test retains high power under
the mean-shift alternative. For instance, when \(d=5\) and
\(c_\varepsilon=0.2\), the empirical powers are \(0.870\), \(1.000\), and
\(1.000\) for \(n=m=200,500,1000\), respectively. Thus, moderate regularization
improves calibration without sacrificing power in this setting.

The stress case \(c_\varepsilon=10000\) should not be interpreted as evidence
that larger regularization is always preferable. Extremely large
\(\varepsilon\) heavily smooths the EOT map, so the resulting map discrepancy is
mainly driven by low-order features such as differences in means. This explains
why excessive regularization can still perform well for the mean-shift
alternative. However, it may suppress higher-order distributional features such
as scale, dependence, multimodality, or local shape differences.

To illustrate this point, we conducted an additional stress experiment under
the equal-mean mixture alternative
\[
\mathrm P=N(0,I_d),
\qquad
\mathrm Q=\frac12 N(-1.5\mathbf e_1,I_d)+\frac12 N(1.5\mathbf e_1,I_d),
\]
with \(d=5\) and \(n=m=1000\). Under the moderate adaptive choice
\(c_\varepsilon=0.5\), the empirical rejection probability was \(1.000\).
By contrast, under \(c_\varepsilon=10000\), it dropped to \(0.036\). This
confirms that excessive regularization can severely reduce power against
non-location alternatives, even when it appears effective for detecting mean
shifts.

Computation time is also strongly affected by \(\varepsilon\). Smaller
regularization generally requires more Sinkhorn iterations and is therefore
substantially more expensive. This is particularly visible for \(d=2\) and
\(n=m=1000\), where \(c_\varepsilon=0.01\) requires approximately \(807.9\)
seconds per replication under the null and \(934.3\) seconds under the
alternative. Moderate or large regularization levels substantially reduce the
runtime.

Overall, the experiment suggests that \(\varepsilon\) should be neither too
small nor excessively large. Very small values may lead to numerical
instability and size distortion, whereas extremely large values may oversmooth
the EOT maps and suppress distributional features beyond mean differences.
Moderate scale-adaptive choices, such as \(c_\varepsilon=0.2\) or
\(c_\varepsilon=0.5\), provide a favorable balance between calibration, power,
and computational efficiency.

\begin{table}[!htbp]
	\centering
	
	\resizebox{0.98\textwidth}{!}{
		\begin{tabular}{llccc}
			\toprule
			Dimension & Regularization
			& \(n=m=200\)
			& \(n=m=500\)
			& \(n=m=1000\) \\
			\midrule
			\multirow{9}{*}{\(d=2\)}
			& fixed \(\varepsilon=0.2\)
			& 0.044 (4.7) & 0.068 (24.3) & 0.048 (134.4) \\
			& fixed \(\varepsilon=0.5\)
			& 0.030 (2.3) & 0.052 (19.2) & 0.066 (81.9) \\
			& fixed \(\varepsilon=1.0\)
			& 0.048 (1.4) & 0.058 (18.8) & 0.068 (71.8) \\
			& adaptive \(c_\varepsilon=0.01\)
			& 0.022 (31.0) & 0.038 (93.4) & 0.034 (807.9) \\
			& adaptive \(c_\varepsilon=0.1\)
			& 0.044 (3.6) & 0.032 (24.5) & 0.038 (116.0) \\
			& adaptive \(c_\varepsilon=0.2\)
			& 0.048 (2.2) & 0.050 (18.6) & 0.052 (81.9) \\
			& adaptive \(c_\varepsilon=0.5\)
			& 0.056 (1.4) & 0.052 (17.6) & 0.056 (71.5) \\
			& adaptive \(c_\varepsilon=1.0\)
			& 0.076 (1.0) & 0.074 (16.8) & 0.050 (67.6) \\
			& adaptive \(c_\varepsilon=10000\)
			& 0.044 (0.5) & 0.060 (14.8) & 0.062 (60.9) \\
			\midrule
			\multirow{9}{*}{\(d=5\)}
			& fixed \(\varepsilon=0.2\)
			& 0.370 (5.4) & 0.646 (24.5) & 0.714 (130.2) \\
			& fixed \(\varepsilon=0.5\)
			& 0.114 (2.9) & 0.104 (18.8) & 0.080 (81.4) \\
			& fixed \(\varepsilon=1.0\)
			& 0.074 (2.2) & 0.058 (18.0) & 0.060 (73.8) \\
			& adaptive \(c_\varepsilon=0.01\)
			& 0.190 (12.2) & 0.798 (38.3) & 0.994 (254.0) \\
			& adaptive \(c_\varepsilon=0.1\)
			& 0.082 (1.8) & 0.046 (17.9) & 0.042 (74.2) \\
			& adaptive \(c_\varepsilon=0.2\)
			& 0.048 (1.3) & 0.056 (17.1) & 0.058 (69.9) \\
			& adaptive \(c_\varepsilon=0.5\)
			& 0.050 (1.3) & 0.048 (17.1) & 0.056 (69.8) \\
			& adaptive \(c_\varepsilon=1.0\)
			& 0.054 (1.3) & 0.058 (17.1) & 0.058 (69.6) \\
			& adaptive \(c_\varepsilon=10000\)
			& 0.060 (0.6) & 0.068 (14.8) & 0.050 (60.1) \\
			\bottomrule
		\end{tabular}
	}
	\caption{
		Sensitivity of empirical size and computation time to the regularization
		parameter. The nominal level is \(\alpha=0.05\). Results are empirical
		rejection probabilities under the Gaussian null; numbers in parentheses
		are mean computation times per replication, in seconds.
	}
	\label{tab:epsilon-sensitivity-size-time}
\end{table}

\begin{table}[!htbp]
	\centering
	
	\resizebox{0.98\textwidth}{!}{
		\begin{tabular}{llccc}
			\toprule
			Dimension & Regularization
			& \(n=m=200\)
			& \(n=m=500\)
			& \(n=m=1000\) \\
			\midrule
			\multirow{9}{*}{\(d=2\)}
			& fixed \(\varepsilon=0.2\)
			& 0.950 (4.9) & 1.000 (24.8) & 1.000 (155.3) \\
			& fixed \(\varepsilon=0.5\)
			& 0.938 (2.4) & 1.000 (20.1) & 1.000 (87.9) \\
			& fixed \(\varepsilon=1.0\)
			& 0.946 (1.5) & 1.000 (17.6) & 1.000 (72.0) \\
			& adaptive \(c_\varepsilon=0.01\)
			& 0.894 (32.1) & 1.000 (95.4) & 1.000 (934.3) \\
			& adaptive \(c_\varepsilon=0.1\)
			& 0.952 (3.5) & 1.000 (23.1) & 1.000 (123.6) \\
			& adaptive \(c_\varepsilon=0.2\)
			& 0.954 (2.3) & 1.000 (19.9) & 1.000 (83.2) \\
			& adaptive \(c_\varepsilon=0.5\)
			& 0.948 (1.4) & 1.000 (17.7) & 1.000 (72.4) \\
			& adaptive \(c_\varepsilon=1.0\)
			& 0.952 (1.0) & 1.000 (16.9) & 1.000 (68.8) \\
			& adaptive \(c_\varepsilon=10000\)
			& 0.966 (0.5) & 1.000 (14.8) & 1.000 (60.5) \\
			\midrule
			\multirow{9}{*}{\(d=5\)}
			& fixed \(\varepsilon=0.2\)
			& 0.934 (5.1) & 1.000 (23.7) & 1.000 (132.2) \\
			& fixed \(\varepsilon=0.5\)
			& 0.894 (2.6) & 1.000 (18.8) & 1.000 (81.3) \\
			& fixed \(\varepsilon=1.0\)
			& 0.852 (1.8) & 0.996 (17.9) & 1.000 (74.1) \\
			& adaptive \(c_\varepsilon=0.01\)
			& 0.830 (12.8) & 1.000 (39.1) & 1.000 (272.6) \\
			& adaptive \(c_\varepsilon=0.1\)
			& 0.856 (1.8) & 0.998 (17.9) & 1.000 (73.2) \\
			& adaptive \(c_\varepsilon=0.2\)
			& 0.870 (1.3) & 1.000 (17.2) & 1.000 (68.4) \\
			& adaptive \(c_\varepsilon=0.5\)
			& 0.892 (1.3) & 0.998 (17.1) & 1.000 (68.2) \\
			& adaptive \(c_\varepsilon=1.0\)
			& 0.894 (1.3) & 1.000 (17.1) & 1.000 (68.1) \\
			& adaptive \(c_\varepsilon=10000\)
			& 0.908 (0.6) & 1.000 (14.8) & 1.000 (60.1) \\
			\bottomrule
		\end{tabular}
	}
	\caption{
		Sensitivity of empirical power and computation time to the regularization
		parameter under the Gaussian mean-shift alternative. Numbers in
		parentheses are mean computation times per replication, in seconds.
	}
	\label{tab:epsilon-sensitivity-power-time}
\end{table}

\noindent\textbf{Reference-sample economy.} \label{para:sensitive_reference}
We next examine the effect of the number \(N\) of reference points used to
approximate the integral over the common reference domain \(\mathbb B_d\).
Unlike the data sample sizes \(n\) and \(m\), the quantity \(N\) controls only
the Monte Carlo approximation of the outer reference integral. It therefore
provides a direct computational tuning parameter for the proposed method.
The experiment is conducted with \(n=m=400\), \(d\in\{2,5\}\), nominal level
\(\alpha=0.05\), \(B=300\) bootstrap repetitions, and \(R=500\) Monte Carlo
replications. Within each dimension, the regularization parameter is fixed by
the pilot rule with \(c_\varepsilon=0.2\). We vary
\(N\in\{50,100,200,400,600,800\}\).

Table~\ref{tab:n-reference-sensitivity} shows that the empirical rejection
probabilities are stable over a wide range of reference sample sizes. Under the
Gaussian null, the empirical sizes remain close to the nominal level across all
choices of \(N\). They range from \(0.044\) to \(0.062\) for \(d=2\), and from
\(0.032\) to \(0.068\) for \(d=5\), which is consistent with the Monte Carlo
fluctuation expected from \(R=500\) replications. Under the mean-shift and
correlation alternatives, the power is already close to one for moderate values
of \(N\), and increasing \(N\) further does not lead to a systematic improvement.

In contrast, the computational cost increases substantially with \(N\). For
example, under the Gaussian null with \(d=2\), the average runtime per
replication increases from \(2.12\) seconds at \(N=100\) to \(11.69\) seconds at
\(N=400\), and further to \(22.30\) seconds at \(N=800\). These findings suggest
a useful reference-sample economy of the proposed method: the number of
reference points is a computational tuning parameter, and moderate choices such
as \(N=100\) or \(N=200\) can provide stable empirical calibration and power
while substantially reducing computation time.

\begin{table}[!htbp]
	\centering
	\resizebox{0.7\textwidth}{!}{
	\begin{tabular}{ccccc}
		\toprule
		\(d\) & \(N\) & Gaussian null & Mean shift & Correlation \\
		\midrule
		2 & 50
		& \(0.044\;(1.6)\)
		& \(1.000\;(1.6)\)
		& \(0.980\;(1.6)\) \\
		2 & 100
		& \(0.060\;(2.1)\)
		& \(0.998\;(2.1)\)
		& \(0.984\;(2.1)\) \\
		2 & 200
		& \(0.062\;(5.0)\)
		& \(1.000\;(5.1)\)
		& \(0.988\;(5.1)\) \\
		2 & 400
		& \(0.058\;(11.7)\)
		& \(1.000\;(11.8)\)
		& \(0.992\;(11.8)\) \\
		2 & 600
		& \(0.050\;(16.9)\)
		& \(0.996\;(16.9)\)
		& \(0.986\;(16.9)\) \\
		2 & 800
		& \(0.050\;(22.3)\)
		& \(0.998\;(22.3)\)
		& \(0.978\;(22.3)\) \\
		\midrule
		5 & 50
		& \(0.032\;(0.7)\)
		& \(0.994\;(0.7)\)
		& \(1.000\;(0.7)\) \\
		5 & 100
		& \(0.054\;(1.0)\)
		& \(0.994\;(1.0)\)
		& \(1.000\;(1.0)\) \\
		5 & 200
		& \(0.042\;(2.5)\)
		& \(0.998\;(2.5)\)
		& \(1.000\;(2.5)\) \\
		5 & 400
		& \(0.048\;(10.4)\)
		& \(0.996\;(10.4)\)
		& \(1.000\;(10.4)\) \\
		5 & 600
		& \(0.056\;(15.0)\)
		& \(0.998\;(15.0)\)
		& \(1.000\;(15.1)\) \\
		5 & 800
		& \(0.068\;(20.1)\)
		& \(0.998\;(20.2)\)
		& \(1.000\;(20.2)\) \\
		\bottomrule
	\end{tabular}
	}
	\caption{
		Sensitivity of the EOT-map test to the number of reference points \(N\).
		Each entry reports the empirical rejection probability, with the average runtime
		per Monte Carlo replication shown in parentheses. 
	}
	\label{tab:n-reference-sensitivity}
\end{table}

\end{document}